\documentclass[journal,twoside,web]{ieeecolor}
\usepackage{generic}

\usepackage{amssymb,amsmath,latexsym,amsfonts,amsthm,mathtools}

\usepackage{cite,graphicx}
\usepackage{float,subcaption}
\usepackage{hyperref}
\hypersetup{hidelinks}
\usepackage{textcomp}
\newtheorem{theorem}{Theorem}
\newtheorem{lemma}{Lemma}
\newtheorem{corollary}{Corollary}
\newtheorem{proposition}[theorem]{Proposition}
\newtheorem{problem}{Problem}

\theoremstyle{remark}
\newtheorem{remark}{Remark}
\newtheorem{assumption}{Assumption}

\theoremstyle{definition}

\usepackage{anyfontsize}

\usepackage{newtxmath}
\usepackage{booktabs}
\allowdisplaybreaks
\definecolor{mygreen}{rgb}{0, 0.5, 0}
\usepackage{hyperref}
\hypersetup{colorlinks, breaklinks, citecolor=blue, linkcolor=mygreen, urlcolor=blue}
\usepackage[nameinlink,noabbrev,sort&compress]{cleveref}
\Crefname{problem}{Problem}{Problems}
\Crefname{assumption}{Assumption}{Assumptions}
\newcommand{\abs}[1]{\lvert #1 \rvert}

\usepackage{tikz}
\usetikzlibrary{arrows.meta,positioning,fit,calc,backgrounds}

\def\BibTeX{{\rm B\kern-.05em{\sc i\kern-.025em b}\kern-.08em
	T\kern-.1667em\lower.7ex\hbox{E}\kern-.125emX}}
\begin{document}
\title{Admissibility-Preserving Control for Strict-Feedback Nonlinear Systems with Asymmetric Actuator Constraints}
\author{Saurabh Kumar, Shashi Ranjan Kumar, \IEEEmembership{Senior Member, IEEE}, and Abhinav Sinha, \IEEEmembership{Senior Member, IEEE}
	\thanks{S. Kumar and S.R. Kumar are with the Intelligent Systems and Control (ISaC) Lab, Department of Aerospace Engineering, Indian Institute of Technology Bombay, Maharashtra 400076 India (e-mails: \{saurabh.k,srk\}@aero.iitb.ac.in). }
	\thanks{A. Sinha is with the Guidance, Autonomy, Learning, and Control for Intelligent Systems (GALACxIS) Lab, Department of Aerospace Engineering and Engineering Mechanics, University of Cincinnati, OH 45221, USA. (e-mail: abhinav.sinha@uc.edu).}}

\maketitle
\begin{abstract}
This paper develops \emph{Admissibility-Preserving Control} (APC), a realization-centered safety-critical control framework for strict-feedback systems subject to asymmetric actuator limits, time-varying output constraints, and actuator-rate limitations. APC denotes the overall control architecture, whereas an \emph{Admissibility-Preserving Input Realization} (APIR) denotes its constraint-realization module. Therein, the APIR dynamically generates the physical plant input while rendering its prescribed asymmetric actuator set forward invariant. In contrast to algebraic clipping and post-design saturation compensation, the actuator limits are embedded directly in a continuously differentiable dynamic realization with user-selectable regularity and interpretable tuning parameters. The APIR is integrated with recursive backstepping by treating the realized plant input as an additional state. The resulting design does not require an input-to-state stability assumption on the uncontrolled plant. Instead, the nonlinear drift terms are compensated recursively, subject to an explicit compatibility condition between the desired motion, the available control authority, and the APIR interior gain. The framework is further extended to time-varying output-safe tracking through a smooth asymmetric logarithmic barrier coordinate and its associated Lyapunov function and to simultaneous actuator-magnitude and rate constraints through a cascaded APIR. Rigorous Lyapunov and invariance analyses establish regional asymptotic tracking, forward invariance of the compatible admissible sets, and boundedness of all closed-loop signals. Numerical studies illustrate asymmetric actuator utilization, output-safety preservation, and magnitude-rate constraint enforcement.
\end{abstract}

\begin{IEEEkeywords}
	Admissibility-preserving control, admissibility-preserving input realization, asymmetric actuator constraints, strict-feedback nonlinear systems, safety-critical control.
\end{IEEEkeywords}
\section{Introduction}
\label{sec:introduction}
\IEEEPARstart{S}{afety}-critical control systems must achieve their regulation or tracking objectives without driving the physical system outside its admissible operating region. In such systems, actuator magnitude and rate limits, state envelopes, and output restrictions are integral to safe operation rather than secondary implementation details. Ignoring these limits can degrade closed-loop performance, invalidate nominal stability guarantees, or expose the plant and actuator hardware to unsafe operating conditions. This motivates control architectures in which constraint satisfaction is incorporated into the closed-loop design from the outset.

Bounded-input stabilization has a long history. For linear systems, global asymptotic stabilization under bounded control is characterized by asymptotic null controllability\footnotemark \footnotetext{A linear system ($\dot{\mathbf{x}} =  \mathbf{Ax} + \mathbf{Bu}$) is said to be asymptotically null controllable if and only if (i) no eigenvalues of $\mathbf{A}$ are in the right half of $s$-plane (ii) The pair ($\mathbf{A}, \mathbf{B}$) is stabilizable in the ordinary sense, that is, all the uncontrollable modes of the system have negative real parts.} \cite{203432}, while linear feedback cannot globally stabilize chains of three or more integrators under bounded inputs \cite{doi:10.1080/00207176908905846,261255}. Low-gain, high--low-gain, parametric Lyapunov/Riccati, and nested-saturation designs provide important global or semi-global solutions for several linear classes \cite{doi:10.1016/0167-6911(93)90033-3,doi:10.1002/rnc.4590050503,486638,4610041,doi:10.1016/j.jfranklin.2020.08.025,doi:10.1016/0167-6911(92)90001-9,362853}. Adaptive and auxiliary-system approaches have also been used to compensate for the mismatch between nominal and saturated inputs \cite{doi:10.1016/j.automatica.2005.02.009,728887,doi:10.1016/S0005-1098(00)00154-0,doi:10.1016/0005-1098(94)90202-X,333787,doi:10.1016/0005-1098(95)00059-6}. These developments establish a substantial foundation, but many are centered on linear dynamics, symmetric bounds, or compensation added after the nominal feedback law has been constructed.

For nonlinear systems, saturation-aware backstepping and adaptive designs have employed smooth approximations, Nussbaum-type mechanisms, neural approximators, disturbance observers, and auxiliary dynamics \cite{5723705,6975243}. The majority of these designs nevertheless adopt symmetric actuator limits. In practice, positive and negative control authority can differ markedly because of thrust/braking asymmetry, aerodynamic effectiveness, heating and cooling limits, or electromechanical actuation characteristics. Replacing asymmetric limits with a symmetric interval based on the smaller magnitude discards usable authority in the less restrictive direction. Although the practical significance of asymmetric saturation has been emphasized \cite{doi:10.1080/00207721.2021.1989726}, explicit nonlinear designs that treat unequal bounds directly remain comparatively limited \cite{6875955,7428909,6994268}.

Output and state constraints introduce an additional safety layer. Barrier Lyapunov functions provide a systematic mechanism for preserving constrained regions in strict-feedback systems \cite{doi:10.1016/j.automatica.2008.11.017}. Prescribed-performance and funnel-control methods instead organize the design around a tracking-error envelope and have also been extended to input-constrained settings \cite{4639441,doi:10.1051/cocv:2002064,berger2021funnel,10273607,10004950,https://doi.org/10.1002/rnc.4466,10.1016/j.sysconle.2004.05.014,10387582,10.1016/j.arcontrol.2025.101024}. These approaches are complementary to the present work: funnel and prescribed-performance designs primarily shape the tracking-error evolution, whereas the objective here is to make admissibility of the \emph{realized physical input} an intrinsic property of a dynamic feedback realization and then compose that realization with output-set enforcement.

Motivated by this distinction, we use the term \emph{Admissibility-Preserving Control} (APC) for a realization-centered control principle in which constrained physical variables are generated through dynamics whose admissible sets are invariant by construction, and the feedback law is synthesized around those realized variables. The corresponding dynamic module is termed an \emph{Admissibility-Preserving Input Realization} (APIR). Thus, APC denotes the control architecture, whereas APIR denotes the realization mechanism through which actuator admissibility is embedded into the closed-loop dynamics. In the present strict-feedback construction, the APIR generates the physical plant input through a nonlinear differential equation whose effective gain decreases near the prescribed actuator limits and whose vector field points inward at the boundaries. Static saturation already guarantees algebraic boundedness of the clipped signal. The proposed APIR addresses a different need by providing a continuously differentiable dynamic state, strict interior admissibility, a tunable boundary margin, and direct compatibility with recursive Lyapunov design.

The APIR is incorporated into backstepping by augmenting the realized plant input as an additional state. The nonlinear drift terms are compensated recursively, so the design does not require an input-to-state stability assumption on the uncontrolled plant. This removal of an open-loop ISS hypothesis does not eliminate the finite-authority limitation inherent to bounded control. Accordingly, the closed-loop results are stated over compatible Lyapunov sublevel sets for which the desired terminal virtual input is feasible and the APIR gain remains uniformly separated from zero. This formulation distinguishes the absence of an ISS assumption from the control-authority and reference-feasibility conditions that any hard-bounded controller must satisfy.

The framework is extended in two directions. First, a time-varying barrier Lyapunov function is combined with APC so that the output remains in a prescribed safe set while the realized input remains inside its asymmetric actuator set. Second, a two-layer APIR enforces simultaneous actuator-magnitude and actuator-rate constraints. A general cascade is subsequently analyzed at the realization level, where each internal layer and each layer-output rate are kept admissible. This provides a foundation for higher-order physical actuator constraints once the corresponding derivative mappings are verified. The main contributions are summarized as follows.
\begin{itemize}
	\item A continuously differentiable asymmetric APIR is developed whose admissible actuator interval is forward invariant under bounded commands. The analysis also establishes a compact invariant interior set, a positive lower bound on the APIR gain, the symmetric case as a special case, and the exact finite regularity of the asymmetric realization.
	\item The APIR retains the distinct lower and upper actuator limits rather than conservatively symmetrizing them. Its parameters admit an interpretable design: $p_{1}$ regulates the response time for constant-command operation, $p_{2}$ determines the equilibrium boundary margin through an exact algebraic relation, and $\gamma$ governs the gain-rolloff profile, differentiability order, and equilibrium utilization.
	\item An APC backstepping law is constructed by treating the realized plant input as an additional dynamic state. Unlike post-design compensation architectures, admissibility is embedded in the augmented feedback dynamics. The stability result avoids an open-loop ISS assumption while explicitly accounting for model knowledge, reference feasibility, and finite actuator authority.
	\item APC is combined with a time-varying barrier Lyapunov function to enforce simultaneous actuator and output admissibility. The result is formulated over a compatible full-state sublevel set, thereby accounting for the higher-relative-degree viability requirements that cannot be captured by output membership alone.
	\item A cascaded APIR is developed for simultaneous asymmetric magnitude and rate constraints. The general $r$-layer analysis proves admissibility of every realization state and its layer-output rate, while clarifying the additional verification needed before interpreting deeper layers as prescribed higher derivatives of the physical plant input.
	\item \emph{APC compatibility} is introduced as the feasibility principle that reconciles the selected closed-loop region and desired motion with finite actuator authority and a uniformly positive APIR interior gain. Together with Lyapunov and continuation arguments, this condition yields regional asymptotic tracking, forward invariance of the compatible input and output sets, nonsingularity of the commanded-input maps, and boundedness of all closed-loop signals.
\end{itemize}

\section{Problem Formulation}\label{sec:problem_formulation}
Consider the following class of nonlinear strict-feedback systems
\begin{subequations}\label{eq:system_dynamics}
	\begin{align}
		\dot{x}_{i} =&~f_{i}\left(\bar{x}_{i}\right) + g_{i}\left(\bar{x}_{i}\right) x_{i+1}, \quad i = 1, 2, \ldots, n-1, \label{eq:xi_dot}\\
		\dot{x}_{n} =&~f_{n}\left(\bar{x}_{n}\right) + g_{n}\left(\bar{x}_{n}\right) u , \label{eq:xn_dot}\\
		y =&~ x_{1}, \label{eq:y}
	\end{align}
\end{subequations}
where $\bar{x}_i=[x_1,\ldots,x_i]^\top\in\mathbb{R}^i$ denotes the state vector, $x=[x_1,\ldots,x_n]^\top\in\mathbb{R}^n$ is the system state, $y\in\mathbb{R}$ is the measured output, $u\in\mathbb{R}$ is the actuator output applied to the plant, and $f_i(\cdot),g_i(\cdot):\mathbb{R}^i\rightarrow\mathbb{R}$ are smooth nonlinear functions.

The actuator is driven by a controller-generated command $v\in\mathbb{R}$. Owing to physical limitations, the actuator cannot reproduce arbitrary commanded inputs and is required to operate within prescribed asymmetric magnitude limits. Consequently, the actual plant input $u$ is constrained to remain inside the admissible set $\mathbb{U} \coloneqq \{u\in\mathbb{R}\mid u_{\min}<u<u_{\max}\}$, where $u_{\min}<0$ and $u_{\max}>0$ denote the lower and upper actuator limits, respectively.

A widely adopted representation of asymmetric actuator saturation is the static nonlinear mapping
\begin{equation}\label{eq:sat_def}
	u=\operatorname{sat}(v)=
	\begin{cases}
		u_{\max}, & v>u_{\max},\\
		v, & u_{\min}\le v\le u_{\max},\\
		u_{\min}, & v<u_{\min},
	\end{cases}
\end{equation}
which clips the controller output whenever it exceeds the admissible range. The map in \eqref{eq:sat_def} guarantees algebraic boundedness of the realized input, but it is nondifferentiable at $v=u_{\min}$ and $v=u_{\max}$, may place the input directly on the constraint boundary, and does not provide a tunable dynamic state for recursive design. Consequently, standard smooth backstepping cannot be applied to the clipped map without introducing an additional constraint-handling mechanism. The objective of the proposed APIR is therefore not to replace the elementary boundedness property of static saturation, but to realize the physical input through differentiable dynamics with strict interiority and a quantifiable gain margin.

These observations motivate a dynamic actuator realization that is compatible with recursive nonlinear synthesis and can be analyzed independently of the plant. The realization is used as a modular safety component: the single layer enforces actuator magnitude admissibility, a second layer enforces a compatible rate constraint, and a barrier Lyapunov construction enforces output admissibility. The general cascade developed later guarantees the admissibility of its internal realization states and layer-output rates. Interpreting deeper layers as higher physical derivatives requires an additional derivative-mapping verification.

The following assumptions are standard in recursive backstepping design and are introduced to facilitate controller synthesis and stability analysis \cite{5723705,6975243,6875955,7428909,6994268}.
\begin{assumption} \label{assum:desired_traj}
	The desired trajectory $y_d(t)$ is sufficiently smooth such that $y_d(t)$ and its derivatives up to order $(n+1)$ are bounded for all $t\ge0$.
\end{assumption}
\begin{assumption}[Model regularity and control effectiveness]\label{assum:control_effectiveness}
	For each $i=1,\ldots,n$, the functions $f_i(\bar{x}_i)$ and $g_i(\bar{x}_i)$ are known, available for controller synthesis, and sufficiently differentiable for the recursive construction. The input gain $g_i(\bar{x}_i)$ has a known constant sign and satisfies $0<\underline{g}_i\leq |g_i(\bar{x}_i)|\leq \overline{g}_i<\infty$ for all $\bar{x}_i\in\mathbb{R}^i$, where $\underline{g}_i$ and $\overline{g}_i$ are known positive constants.
\end{assumption}
\begin{remark}
	The controller explicitly cancels $f_i$ and divides by $g_i$; therefore, the present model-based result requires knowledge of these functions, not only the sign of $g_i$. The uniform lower bound on $\lvert g_{i} \rvert$ prevents loss of control effectiveness in the recursive virtual-control channels. Uncertain and learned models are outside the scope of the present analysis and constitute natural extensions of APC.
\end{remark}
\begin{remark}
	No input-to-state stability property is imposed on the uncontrolled plant. The drift nonlinearities are handled directly by the virtual stabilizing functions. This should not be interpreted as unrestricted global stabilizability under finite actuation. The desired terminal virtual input and the commanded-input realization must remain compatible with the available actuator authority. The corresponding compatibility condition is stated explicitly after the commanded control law is introduced.
\end{remark}
\begin{problem}[Admissibility-preserving output tracking]\label{problem1}
	For the nonlinear system \eqref{eq:system_dynamics}, construct an APC law such that the output asymptotically tracks $y_d(t)$ and the realized plant input satisfies $u(t)\in\mathbb{U}$ for all $t\geq0$, for every initial condition in a compatible closed-loop sublevel set.
\end{problem}
In safety-critical applications, input admissibility alone may be insufficient because the output must also remain inside an operational envelope throughout the time. This motivates the following extension.
\begin{problem}[Admissibility-preserving constrained output tracking]\label{problem2}
	Extend the controller of \Cref{problem1} so that, in addition to asymptotic tracking and $u(t)\in\mathbb{U}$, the output satisfies
	\begin{equation*}
		y(t)\in\mathbb{Y}(t)\coloneqq\left\{y\in\mathbb{R}\mid \underline{y}(t)<y<\overline{y}(t)\right\},\quad t\geq0,
	\end{equation*}
	where $\underline{y}(t)$ and $\overline{y}(t)$ are continuous and satisfy $\underline{y}(t)<\overline{y}(t)$. The guarantee is required for compatible initial full states, not solely for an initially admissible output.
\end{problem}
The two-layer extension developed later enforces actuator magnitude and rate constraints. The general $r$-layer construction establishes admissibility of the realization states and the output rate of each layer. It is presented as a modular foundation for higher-order actuator models rather than as an automatic identification of every auxiliary layer with a prescribed derivative of $u$.

The subsequent development proceeds in a modular manner. We first construct an actuator realization that guarantees forward invariance of the admissible input set independently of the plant dynamics. This realization is then integrated with a recursive backstepping controller to solve \Cref{problem1}. Finally, the proposed framework is systematically extended to accommodate time-varying output constraints and simultaneous actuator magnitude-rate constraints while preserving the stability properties of the overall closed-loop system.
\section{The APC Framework}\label{sec:main_results}
This section develops APC through three reusable elements: \emph{APIR realizes admissibility}, \emph{APC compatibility certifies feasible use of finite authority}, and \emph{APC synthesis achieves the control objective around the admissible realization}. We first establish the intrinsic admissibility and regularity properties of the APIR. The realization is then incorporated into recursive backstepping under the APC compatibility condition, composed with a barrier Lyapunov construction for output safety, and extended through a two-layer APIR for simultaneous magnitude and rate constraints.
\subsection{Admissibility-Preserving Input Realization}
The APIR is analyzed first as an independent dynamic module. Its admissibility guarantee depends on the boundedness of its command but not on the nonlinear plant. This separation permits the controller analysis to focus on a self-consistent command bound and a positive interior APIR gain.

In this paper, the APIR principle is instantiated by the following smooth asymmetric APIR family:
\begin{align} \label{eq:single_APIR}
	\dot{u} =&~ p_1 \left[\rho  \left\{ 1 - \left(\dfrac{u}{u_{\max}}\right)^{\gamma} \right\}  + \left(1 - \rho \right)  \left\{ 1 - \left(\dfrac{u}{u_{\min}}\right)^{\gamma} \right\}  \right] u_{c} \nonumber\\
	&~-p_1p_2 u,
\end{align}
where $u_c$ denotes the commanded control input generated by the controller, while $u$ denotes the realized actuator output supplied to the plant. The positive constants $p_1,p_2\in\mathbb{R}_+$ are design parameters, $\gamma=2n$ with $n\in\mathbb{N}$ is an even integer determining the smoothness order and transition profile of the realization, and $u_{\min}<0<u_{\max}$ denote the prescribed lower and upper actuator limits, respectively. The binary variable $\rho$ is defined as $\rho =1$ for $u>0$ and $\rho=0$ otherwise. The asymmetric APIR family is well defined for $u_{\min}<0<u_{\max}$. More broadly, APIR refers to the admissibility-preserving realization principle, while \eqref{eq:single_APIR} denotes the particular APIR family used throughout this paper.

\Cref{fig:saturation_model_block} illustrates the proposed APIR, where the controller generates the commanded input $u_c$, which is subsequently processed by the realization dynamics to produce the admissible actuator input $u$ supplied to the nonlinear plant.
\begin{figure}[!ht]
	\centering
	\resizebox{\linewidth}{!}{%
		\definecolor{APCNavy}{RGB}{30,55,82}
		\definecolor{APCBlue}{RGB}{64,105,145}
		\definecolor{APCBlueFill}{RGB}{229,238,247}
		\definecolor{APCTeal}{RGB}{47,120,121}
		\definecolor{APCTealFill}{RGB}{226,242,240}
		\definecolor{APCSlate}{RGB}{92,105,117}
		\definecolor{APCGrayFill}{RGB}{239,242,244}
		\definecolor{APCGroupFill}{RGB}{247,249,251}
		\definecolor{APCFeedback}{RGB}{146,78,66}
		\begin{tikzpicture}[
			>=Stealth,
			signal/.style={
				-{Stealth[length=2.5mm,width=1.7mm]},
				line width=0.85pt,
				draw=APCNavy
			},
			feedback/.style={
				-{Stealth[length=2.5mm,width=1.7mm]},
				line width=0.85pt,
				draw=APCFeedback
			},
			synthblock/.style={
				draw=APCBlue,
				fill=APCBlueFill,
				line width=0.9pt,
				rounded corners=2.2pt,
				minimum width=34mm,
				minimum height=16mm,
				align=center,
				inner sep=4pt,
				text=APCNavy
			},
			apirblock/.style={
				draw=APCTeal,
				fill=APCTealFill,
				line width=0.9pt,
				rounded corners=2.2pt,
				minimum width=39mm,
				minimum height=16mm,
				align=center,
				inner sep=4pt,
				text=APCNavy
			},
			plantblock/.style={
				draw=APCSlate,
				fill=APCGrayFill,
				line width=0.9pt,
				rounded corners=2.2pt,
				minimum width=33mm,
				minimum height=16mm,
				align=center,
				inner sep=4pt,
				text=APCNavy
			},
			sum/.style={
				draw=APCNavy,
				fill=white,
				line width=0.9pt,
				circle,
				minimum size=7.5mm,
				inner sep=0pt,
				text=APCNavy
			},
			group/.style={
				draw=APCBlue,
				fill=APCGroupFill,
				line width=0.8pt,
				rounded corners=3pt,
				inner xsep=5.5mm,
				inner ysep=6.0mm
			},
			siglabel/.style={
				fill=none,
				inner sep=1.2pt,
				text=APCNavy
			}
			]
			
			\node[sum] (sum) {$\sum$};
			\node[synthblock, right=14mm of sum] (synth)
			{Backstepping/BLF\\command synthesis};
			
			\node[apirblock, right=12mm of synth] (apir)
			{{\bfseries Admissibility-Preserving}\\
				{\bfseries Input Realization (APIR)}\\[0.4mm]
				$\dot{u}=\mathcal{G}(u)u_{c}-\mathcal{F}(u)$};
			
			\node[plantblock, right=15mm of apir] (plant)
			{Strict-feedback\\nonlinear plant};
			
			\begin{scope}[on background layer]
				\node[group, fit=(synth)(apir)] (controller) {};
			\end{scope}
			
			\node[
			anchor=south,
			text=APCBlue,
			font=\bfseries,
			fill=none,
			inner xsep=3pt,
			inner ysep=1pt
			] at ($(controller.north)+(0,1.4mm)$)
			{Admissibility-Preserving Controller (APC)};
			
			\coordinate (refin) at ($(sum.west)+(-18mm,0)$);
			\draw[signal] (refin) -- node[siglabel, above=1.1mm] {$y_{d}$} (sum.west);
			\draw[signal] (sum.east) -- node[siglabel, above=1.1mm] {$e$} (synth.west);
			\draw[signal] (synth.east) -- node[siglabel, above=1.1mm] {$u_{c}$} (apir.west);
			\draw[signal] (apir.east) -- node[siglabel, above=1.1mm] {$u\in\mathcal{U}$} (plant.west);
			
			\coordinate (ybranch) at ($(plant.east)+(12mm,0)$);
			\coordinate (yout) at ($(ybranch)+(13mm,0)$);

			\draw[signal,-] (plant.east) -- node[siglabel, above=1.1mm] {$y$} (ybranch);
			\fill[APCNavy] (ybranch) circle (1.25pt);
			\draw[signal] (ybranch) -- (yout);

			\coordinate (fbRight) at ($(ybranch)+(0,-20mm)$);
			\coordinate (fbLeft)  at ($(sum.south)+(0,-16.5mm)$);
			
			\draw[feedback]
			(ybranch) |- (fbRight)
			-- node[
			midway,
			below=1.1mm,
			text=APCFeedback,
			fill=none,
			inner xsep=2pt,
			inner ysep=1pt
			] {state/output feedback}
			(fbLeft) -| (sum.south);
			
			\node[font=\scriptsize, text=APCNavy, anchor=south]
			at ($(sum.west)+(-1.8mm,1.2mm)$) {$+$};
			
			\node[font=\scriptsize, text=APCFeedback, anchor=east]
			at ($(sum.south)+(-1.1mm,-0.3mm)$) {$-$};
			
		\end{tikzpicture}%
	}
	\caption{Canonical APC architecture. The control-synthesis layer generates $u_c$, while the APIR dynamically realizes the physical plant input $u\in\mathbb{U}$.}
	\label{fig:saturation_model_block}
\end{figure}

Observe that the proposed APIR naturally accommodates both symmetric and asymmetric actuator constraints through appropriate choices of $u_{\min}$ and $u_{\max}$. In particular, selecting $\lvert u_{\min}\rvert=u_{\max}$ recovers the symmetric case, whereas unequal bounds directly yield an asymmetric actuator realization without altering the realization dynamics.
\begin{remark}\label{rem:continuity}
	Although \eqref{eq:single_APIR} uses the branch variable $\rho$, the two vector-field branches agree at $u=0$, where $\lim_{u\to0^-}\dot{u}=\lim_{u\to0^+}\dot{u}=p_1u_c$. For an even integer $\gamma\geq2$, the vector field is locally Lipschitz in $u$ and is at least $\mathcal{C}^{\gamma-1}$ at the origin. When $u_{\max}\neq\lvert u_{\min}\rvert$ and $u_c(t)\neq0$, it is generically not $\mathcal{C}^{\gamma}$ at that instant. Thus, \emph{smooth} in the sequel means continuously differentiable with a user-selected finite regularity sufficient for the recursive design; the symmetric case is $\mathcal{C}^{\infty}$.
\end{remark}
\begin{theorem}[Intrinsic admissibility and compact interiority of APIR]\label{thm:APIR_admissibility}
	Consider \eqref{eq:single_APIR} with a piecewise-continuous command satisfying $\lvert u_c(t) \rvert \leq\xi$ for all $t\geq0$. Let $u_\xi^+\in(0,u_{\max})$ and $u_\xi^-\in(u_{\min},0)$ denote the unique roots of
	\begin{align}
		\xi\left[1-\left(\frac{u_\xi^+}{u_{\max}}\right)^\gamma\right]-p_2u_\xi^+ =&~0,\label{eq:exact_upper_root}\\
		-\xi\left[1-\left(\frac{u_\xi^-}{u_{\min}}\right)^\gamma\right]-p_2u_\xi^- =&~0.\label{eq:exact_lower_root}
	\end{align}
	For every $u(0)\in\mathbb{U}$, define
	\begin{equation}\label{eq:compact_APIR_interval}
		\mathbb{U}_{\xi,0}\coloneqq\left[\min\{u(0),u_\xi^-\},\max\{u(0),u_\xi^+\}\right]\subset\mathbb{U}.
	\end{equation}
	Then the solution exists for all $t\geq0$ and satisfies $u(t)\in\mathbb{U}_{\xi,0}$. Consequently, $\mathbb{U}$ is forward invariant and
	\begin{equation}\label{eq:G_positive_lower_bound}
		\underline{\mathcal{G}}_{\xi,0}\coloneqq\min_{u\in\mathbb{U}_{\xi,0}}\mathcal{G}(u)>0.
	\end{equation}
\end{theorem}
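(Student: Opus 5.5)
The plan is to rewrite \eqref{eq:single_APIR} as $\dot u=\mathcal{G}(u)u_c-\mathcal{F}(u)$, with
\begin{equation*}
\mathcal{G}(u)\coloneqq p_1\Big[\rho\Big\{1-\Big(\tfrac{u}{u_{\max}}\Big)^{\gamma}\Big\}+(1-\rho)\Big\{1-\Big(\tfrac{u}{u_{\min}}\Big)^{\gamma}\Big\}\Big],\qquad \mathcal{F}(u)\coloneqq p_1p_2u,
\end{equation*}
and to proceed in four steps: (i) roots, (ii) local existence, (iii) a scalar comparison/barrier argument, (iv) compactness.

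\textbf{Step (i): the roots.} I will take $\xi>0$; the case $\xi=0$ is trivial, since then $u_c\equiv0$ and $u$ decays exponentially. Set $h^+(s)\coloneqq\xi[1-(s/u_{\max})^\gamma]-p_2s$ on $[0,u_{\max}]$. It is continuous and strictly decreasing, with $h^+(0)=\xi>0$ and $h^+(u_{\max})=-p_2u_{\max}<0$. Hence \eqref{eq:exact_upper_root} has a unique root $u_\xi^+\in(0,u_{\max})$, and $h^+(s)<0$ for $s\in(u_\xi^+,u_{\max}]$. Symmetrically, $h^-(s)\coloneqq-\xi[1-(s/u_{\min})^\gamma]-p_2s$ on $[u_{\min},0]$ is strictly decreasing, because $\gamma$ is even and $s/u_{\min}\in[0,1]$. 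It satisfies $h^-(0)=-\xi<0$ and $h^-(u_{\min})=-p_2u_{\min}>0$, so there is a unique root $u_\xi^-\in(u_{\min},0)$ with $h^-(s)>0$ for $s\in[u_{\min},u_\xi^-)$. Since $u(0)\in\mathbb{U}$, this already gives $\mathbb{U}_{\xi,0}\subset\mathbb{U}$.

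\textbf{Step (ii): local existence.} By \Cref{rem:continuity}, the two polynomial branches of $\mathcal{G}$ agree at $u=0$, so $(t,u)\mapsto\mathcal{G}(u)u_c(t)-\mathcal{F}(u)$ is locally Lipschitz in $u$, uniformly for $t$ on compact intervals because $|u_c|\le\xi$. It is also piecewise continuous in $t$. Carathéodory theory then yields a unique absolutely continuous maximal solution on $[0,T)$.

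\textbf{Step (iii): the invariance bound.} On $\mathbb{U}$ we have $\mathcal{G}(u)\ge0$. For $u\in[u_\xi^+,u_{\max})$ this gives
\begin{equation*}
\dot u\le\mathcal{G}(u)\xi-p_1p_2u=p_1h^+(u)\le0,
\end{equation*}
and for $u\in(u_{\min},u_\xi^-]$ it gives
\begin{equation*}
\dot u\ge-\mathcal{G}(u)\xi-p_1p_2u=p_1h^-(u)\ge0.
\end{equation*}
Let $M\coloneqq\max\{u(0),u_\xi^+\}<u_{\max}$. Suppose $u(t_1)>M$ for some $t_1<T$, and let $t_0\coloneqq\sup\{t<t_1: u(t)\le M\}$. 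By continuity, $u>M\ge u_\xi^+$ on $(t_0,t_1]$. Shrinking $t_1$ if needed, $u$ also stays below $u_{\max}$ there. Then $\dot u\le p_1h^+(u)<0$ almost everywhere on $(t_0,t_1]$. Integrating gives $u(t_1)<u(t_0)=M$, a contradiction. The same argument gives the lower bound $\min\{u(0),u_\xi^-\}$. Hence $u(t)\in\mathbb{U}_{\xi,0}$ on $[0,T)$. Since this set is compact, the solution cannot blow up, so $T=\infty$. Forward invariance of $\mathbb{U}$ follows because every $u(0)\in\mathbb{U}$ gives a trajectory confined to $\mathbb{U}_{\xi,0}\subset\mathbb{U}$.

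\textbf{Step (iv): the gain bound.} On the compact interval $\mathbb{U}_{\xi,0}\subset(u_{\min},u_{\max})$, $\mathcal{G}$ is continuous and strictly positive, since $|u/u_{\max}|<1$ and $|u/u_{\min}|<1$ there. The minimum in \eqref{eq:G_positive_lower_bound} is therefore attained and positive.

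\textbf{Main obstacle.} I expect the delicate point to be the comparison argument in Step (iii). Because $u_c$ is only piecewise continuous, the differential inequality holds only almost everywhere. The argument must also avoid presupposing that $u$ stays below $u_{\max}$ before this is proved. Both issues are handled by choosing the last exit time $t_0$ and using the strict sign of $h^+$ above $u_\xi^+$, together with the strict inequality $M<u_{\max}$.
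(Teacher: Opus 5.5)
Your proposal is correct and takes the same route as the paper's proof. Both use strict monotonicity of $\Phi_\pm$ (your $h^\pm$) to obtain the unique roots, the bounds $\dot u\le p_1h^+(u)\le 0$ above $u_\xi^+$ and $\dot u\ge p_1h^-(u)\ge 0$ below $u_\xi^-$, completeness from boundedness plus local Lipschitzness, and positivity of $\min\mathcal{G}$ by compactness. Your last-exit-time comparison (valid almost everywhere for the piecewise-continuous command) and your separate handling of $\xi=0$ make rigorous what the paper compresses into ``the vector field points into the compact interval at both endpoints.''
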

\begin{proof}
	For $u\in[0,u_{\max}]$, define $\Phi_+(u)=\xi[1-(u/u_{\max})^\gamma]-p_2u$. The function is strictly decreasing, with $\Phi_+(0)=\xi>0$ and $\Phi_+(u_{\max})=-p_2u_{\max}<0$, so \eqref{eq:exact_upper_root} has a unique solution. At any $u\geq u_\xi^+$,
	\begin{equation*}
		\dot{u}\leq p_1\Phi_+(u)\leq0.
	\end{equation*}
	Likewise, for $u\in[u_{\min},0]$, define $\Phi_-(u)=-\xi[1-(u/u_{\min})^\gamma]-p_2u$. It is strictly decreasing, satisfies $\Phi_-(u_{\min})=-p_2u_{\min}>0$ and $\Phi_-(0)=-\xi<0$, and therefore admits the unique root $u_\xi^-$. At any $u\leq u_\xi^-$,
	\begin{equation*}
		\dot{u}\geq p_1\Phi_-(u)\geq0.
	\end{equation*}
	Hence, the vector field points into the compact interval \eqref{eq:compact_APIR_interval} at both endpoints. Positive invariance follows, and boundedness of the state together with local Lipschitzness yield forward completeness. Since the compact interval lies strictly inside $\mathbb{U}$ and $\mathcal{G}(u)>0$ on $\mathbb{U}$, the minimum in \eqref{eq:G_positive_lower_bound} exists and is positive.
\end{proof}
The compact-interiority statement is essential for closed-loop synthesis as it converts strict membership in an open actuator set into a uniform lower bound on the gain that appears in the denominator of the commanded-input law.
\subsubsection{Parametric Sensitivity of APIR}\label{subsec:param_sensitivity}
The effect of the design parameters in the proposed APIR can be understood directly from its mathematical structure. To facilitate the analysis, define a function $\mathcal{S}(u)$ as
\begin{equation}
	\mathcal{S}(u)\coloneqq
	\rho\left(1-\left(\frac{u}{u_{\max}}\right)^{\gamma}\right)
	+
	\left(1-\rho\right)
	\left(1-\left(\frac{u}{u_{\min}}\right)^{\gamma}\right),
	\label{eq:S_def}
\end{equation}
which satisfies $\mathcal{S}(0)=1$, $\mathcal{S}(u_{\max})=\mathcal{S}(u_{\min})=0$. Using \eqref{eq:S_def},  the APIR dynamics in \eqref{eq:single_APIR} can be expressed compactly as
\begin{equation}\label{eq:factored_model}
	\dot{u}=p_{1}\left[\mathcal{S}(u)u_{c}-p_{2}u\right].
\end{equation}
It immediately follows from \eqref{eq:factored_model}  that the parameters $p_1$, $p_2$, and $\gamma$ affect the dynamics through three distinct mechanisms, which are discussed as follows.

\textit{Role of $p_{1}$:} For a constant command $u_c$, the change of time variable $\tau=p_1t$ removes $p_1$ from the autonomous APIR equation. Thus, under constant-command testing, $p_1$ changes the transient time scale without changing the equilibrium. For a general time-varying closed-loop command, the rescaled input is $u_c(\tau/p_1)$. Hence, $p_1$ should be interpreted as a bandwidth parameter rather than as an exact pure time scaling for arbitrary inputs.

\textit{Role of $p_{2}$:} For a constant command, the equilibrium $u^*$ is the unique solution of
\begin{equation}
	\mathcal{S}(u^*)u_c=p_2u^*.
	\label{eq:equilibrium_condition}
\end{equation}
This relation provides an exact gain-design rule. For a selected positive command $\xi$ and desired equilibrium $u_d^+\in(0,u_{\max})$, one may choose
\begin{equation}\label{eq:p2_positive_design}
	p_2=\frac{\xi\left[1-(u_d^+/u_{\max})^\gamma\right]}{u_d^+}.
\end{equation}
For a negative command $-\xi$ and desired equilibrium $u_d^-\in(u_{\min},0)$, the analogous choice is
\begin{equation}\label{eq:p2_negative_design}
	p_2=\frac{-\xi\left[1-(u_d^-/u_{\min})^\gamma\right]}{u_d^-}.
\end{equation}
Since one common $p_2$ acts on both asymmetric branches, it may be selected using the more restrictive of the two desired margins. Increasing $p_2$ moves both equilibria farther from the actuator limits, whereas decreasing $p_2$ increases steady-state utilization. The closed-form quantities previously obtained from comparison inequalities are conservative interior thresholds, not exact equilibria for general $\gamma$.

\textit{Role of $\gamma$:} The exponent $\gamma$ determines how rapidly the effective gain $\mathcal{S}(u)$ decreases near the actuator limits. Larger values preserve near-linear behavior over a wider portion of the admissible interval and concentrate the rolloff closer to the boundary. The exponent also determines the finite differentiability order at $u=0$. Since $\gamma$ appears in \eqref{eq:equilibrium_condition}, it additionally affects the equilibrium utilization for fixed $u_c$ and $p_2$. It is therefore a geometric, regularity, and steady-state design parameter. This regularity is characterized next.
\begin{proposition}
	\label{prop:smoothness_order}
	Let $u_{\max}\neq \lvert u_{\min}\rvert$. The right-hand side of
	\eqref{eq:single_APIR} is at least $\mathcal{C}^{\gamma-1}$ at $u=0$.
	All derivatives up to order $\gamma-1$ are continuous across $u=0$, and,
	whenever $u_c(t)\neq0$, the $\gamma^{\mathrm{th}}$ derivative exhibits the
	nonzero jump
	\begin{equation}
		\Delta_{\gamma}
		=
		p_1u_c\gamma!
		\left(
		\frac{1}{u_{\min}^{\gamma}}
		-
		\frac{1}{u_{\max}^{\gamma}}
		\right).
		\label{eq:smoothness_jump}
	\end{equation}
\end{proposition}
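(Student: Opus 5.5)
Regard the right-hand side of \eqref{eq:single_APIR} as a function of $u$ with $u_c$ frozen, say $F(u)=p_1[\mathcal{S}(u)u_c-p_2u]$. The plan is to compare the two polynomial branches that meet at $u=0$:
\begin{equation*}
F_+(u)=p_1\left[\left(1-\frac{u^\gamma}{u_{\max}^\gamma}\right)u_c-p_2u\right],\qquad F_-(u)=p_1\left[\left(1-\frac{u^\gamma}{u_{\min}^\gamma}\right)u_c-p_2u\right].
\end{equation*}
Here $F=F_+$ on $u>0$ and $F=F_-$ on $u\le0$. Each branch is a polynomial, hence $\mathcal{C}^\infty$ on its closed half-line, with one-sided derivatives at $0$ equal to the derivatives of the polynomial. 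Consequently $F\in\mathcal{C}^k$ at the origin if and only if $F_+^{(j)}(0)=F_-^{(j)}(0)$ for all $j\le k$. This is the standard gluing criterion: matching one-sided derivatives give a continuous $k$-th derivative. I would justify it briefly via the mean value theorem applied on each side.

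Next I would compute the difference
\begin{equation*}
F_+(u)-F_-(u)=p_1u_c\,u^\gamma\left(\frac{1}{u_{\min}^\gamma}-\frac{1}{u_{\max}^\gamma}\right).
\end{equation*}
The linear term $-p_2u$ and the constant term cancel. The difference is therefore a single monomial of degree $\gamma$, so its $j$-th derivative at $0$ vanishes for $j=0,\dots,\gamma-1$. This gives continuity of all derivatives up to order $\gamma-1$, i.e.\ $F\in\mathcal{C}^{\gamma-1}$ at the origin, which is consistent with \Cref{rem:continuity}.

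At order $\gamma$, differentiating the monomial gives
\begin{equation*}
F_+^{(\gamma)}(0)-F_-^{(\gamma)}(0)=p_1u_c\gamma!\left(\frac{1}{u_{\min}^{\gamma}}-\frac{1}{u_{\max}^{\gamma}}\right),
\end{equation*}
which is exactly \eqref{eq:smoothness_jump}. Since $\gamma$ is even, $u_{\min}^\gamma=\lvert u_{\min}\rvert^\gamma$. The bracket is therefore nonzero precisely when $\lvert u_{\min}\rvert\ne u_{\max}$, and $\Delta_\gamma\ne0$ whenever in addition $u_c(t)\neq0$.

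The only delicate point is the convention for $\rho$ at $u=0$, where $\rho=0$ assigns $u=0$ to the lower branch. Because the one-sided derivatives are limits taken from each side, this convention does not affect the conclusion. I would also note that time dependence through $u_c(t)$ enters only as a multiplicative parameter, so the regularity statement is pointwise in $t$ with respect to $u$.
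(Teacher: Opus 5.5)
Your proof is correct and follows essentially the same route as the paper. The paper differentiates each polynomial branch $k$ times and compares the one-sided limits at $u=0$, finding agreement for $k\le\gamma-1$ and the difference \eqref{eq:smoothness_jump} at $k=\gamma$; you reach the same result more compactly by noting that $F_+-F_-=p_1u_c\,u^\gamma\bigl(1/u_{\min}^{\gamma}-1/u_{\max}^{\gamma}\bigr)$ is a single degree-$\gamma$ monomial.
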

\begin{proof}
	Denote the right-hand side of \eqref{eq:single_APIR} by $h(u)$. For $u>0$ ($\rho=1$), $h(u)$ is given by
	\begin{equation}
		h_{+}(u)=p_1u_c\left[1-\left(\frac{u}{u_{\max}}\right)^{\gamma}\right]-p_1p_2u
		=p_1u_c-\frac{p_1u_c}{u_{\max}^\gamma}u^\gamma-p_1p_2u.
		\label{eq:h_plus}
	\end{equation}
	For $u\leq0$ ($\rho=0$), $h(u)$ is
	\begin{equation}
		h_{-}(u)=p_1u_c\left[1-\left(\frac{u}{u_{\min}}\right)^{\gamma}\right]-p_1p_2u
		=p_1u_c-\frac{p_1u_c}{u_{\min}^\gamma}u^\gamma-p_1p_2u.
		\label{eq:h_minus}
	\end{equation}
	Differentiating \eqref{eq:h_plus} $k$ times with respect to $u$, where $k=1,\ldots,\gamma-1$, yields
	\begin{equation}
		h_{+}^{(k)}(u)
		=-\frac{p_1u_c\gamma!}{(\gamma-k)!u_{\max}^\gamma}u^{\gamma-k}
		-p_1p_2\delta_{k,1},
		\label{eq:hplus_kth}
	\end{equation}
	where $\delta_{k,1}=1$ if $k=1$ and $\delta_{k,1}=0$ otherwise. Since $1\leq k\leq\gamma-1$ implies $\gamma-k\geq1$, the power $u^{\gamma-k}\to0$ as $u\to0^+$. Therefore
	\begin{equation}
		\lim_{u\to0^+}h_+^{(k)}(u)=-p_1p_2\delta_{k,1}.
		\label{eq:hplus_limit}
	\end{equation}
	By following a similar procedure, one may obtain
	\begin{equation}
		\lim_{u\to0^-}h_-^{(k)}(u)=-p_1p_2\delta_{k,1}.
		\label{eq:hminus_limit}
	\end{equation}
	One may notice from \eqref{eq:hplus_limit} and \eqref{eq:hminus_limit} that $ \lim_{u\to0^+}h_+^{(k)}(u)= \lim_{u\to0^-}h_-^{(k)}(u)$ holds true for every $k=1,\ldots,\gamma-1$, the function $h$ is $C^{\gamma-1}$ at $u=0$.
	
	Next, we calculate the $\gamma\textsuperscript{th}$ derivative, where $k=\gamma$. At $k=\gamma$, the factor $u^{\gamma-k}=u^0=1$ no longer vanishes. The one-sided limits are
	\begin{align}
		\lim_{u\to0^+}h_+^{(\gamma)}(u)
		&=-\frac{p_1u_c\gamma!}{u_{\max}^\gamma}-p_1p_2\delta_{\gamma,1},
		\label{eq:hplus_gamma}\\
		\lim_{u\to0^-}h_-^{(\gamma)}(u)
		&=-\frac{p_1u_c\gamma!}{u_{\min}^\gamma}-p_1p_2\delta_{\gamma,1}.
		\label{eq:hminus_gamma}
	\end{align}
	As $\gamma = 2n$, we have $\delta_{\gamma,1}=0$, so the $-p_1p_2\delta_{\gamma,1}$ terms in both expressions vanish. Their difference is
	\begin{equation*}
		\Delta_\gamma
		=\lim_{u\to0^+}h_+^{(\gamma)}(u)-\lim_{u\to0^-}h_-^{(\gamma)}(u)
		=p_1u_c\gamma!\left(\frac{1}{u_{\min}^{\gamma}}-\frac{1}{u_{\max}^{\gamma}}\right).
	\end{equation*}
	Since $\gamma$ is even, $u_{\min}^\gamma=\lvert u_{\min}  \rvert^\gamma>0$.
	If $u_{\max}\neq\lvert u_{\min}  \rvert$ then $1/u_{\min}^\gamma\neq1/u_{\max}^\gamma$,
	so, whenever $u_c(t)\neq0$, $\Delta_\gamma\neq0$ and $h\notin \mathcal{C}^\gamma$ at $u=0$. If $u_c(t)=0$ at an isolated instant, the jump vanishes at that instant; the uniform regularity guaranteed independently of the command is therefore $\mathcal{C}^{\gamma-1}$.
\end{proof}
\begin{remark}[(Symmetric case)] \label{rem:symmetric_smoothness}
	When $\lvert u_{\min}  \rvert=u_{\max}$, the jump \eqref{eq:smoothness_jump} vanishes identically, since $\gamma$ is even implies $u_{\min}^{\gamma}=\lvert u_{\min}  \rvert^{\gamma}=u_{\max}^{\gamma}$, so $1/u_{\min}^{\gamma}-1/u_{\max}^{\gamma}=0$.  This means that the two branches agree through orders $\gamma-1$ as well as $\gamma$. Moreover, the model is in fact $\mathcal{C}^{\infty}$ at $u=0$ in the symmetric case as all derivatives of $h_+(u)$ and $h_-(u)$ evaluated at $u=0$ match to every order, since both branches are polynomials in $u$ with identical coefficients when $\lvert u_{\min}  \rvert=u_{\max}$.
\end{remark}
\begin{remark}
	\Cref{prop:smoothness_order} characterizes the regularity of the asymmetric APIR family. In particular, $\gamma=2$, $\gamma=4$, and $\gamma=8$ guarantee at least $\mathcal{C}^1$, $\mathcal{C}^3$, and $\mathcal{C}^7$ continuity, respectively, independently of the command. Thus, $\gamma$ simultaneously controls the APIR gain-rolloff region and the differentiability order of the realization.
\end{remark}

\subsubsection{Numerical Sensitivity Study of APIR}
 To validate the analytical properties of the asymmetric APIR, we subject \eqref{eq:single_APIR} to the constant command $u_c(t)=15$, chosen to exceed the actuator limits $u_{\max}=10$ and $u_{\min}=-7$. This operating condition drives the realization toward its upper admissible boundary and directly probes the confining behavior established in \Cref{thm:APIR_admissibility}.

The effect of the parameter $p_{1}$ is illustrated in \Cref{fig:sat_SpeedMultiplier_p1} for $p_{1}\in\{2,5,15\}$ while all other parameters are kept fixed. Consistent with the time-scaling
argument discussed in \Cref{subsec:param_sensitivity}, all three trajectories converge to the same steady-state value while exhibiting different transient responses. Increasing $p_{1}$ produces a faster transient response (rise and settling time decrease), whereas decreasing $p_{1}$ yields a slower response. This behavior confirms that $p_{1}$ serves as an effective APIR bandwidth parameter without altering the constant-command equilibrium.

\Cref{fig:sat_Attenuation_p2} illustrates the effect of varying $p_{2}\in\{0.01,0.06,0.15\}$. For the constant positive command used in this study, the steady-state value is the unique positive root of \eqref{eq:equilibrium_condition}. Decreasing $p_2$ weakens the restoring action and moves this root closer to $u_{\max}$, whereas increasing $p_2$ enlarges the steady-state boundary margin. The plotted steady-state values are therefore obtained from the exact algebraic root rather than from the conservative comparison thresholds in \Cref{thm:APIR_admissibility}. Conversely, \eqref{eq:p2_positive_design} permits $p_2$ to be selected directly when a desired positive equilibrium utilization is specified.

\Cref{fig:sat_ShapeParameter_gamma} illustrates the response for $\gamma\in\{2,4,8\}$.  Larger values of $\gamma$ preserve near-linear behavior over a wider portion of the admissible control range, since the gain $\mathcal{S}(u)$ remains close to unity for a larger interval before collapsing toward zero near the boundary. Consequently, the actuator tracks the commanded input almost linearly until the saturation region is approached, at which point a rapid yet smooth transition to the asymptotic limit occurs.  As $\gamma$ increases, this transition becomes sharper, consistent with the smoothness analysis in \Cref{prop:smoothness_order}. The $\mathcal{C}^{\gamma-1}$ regularity improves with $\gamma$, but the transition zone simultaneously narrows, concentrating the nonlinear rolloff over a progressively smaller fraction of the operating range.
\begin{figure*}[!ht]
	\centering
	\begin{subfigure}[t]{0.33\linewidth}
		\centering
		\includegraphics[width=\linewidth]{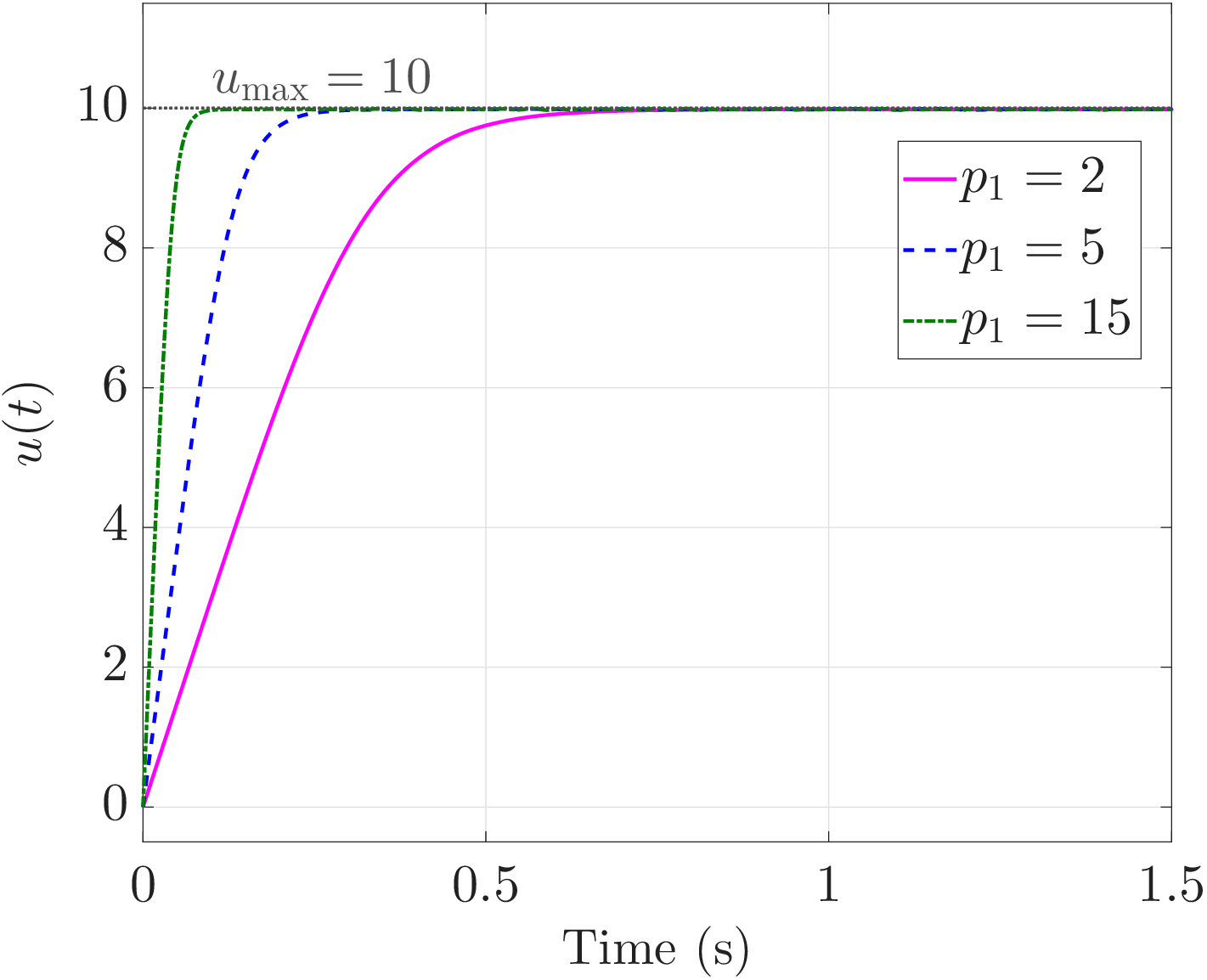}
		\caption{Varying $p_1\in\{2,5,15\}$ ($p_2=0.06$, $\gamma=2$).}
		\label{fig:sat_SpeedMultiplier_p1}
	\end{subfigure}%
	\begin{subfigure}[t]{0.33\linewidth}
		\centering
		\includegraphics[width=\linewidth]{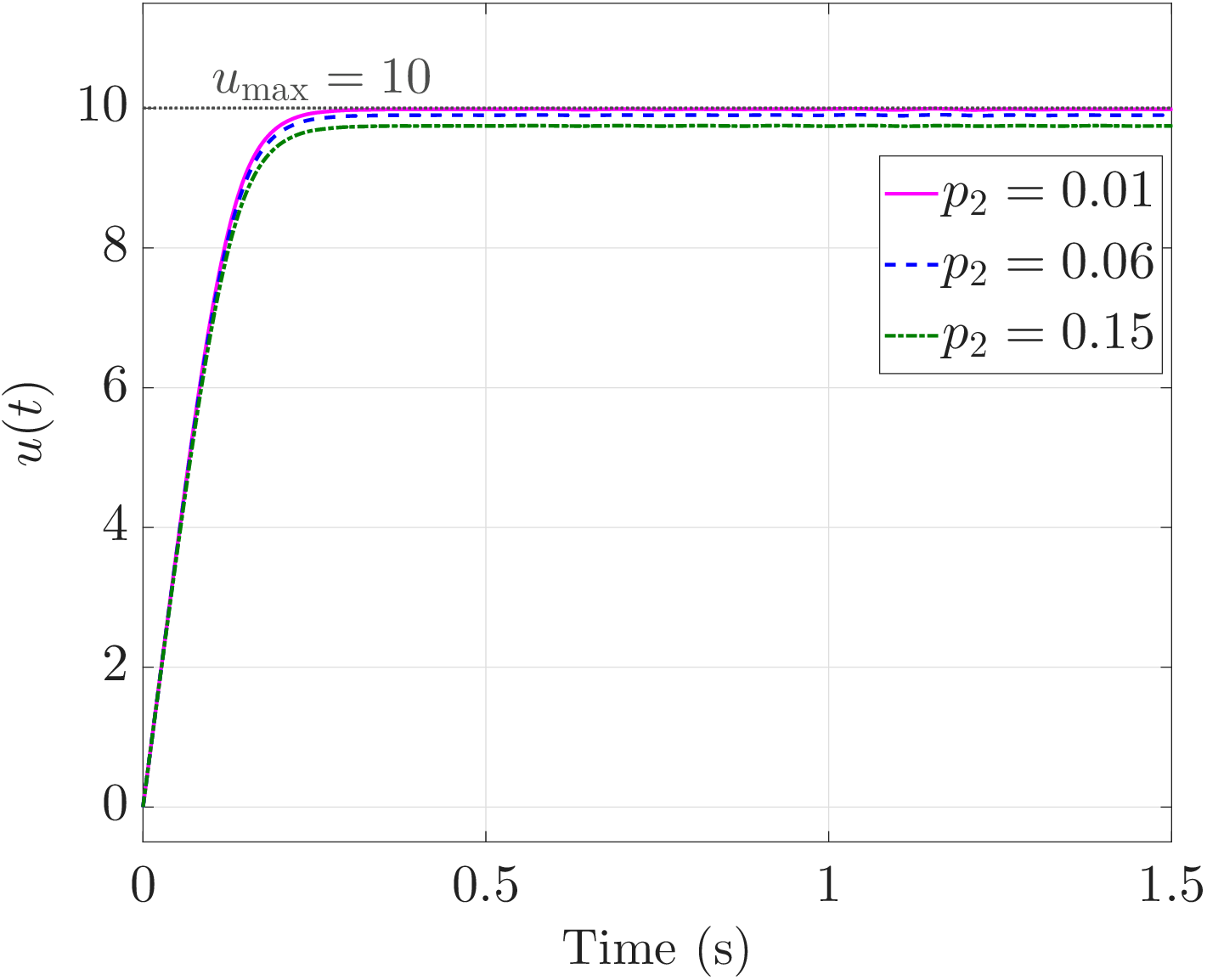}
		\caption{Varying $p_2\in\{0.01,0.06,0.15\}$ ($p_1=5$, $\gamma=2$).}
		\label{fig:sat_Attenuation_p2}
	\end{subfigure}%
	\begin{subfigure}[t]{0.33\linewidth}
		\centering
		\includegraphics[width=\linewidth]{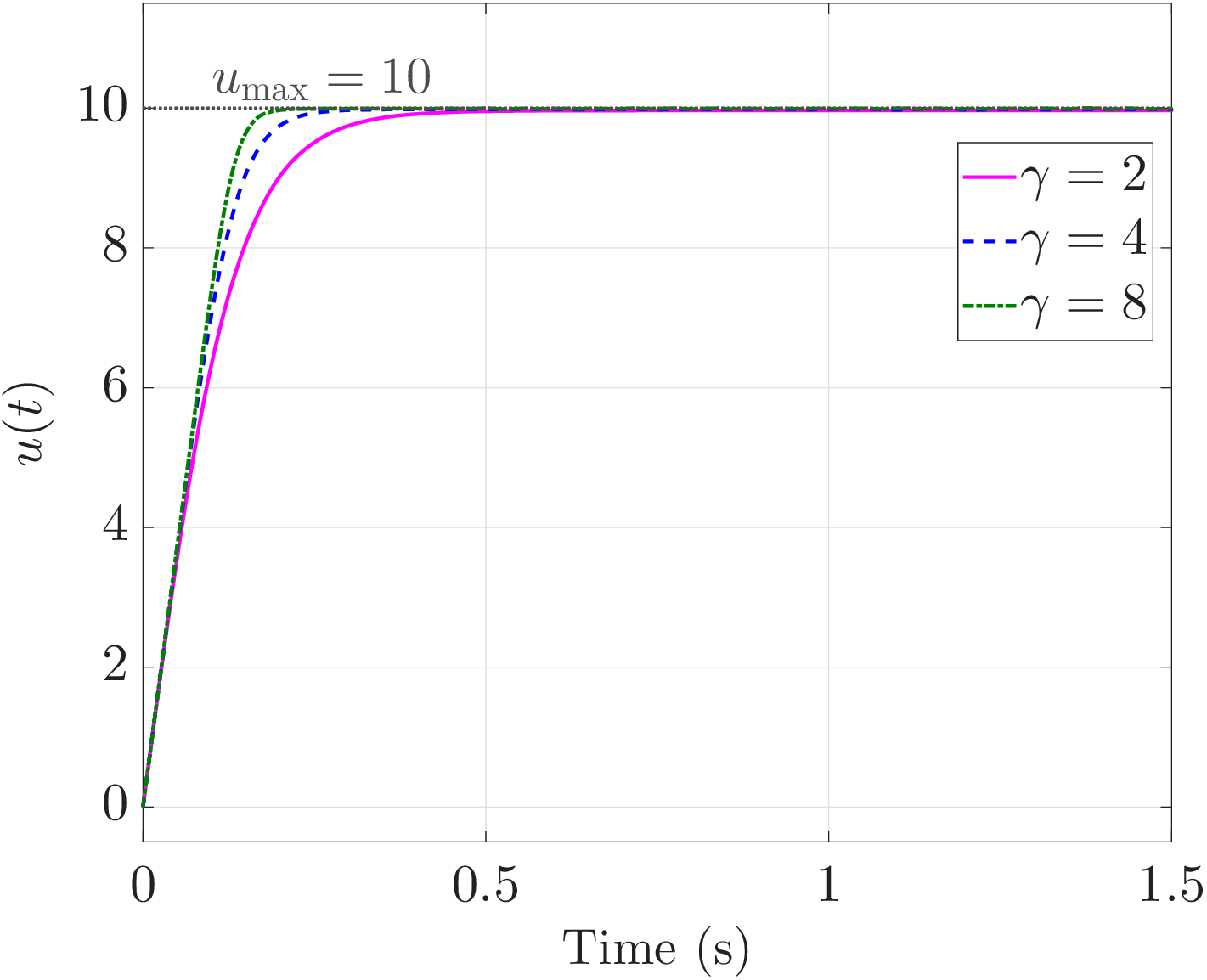}
		\caption{Varying $\gamma\in\{2,4,8\}$ ($p_1=5$, $p_2=0.06$).}
		\label{fig:sat_ShapeParameter_gamma}
	\end{subfigure}
	\caption{Parametric response of the asymmetric APIR~\eqref{eq:single_APIR} to the constant command $u_c=15$ with $u_{\max}=10$ and $u_{\min}=-7$.}
	\label{fig:actuator_response}
\end{figure*}
\begin{remark}[(Practical tuning guidelines)]
	A practical sequence is to select $\gamma$ from the desired regularity and rolloff profile, choose $p_2$ from \eqref{eq:p2_positive_design} or \eqref{eq:p2_negative_design} using the more restrictive asymmetric margin, and then select $p_1$ to obtain the desired constant-command response speed. As $\gamma$ also affects equilibrium utilization and a common $p_2$ couples the two asymmetric branches, the parameters are interpretable but not completely decoupled.
\end{remark}

\subsection{APIR-Augmented APC for Output Tracking}\label{subsec:output_tracking_bounded_input}
We now interconnect the APIR with the nonlinear plant to solve \Cref{problem1}. The augmented system is
\begin{subequations}\label{eq:system_dynamics_aug_global}
	\begin{align}
		\dot{x}_{i} =&~f_{i}\left(\bar{x}_{i}\right) + g_{i}\left(\bar{x}_{i}\right) x_{i+1}, \quad i = 1,2,\ldots,n-1 \label{eq:xi_dotaug_global}\\
		\dot{x}_{n} =&~f_{n}\left(\bar{x}_{n}\right) + g_{n}\left(\bar{x}_{n}\right) u , \label{eq:xn_dotaug_global}\\
		\dot{u} =&~\mathcal{G}(u)u_c-\mathcal{F}(u),\label{eq:u_dotaug_global}
	\end{align}
\end{subequations}
where $\mathcal{G}(u)\coloneqq p_1[\rho\{1-(u/u_{\max})^\gamma\} +(1-\rho)\{1-(u/u_{\min})^\gamma\}]$ and $\mathcal{F}(u)\coloneqq p_1 p_2 u$.
The APIR is interpreted as a controller-side dynamic realization whose output $u$ is supplied to the plant. The analysis therefore assumes that the lower-level actuator tracks this realized signal sufficiently accurately; unmodeled actuator dynamics or tracking error must be included explicitly before the mathematical input-safety guarantee can be transferred to hardware. Closed-loop admissibility additionally requires a self-consistent bound on $u_c$, because $\mathcal{G}(u)$ appears in the denominator of the command law.

The proposed APIR renders a continuously differentiable realized-input state that can be incorporated directly into recursive synthesis. We therefore use \emph{backstepping} to construct the APC law around this realized input. As in the usual output tracking problem with \emph{backstepping}, the following change of coordinates is made: $\varphi_{1} \coloneqq y - y_{d}$, $\varphi_{i} = x_{i} - \eta_{i-1}$ for $i=2,3,\cdots,n$, where $\eta_{i-1}$ is the stabilizing function for the $i\textsuperscript{th}$ step that needs to be designed, and $\varrho \coloneqq u - \eta_{n}$, where $\varrho$ is the actuator realization error introduced to handle the input saturation and $\eta_{n}$ is the stabilizing function for $n\textsuperscript{th}$ step. The additional state $\varrho$ naturally couples the recursive backstepping design with the admissibility-preserving actuator realization developed in the previous subsection. For brevity, only the first recursive step, the generic intermediate step, and the final two steps are presented, since the remaining derivations follow directly from the standard recursive backstepping procedure.

\textit{Step 1}: Define output tracking error as $\varphi_{1} \coloneqq y - y_{d}$. One can obtain the time derivative of $\varphi_1$ using \eqref{eq:xi_dot} as
\begin{equation} \label{eq:varphi1_dot}
	\dot{\varphi}_{1} = \dot{y}- \dot{y}_{d}= \dot{x}_{1}- \dot{y}_{d}= f_{1}(x_1) + g_1(x_{1}) x _{2}- \dot{y}_{d}.
\end{equation}
By viewing $x_{2}$ as a virtual control input, let us consider a Lyapunov function candidate $\mathcal{V}_{1}$ as $\mathcal{V}_{1} = \frac{1}{2}\varphi_{1}^2$, whose time derivative along the state trajectories results in $\dot{\mathcal{V}}_{1}=\varphi_{1}\dot{\varphi}_{1}$, which by substituting the value of $\dot{\varphi}_{1}$ from \eqref{eq:varphi1_dot}, becomes
\begin{equation} \label{eq:v1_dot_1}
	\dot{\mathcal{V}}_{1} = \varphi_{1}\left[ f_{1}(x_1) + g_1(x_{1}) x _{2}- \dot{y}_{d} \right].
\end{equation}
Now, define $\varphi_{2} \coloneqq x_{2} - \eta_{1}$, where $\eta_{1}$ is the stabilizing function. Using the relation $ x_{2} = \varphi_{2} +\eta_{1}$, the expression in \eqref{eq:v1_dot_1} becomes
\begin{equation} \label{eq:v1_dot_2}
	\dot{\mathcal{V}}_{1} = \varphi_{1}\left[ f_{1}(x_1) + g_1(x_{1}) \left(\varphi_{2} +\eta_{1} \right)- \dot{y}_{d} \right].
\end{equation}
If we design the stabilizing function $\eta_{1}$ as
\begin{equation} \label{eq:eta_1}
	\eta_{1} = \dfrac{1}{g_1(x_{1})}\left[ \dot{y}_{d} - f_{1}(x_1) - k_{1} \varphi_{1}\right],
\end{equation}
then the derivative of the Lyapunov function candidate, $\mathcal{V}_{1}$, becomes 
\begin{equation} \label{eq:v1_dot_final}
	\dot{\mathcal{V}}_{1} = -k_{1} \varphi_{1}^{2} +  g_1(x_{1}) \varphi_{1} \varphi_{2} .
\end{equation}

\textit{Step i ($2 \leq i \leq n-1$)}: In the $i\textsuperscript{th}$ step, we define the error variable as $\varphi_{i} = x_{i} - \eta_{i - 1}\, \forall \,i=2,\cdots,n-1$, which on differentiating with respect to time and using \eqref{eq:xi_dot} yields,
\begin{equation} \label{eq:varphi_i_dot}
	\dot{\varphi}_{i} = \dot{x}_{i} - \dot{\eta}_{i - 1} = f_{i}\left(\bar{x}_{i}\right) + g_{i}\left(\bar{x}_{i}\right) x_{i+1} - \dot{\eta}_{i - 1}.
\end{equation}
By considering $x_{i + 1}$ a virtual input to the $i\textsuperscript{th}$ subsystem, we choose a Lyapunov function candidate as $\mathcal{V}_{i} = \mathcal{V}_{i-1} + \frac{1}{2}\varphi_{i}^2$. By differentiating $\mathcal{V}_{i}$ with respect to time along the state trajectories, one has
\begin{equation*}
	\dot{\mathcal{V}}_{i} = \dot{\mathcal{V}}_{i-1} + \varphi_{i} \dot{\varphi}_{i},
\end{equation*}
which on using \eqref{eq:varphi_i_dot}, yields
\begin{equation} \label{eq:v_i_dot_1}
	\dot{\mathcal{V}}_{i} = \dot{\mathcal{V}}_{i-1} + \varphi_{i} \left[ f_{i}\left(\bar{x}_{i}\right) + g_{i}\left(\bar{x}_{i}\right) x_{i+1} - \dot{\eta}_{i - 1} \right],   
\end{equation}
Now define $\varphi_{i+1}\coloneqq x_{i+1} - \eta_{i}$, where $\eta_{i}$ is a stabilizing function. By substituting for $x_{i+1}$ into \eqref{eq:v_i_dot_1}, one may obtain
\begin{align} 
	\dot{\mathcal{V}}_{i} =&~ \varphi_{i} \left[ f_{i}\left(\bar{x}_{i}\right) + g_{i}\left(\bar{x}_{i}\right)\left( \varphi_{i+1} +\eta_{i} \right) - \dot{\eta}_{i - 1} \right]   \nonumber\\
	&~-\sum_{j=1}^{i-1} k_{j} \varphi_{j}^2+ g_{i-1} \left(\bar{x}_{i-1}\right) \varphi_{i-1}\varphi_{i}.\label{eq:v_i_dot_2} 
\end{align}
The stabilizing function $\eta_{i}$ is designed as
\begin{equation}
	\eta_i\coloneqq
	\frac{\dot{\eta}_{i-1}-f_i(\bar{x}_i)
		-g_{i-1}(\bar{x}_{i-1})\varphi_{i-1}-k_i\varphi_i}
	{g_i(\bar{x}_i)},\quad k_i>0,
	\label{eq:eta_i}
\end{equation}
which cancels $(g_{i-1}\varphi_{i-1}\varphi_i)$ from $\dot{\mathcal{V}}_{i-1}$ and adds damping $-k_i\varphi_i^2$, yielding
\begin{equation} \label{eq:v_i_dot_final}
	\dot{\mathcal{V}}_{i} = -\sum_{j=1}^{i} k_{j} \varphi_{j}^2 + g_{i}\left(\bar{x}_{i}\right)  \varphi_{i} \varphi_{i+1}.
\end{equation}

\textit{Step n}: In this step, the error variable is defined as $\varphi_{n} \coloneqq x_{n} - \eta_{n-1}$, where $\eta_{n-1}$ will be obtained from the $(n-1)\textsuperscript{th}$ step. Consider a Lyapunov function candidate as $\mathcal{V}_{n} = \mathcal{V}_{n-1} +\frac{1}{2}\varphi_{n}^2$, whose time derivative is obtained as $\dot{\mathcal{V}}_{n} = \dot{\mathcal{V}}_{n-1} +\varphi_{n} \dot{\varphi_{n}}$. Using \eqref{eq:v_i_dot_final}, one may write $\dot{\mathcal{V}}_{n}$ as
\begin{equation} \label{eq:v_n_dot_1}
	\dot{\mathcal{V}}_{n} = -\sum_{j=1}^{n-1} k_{j} \varphi_{j}^2 + g_{n-1}\left(\bar{x}_{n-1}\right) \varphi_{n-1}\varphi_{n} + \varphi_{n} \dot{\varphi}_{n}.
\end{equation}
The derivative of $\varphi_{n}$ can be obtained using \eqref{eq:xn_dot} as
\begin{equation} \label{eq:varphi_n_dot_1}
	\dot{\varphi_{n}} = \dot{x}_{n} - \dot{\eta}_{n-1} = f_{n}\left(\bar{x}_{n}\right) + g_{n}\left(\bar{x}_{n}\right) u - \dot{\eta}_{n-1}.
\end{equation}
Unlike conventional backstepping, the plant input is not synthesized directly. Instead, define the actuator realization error $\varrho\coloneqq u-\eta_n$. Using $u=\varrho+\eta_n$ and choosing
\begin{equation}
	\eta_n=
	\frac{\dot{\eta}_{n-1}-f_n(\bar{x}_n)-g_{n-1}(\bar{x}_{n-1})\varphi_{n-1}-k_n\varphi_n}
	{g_n(\bar{x}_n)},\quad k_n>0,
	\label{eq:eta_n}
\end{equation}
gives
\begin{equation}\label{eq:v_n_dot_final}
	\dot{\mathcal{V}}_n=-\sum_{j=1}^{n}k_j\varphi_j^2+g_n(\bar{x}_n)\varphi_n\varrho.
\end{equation}
\textit{Step n+1}: In this step, we focus on synthesizing the commanded input $u_c$ so that $\varrho$ also converges asymptotically to zero while preserving actuator admissibility.  Since we have $\varrho = u - \eta_{n}$, its time derivative can be obtained as $\dot{\varrho} = \dot{u} - \dot{\eta}_{n}$, which on substituting for $\dot{u}$ from \eqref{eq:single_APIR} results in 
\begin{align}
	\dot{\varrho} =&~ \mathcal{G}(u)u_c-\mathcal{F}(u) - \dot{\eta}_{n}. \label{eq:dot_varrho}
\end{align}
Consider a Lyapunov function candidate as $\mathcal{V}_{n+1} = \mathcal{V}_{n} +\frac{1}{2}\varrho^2$, whose time derivative along the state trajectory is given by $\dot{\mathcal{V}}_{n+1} = \dot{\mathcal{V}}_{n} + \varrho \dot{\varrho}$, which on using the relation in \eqref{eq:dot_varrho} and \eqref{eq:v_n_dot_final} yields
\begin{align}
	\dot{\mathcal{V}}_{n+1} &= -\sum_{j=1}^{n} k_{j} \varphi_{j}^2 + \varphi_{n}g_{n}\left(\bar{x}_{n}\right) \varrho + \varrho \left( \dot{u} - \dot{\eta}_{n} \right), \nonumber \\
	&=-\sum_{j=1}^{n} k_{j} \varphi_{j}^2 + \varphi_{n}g_{n}\left(\bar{x}_{n}\right) \varrho + \varrho \left( \mathcal{G}(u)u_c-\mathcal{F}(u) - \dot{\eta}_{n} \right). \label{eq:v_n+1_dot_1}
\end{align}
Now, we design the commanded control input $u_{c}$ as
\begin{equation} \label{eq:u_c}
	u_c=\frac{\mathcal{F}(u)+\dot{\eta}_n -g_n(\bar{x}_n)\varphi_n-k_{n+1}\varrho}{\mathcal{G}(u)},\quad k_{n+1}>0.
\end{equation}
With the commanded control $u_{c}$ given in \eqref{eq:u_c}, the derivative of Lyapunov function candidate given in \eqref{eq:v_n+1_dot_1} becomes
\begin{align*}
	\dot{\mathcal{V}}_{n+1} =&~-\sum_{j=1}^{n} k_{j} \varphi_{j}^2 + \varphi_{n}g_{n}\left(\bar{x}_{n}\right) \varrho + \varrho \left[ -\mathcal{F}(u) - \dot{\eta}_{n}   \right.\\
	&~\left.+  \frac{\mathcal{G}(u) \left (\mathcal{F}(u)+\dot{\eta}_n -g_n(\bar{x}_n)\varphi_n-k_{n+1}\varrho \right)}{\mathcal{G}(u)}\right]
\end{align*}
which, after some algebraic simplifications, results in
\begin{equation} \label{eq:v_n+1_dot_final}
	\dot{\mathcal{V}}_{n+1} = -\sum_{j=1}^{n} k_{j} \varphi_{j}^2 - k_{n+1}\varrho^2,
\end{equation}
which is negative definite on every domain on which the commanded-input map is well defined.

To close the apparent dependence between APIR admissibility and boundedness of the command, define
\begin{equation}\label{eq:command_numerator}
	\mathcal{N}(\mathbf{X},u,t)\coloneqq\mathcal{F}(u)+\dot{\eta}_n-g_n(\bar{x}_n)\varphi_n-k_{n+1}\varrho,
\end{equation}
where $\mathbf{X}=[\varphi_1,\ldots,\varphi_n,\varrho]^\top$. For $c>0$, let
$\Omega_c\coloneqq\{\mathbf{X}\mid\mathcal{V}_{n+1}(\mathbf{X})\leq c\}$.
We call the level $c$ \emph{APC-compatible} when the desired closed-loop motion and the commanded-input realization can be supported by a compact interior portion of the available actuator authority. Formally, $\eta_n^*(t)\coloneqq\eta_n \rvert_{\mathbf{X}=\mathbf{0}}$ must remain in a compact interior subset of $\mathbb{U}$ and there must exist $\xi>0$ such that
\begin{equation}\label{eq:self_consistency}
	Q_c(\xi)\coloneqq
	\sup_{\substack{\mathbf{X}\in\Omega_c,\,u\in\mathbb{U}_{\xi,0}\\t\geq0}}
	\frac{\lvert \mathcal{N}(\mathbf{X},u,t) \rvert}{\mathcal{G}(u)}
	\leq \xi.
\end{equation}
The ratio is well defined because \Cref{thm:APIR_admissibility} gives
$\mathcal{G}(u)\geq\underline{\mathcal{G}}_{\xi,0}>0$ on
$\mathbb{U}_{\xi,0}$. A more conservative but sometimes easier sufficient test is
$M_c(\xi)\leq\xi\underline{\mathcal{G}}_{\xi,0}$, where
$M_c(\xi)$ is the supremum of $\lvert \mathcal{N}\rvert$ on the same set.
We refer to this self-consistency property as \emph{APC compatibility}. It is the feasibility principle of the APC architecture: \eqref{eq:self_consistency} couples the selected initial-error region and reference demand to finite actuator authority and APIR conditioning.
\begin{lemma}[Bounded recursive quantities on compatible sublevels]\label{lem:recursive_boundedness}
	Under \Cref{assum:desired_traj,assum:control_effectiveness}, all virtual controls and the time derivatives required by the recursion are uniformly bounded on every compact APC-compatible sublevel.
\end{lemma}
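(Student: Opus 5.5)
The plan is an induction along the recursion that ties each virtual control to a continuous function of bounded arguments. On a compact sublevel $\Omega_c$ the error vector $\mathbf{X}=[\varphi_1,\ldots,\varphi_n,\varrho]^\top$ satisfies $\lvert\varphi_j\rvert\le\sqrt{2c}$ and $\lvert\varrho\rvert\le\sqrt{2c}$. By \Cref{assum:desired_traj}, the reference vector $\bar{y}_d^{(k)}\coloneqq[y_d,\dot y_d,\ldots,y_d^{(k)}]$ lies in a compact set $\mathcal{Y}\subset\mathbb{R}^{k+1}$ for all $t\ge0$ whenever $k\le n+1$.

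The key intermediate claim is the following. For each $i=1,\ldots,n$ there exist continuous maps $\Xi_i$ and $\Theta_i$ such that
\[
\bar{x}_i=\Xi_i(\varphi_1,\ldots,\varphi_i,\bar{y}_d^{(i-1)}),\qquad \eta_i=\Theta_i(\varphi_1,\ldots,\varphi_i,\bar{y}_d^{(i)}).
\]
I would prove this by induction.
\begin{itemize}
\item \emph{Base case.} We have $x_1=\varphi_1+y_d$. Then \eqref{eq:eta_1} expresses $\eta_1$ through $f_1$, $g_1$, $\dot y_d$ and $\varphi_1$. It is continuous because $f_1,g_1$ are smooth and $\lvert g_1\rvert\ge\underline{g}_1>0$ by \Cref{assum:control_effectiveness}.
\item \emph{Inductive step.} Write $x_{i+1}=\varphi_{i+1}+\eta_i$. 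For the derivative, compute $\dot\eta_{i-1}$ by the chain rule as
\[
\dot\eta_{i-1}=\sum_{j\le i-1}\frac{\partial\eta_{i-1}}{\partial x_j}\bigl(f_j+g_jx_{j+1}\bigr)+\sum_{k\le i-1}\frac{\partial\eta_{i-1}}{\partial y_d^{(k)}}\,y_d^{(k+1)}.
\]
This is again a continuous function of $(\bar{x}_i,\bar{y}_d^{(i)})$, and the states are in turn continuous functions of $(\varphi_1,\ldots,\varphi_i,\bar{y}_d^{(i-1)})$. Substituting into \eqref{eq:eta_i} and dividing by $g_i$ preserves continuity, since $\lvert g_i\rvert\ge\underline{g}_i$.
\end{itemize}
The recursion uses $f_j,g_j$ differentiated up to order $n-j+1$, which is what ``sufficiently differentiable'' in \Cref{assum:control_effectiveness} must supply.

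Next I handle $\dot\eta_n$, which is the quantity appearing in $\mathcal{N}$. It involves $\dot x_n=f_n+g_n u$, so I write $u=\varrho+\eta_n$ to express it in the error coordinates. Then $\dot\eta_n$ is a continuous function of $(\mathbf{X},\bar{y}_d^{(n+1)})$. This is exactly where the $(n+1)$-th derivative of $y_d$ in \Cref{assum:desired_traj} is needed.

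Each of $\Theta_i$, $\Xi_i$, and the map for $\dot\eta_n$ is continuous on the compact set $\Omega_c\times\mathcal{Y}$, so it attains a finite maximum there. This yields uniform-in-$t$ bounds on all $x_i$, $\eta_i$ ($i\le n$), $\dot\eta_{i}$ ($i\le n$), and hence on $u=\varrho+\eta_n$ and $\mathcal{N}$. For the bound on $\mathcal{N}$, one additionally uses $u\in\mathbb{U}_{\xi,0}$, which is bounded by \Cref{thm:APIR_admissibility}.

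I expect the main obstacle to be keeping the bounds uniform in $t$ rather than merely finite along a single trajectory. The plan handles this by treating $t$ only through the bounded reference vector, so the supremum over $t\ge0$ is dominated by a maximum over the compact set $\Omega_c\times\mathcal{Y}$. A secondary subtlety is that $\dot\eta_n$ is not a closed-loop quantity until $u$ is expressed through $\varrho$. I would make this explicit, because otherwise boundedness of $\dot\eta_n$ would appear to rely on the APIR bound circularly.
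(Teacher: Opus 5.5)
Your proposal is correct and follows the same route as the paper: an induction along the recursion. At each step $x_i=\varphi_i+\eta_{i-1}$ is bounded, and smoothness of $f_i,g_i$ together with $\lvert g_i\rvert\ge\underline{g}_i$ bounds $\eta_i$ and its required derivatives, with the induction ending at $\eta_n$ and the numerator $\mathcal{N}$. Your continuous-map representation on the compact set $\Omega_c\times\mathcal{Y}$, and your treatment of $\dot\eta_n$ through $u=\varrho+\eta_n$ and $y_d^{(n+1)}$, make explicit the uniformity in $t$ that the paper's terse proof leaves implicit.
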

\begin{proof}
	The result follows by induction. On a compact sublevel, $\varphi_1$ and the bounded reference determine a compact set for $x_1$, so $\eta_1$ and its required derivatives are bounded by the assumed smoothness of $f_1$, $g_1$, and $y_d$. If the assertion holds through Step $i-1$, then $x_i=\varphi_i+\eta_{i-1}$ is bounded. Smoothness of $f_i$ and $g_i$, together with the uniform lower bound on $\lvert g_{i} \rvert$, implies boundedness of $\eta_i$ and its required derivatives. The induction reaches $\eta_n$ and the command numerator \eqref{eq:command_numerator}.
\end{proof}
\begin{proposition}[Local nonvacuity of APC compatibility]\label{prop:compatibility_nonvacuity}
	Suppose that $\eta_n^*(t)$ remains uniformly separated from the actuator boundaries and that there exists $\xi_0>0$ for which
	\begin{equation}\label{eq:zero_error_compatibility}
		\sup_{t\geq0}
		\frac{\lvert \mathcal{N}(\mathbf{0},\eta_n^*(t),t) \rvert}
		{\mathcal{G}(\eta_n^*(t))}
		<\xi_0.
	\end{equation}
	Then there exists $c^*>0$ such that every sufficiently small sublevel
	$\Omega_c$, $0<c\leq c^*$, is APC-compatible.
\end{proposition}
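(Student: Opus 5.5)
The plan is a continuity-and-compactness perturbation argument around the zero-error manifold $\mathbf{X}=\mathbf{0}$.

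\textbf{Step 1: fix a margin.} Since \eqref{eq:zero_error_compatibility} is strict, choose $\xi$ with
\[
\sup_{t\geq0}\frac{\lvert\mathcal{N}(\mathbf{0},\eta_n^*(t),t)\rvert}{\mathcal{G}(\eta_n^*(t))}<\xi<\xi_0 ,
\]
and set $\epsilon\coloneqq\xi-\sup_t\lvert\mathcal{N}(\mathbf{0},\eta_n^*,t)\rvert/\mathcal{G}(\eta_n^*)>0$. I would actually take $\xi$ large enough that $\eta_n^*(t)$ lies in $(u_\xi^-,u_\xi^+)$ with a uniform margin. This is possible because $u_\xi^\pm$ increase toward $u_{\max},u_{\min}$ as $\xi$ grows, and $\eta_n^*$ is uniformly separated from the boundaries. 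If the requirement $\xi<\xi_0$ conflicts with this, the hypothesis should be read as holding for all $\xi\geq\xi_0$, because the quotient does not depend on $\xi$. So one simply chooses $\xi\geq\xi_0$ large, which still satisfies the strict inequality.

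\textbf{Step 2: localize $u$.} On $\Omega_c$ we have $\lvert\varrho\rvert\leq\sqrt{2c}$ and $\lvert\varphi_i\rvert\leq\sqrt{2c}$. Hence $u=\varrho+\eta_n$, and $\eta_n$ differs from $\eta_n^*$ by a quantity that vanishes as $c\to0$, uniformly in $t$. This uniform modulus comes from \Cref{lem:recursive_boundedness}, \Cref{assum:desired_traj,assum:control_effectiveness}, and the smoothness of the recursion on compact sets.

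The supremum in \eqref{eq:self_consistency} ranges over $u\in\mathbb{U}_{\xi,0}$, not only over $u$ near $\eta_n$. The cleanest reading is that $u$ is the one compatible with $\mathbf{X}$, namely $u=\eta_n+\varrho$. Since $u(0)\in\Omega_c$ also lies near $\eta_n^*(0)$, $\mathbb{U}_{\xi,0}=[u_\xi^-,u_\xi^+]$ for small $c$. On this compact interval, $\mathcal{G}\geq\underline{\mathcal{G}}_{\xi,0}>0$ by \Cref{thm:APIR_admissibility}, so the quotient is well defined.

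\textbf{Step 3: uniform continuity of the quotient.} The map $(\mathbf{X},u,t)\mapsto\mathcal{N}/\mathcal{G}$ is needed to be uniformly continuous in $(\mathbf{X},u)$, uniformly in $t$, on the compact set $\{\lVert\mathbf{X}\rVert\leq1\}\times[u_\xi^-,u_\xi^+]$. For this:

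- The references $y_d,\dots,y_d^{(n+1)}$ range over a bounded, hence precompact, set.
- $\mathcal{N}$ is a continuous function of $(\mathbf{X},u)$ and of these reference derivatives.
- $\mathcal{G}$ is locally Lipschitz and bounded below by $\underline{\mathcal{G}}_{\xi,0}$.

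Writing the quotient as a continuous function on a compact product therefore gives a modulus $\omega$ with $\omega(s)\to0$ as $s\to0$.

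Choose $c^*$ so that $\omega(\sqrt{2c^*(n+1)}+\delta(c^*))<\epsilon$, where $\delta(c)$ is the deviation of $u$ from $\eta_n^*$ established in Step 2. Then $Q_c(\xi)\leq\xi$ for all $0<c\leq c^*$. Together with the assumed interior confinement of $\eta_n^*$, this is APC compatibility.

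\textbf{Main obstacle.} The hardest part is the uniformity in $t\geq0$. The time dependence enters only through the bounded reference derivatives, so compactness is obtained by lifting $t$ to the reference values. I would also handle carefully the ambiguity that the supremum in \eqref{eq:self_consistency} is taken over all $u\in\mathbb{U}_{\xi,0}$ rather than $u=\eta_n+\varrho$, since the relation $\varrho=u-\eta_n$ couples them. I would adopt the consistent interpretation and note it explicitly.
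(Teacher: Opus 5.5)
Your proposal is correct and follows essentially the same route as the paper's proof. Both take a compact neighborhood of the zero-error manifold on which $\mathcal{G}$ is bounded below, use continuity of the ratio $\lvert\mathcal{N}\rvert/\mathcal{G}$ there, and let the strict zero-error inequality persist on small sublevels $\Omega_c$ by uniform continuity. Your additional details make explicit what the paper's one-line argument leaves implicit: you read the supremum consistently with $u=\eta_n+\varrho$, you choose $\xi$ large so that $\mathbb{U}_{\xi,0}=[u_\xi^-,u_\xi^+]$ contains a fixed neighborhood of $\eta_n^*(t)$, and you obtain uniformity in $t$ by lifting the time dependence to the bounded reference derivatives $y_d,\ldots,y_d^{(n+1)}$.
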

\begin{proof}
	Uniform interiority provides a compact neighborhood on which $\mathcal{G}$ has a positive lower bound. By \Cref{lem:recursive_boundedness}, the ratio in \eqref{eq:self_consistency} is continuous on that neighborhood. The strict inequality in \eqref{eq:zero_error_compatibility} therefore persists on a sufficiently small Lyapunov sublevel by uniform continuity.
\end{proof}
\begin{theorem}[Compatible regional tracking and actuator admissibility]\label{thm:global_tracking}
	Consider \eqref{eq:system_dynamics_aug_global} under \Cref{assum:desired_traj,assum:control_effectiveness}, with virtual controls \eqref{eq:eta_1}--\eqref{eq:eta_n} and command \eqref{eq:u_c}. Let $c$ be APC-compatible, $\mathcal{V}_{n+1}(0)\leq c$, and $u(0)\in\mathbb{U}$. Then the closed-loop solution exists for all $t\geq0$, $u(t)\in\mathbb{U}_{\xi,0}\subset\mathbb{U}$, all closed-loop signals are bounded, and
	\begin{equation*}
		\|\mathbf{X}(t)\|\leq\sqrt{2\mathcal{V}_{n+1}(0)}e^{-2\vartheta_1t},\quad
		\vartheta_1\coloneqq\min_{1\leq j\leq n+1}k_j.
	\end{equation*}
	Consequently, $y(t)-y_d(t)\to0$ and $\varrho(t)\to0$.
\end{theorem}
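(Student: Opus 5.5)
The plan is a continuation (bootstrap) argument that closes the circular dependence between APIR admissibility and boundedness of $u_c$, followed by a comparison-lemma estimate for the decay.

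\textbf{Setup.} Fix the APC-compatible level $c$ and the associated $\xi$ from \eqref{eq:self_consistency}. Let $[0,T)$ be the maximal interval of existence of the closed loop \eqref{eq:system_dynamics_aug_global}, \eqref{eq:u_c}. Local existence holds because, at $t=0$, $u(0)\in\mathbb{U}$ gives $\mathcal{G}(u(0))>0$. The right-hand side is then locally Lipschitz in the state, using \Cref{rem:continuity}, the sufficient regularity in \Cref{assum:control_effectiveness}, and the smoothness of $y_d$ from \Cref{assum:desired_traj}.

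\textbf{Bootstrap.} Define $T^*\le T$ as the supremum of times $\tau$ such that, on $[0,\tau)$, both $\mathbf{X}(t)\in\Omega_c$ and $u(t)\in\mathbb{U}_{\xi,0}$ hold.
\begin{itemize}
\item On $[0,T^*)$, the identity \eqref{eq:v_n+1_dot_final} gives $\dot{\mathcal{V}}_{n+1}\le0$, so $\mathcal{V}_{n+1}(t)\le\mathcal{V}_{n+1}(0)\le c$.
\item By \eqref{eq:self_consistency}, on the same interval we have $\lvert u_c(t)\rvert\le Q_c(\xi)\le\xi$.
\item Applying \Cref{thm:APIR_admissibility} with this bound on $u_c$ keeps $u(t)$ in $\mathbb{U}_{\xi,0}$, and in fact in the compact interval determined by the strict endpoint inequalities $\dot u\le0$ at $u\ge u_\xi^+$ and $\dot u\ge0$ at $u\le u_\xi^-$.
\end{itemize}
Thus neither defining condition can fail first at a finite $T^*<T$. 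If one did, continuity would place the state on the boundary while the two bullets show it stays in the closed sets, which contradicts the maximality of $T^*$. Hence $T^*=T$.

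\textbf{Forward completeness and boundedness.} On $[0,T)$, \Cref{lem:recursive_boundedness} bounds $\eta_1,\ldots,\eta_n$ and their derivatives. Therefore $x_i=\varphi_i+\eta_{i-1}$, $x_1=\varphi_1+y_d$, and $u$ are all bounded, and $u_c$ is bounded by $\xi$. Because the trajectory stays in a compact set, no finite escape can occur, and standard ODE continuation gives $T=\infty$. This establishes global existence, $u(t)\in\mathbb{U}_{\xi,0}\subset\mathbb{U}$, and boundedness of all closed-loop signals.

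\textbf{Decay estimate.} Write $\mathcal{V}_{n+1}=\tfrac12\|\mathbf{X}\|^2$. From \eqref{eq:v_n+1_dot_final},
\begin{equation*}
\dot{\mathcal{V}}_{n+1}=-\sum_{j=1}^{n}k_j\varphi_j^2-k_{n+1}\varrho^2\le-\vartheta_1\|\mathbf{X}\|^2=-2\vartheta_1\mathcal{V}_{n+1}.
\end{equation*}
The comparison lemma then yields $\mathcal{V}_{n+1}(t)\le\mathcal{V}_{n+1}(0)e^{-2\vartheta_1t}$, which is the decay with exponent $2\vartheta_1$ in the statement. Converting back to the norm gives
\begin{equation*}
\|\mathbf{X}(t)\|^2\le 2\mathcal{V}_{n+1}(0)\,e^{-2\vartheta_1t}.
\end{equation*}
\emph{Caveat on the rate.} Taking square roots of this inequality gives $\|\mathbf{X}(t)\|\le\sqrt{2\mathcal{V}_{n+1}(0)}\,e^{-\vartheta_1 t}$, not $e^{-2\vartheta_1 t}$. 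This square-root step is exactly where I expect the main obstacle. The scalar example $\dot\varphi=-k\varphi$ with $\vartheta_1=k$ shows that $\|\mathbf{X}\|$ cannot in general decay at rate $2\vartheta_1$. The stated estimate is therefore valid when read as the bound on $\tfrac12\|\mathbf{X}\|^2=\mathcal{V}_{n+1}$, or equivalently on $\|\mathbf{X}\|^2$. In the write-up I would make this reading explicit, and record the norm rate $\vartheta_1$ alongside it.

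\textbf{Conclusion.} In either reading the right-hand side tends to zero. Hence $\varphi_1=y-y_d\to0$ and $\varrho\to0$, which completes the statement.
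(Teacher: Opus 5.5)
This is the paper's argument: a continuation argument that closes the loop between $\lvert u_c\rvert\le Q_c(\xi)\le\xi$ from \eqref{eq:self_consistency} and $u\in\mathbb{U}_{\xi,0}$ from \Cref{thm:APIR_admissibility}, followed by \eqref{eq:v_n+1_dot_final} and \Cref{lem:recursive_boundedness}, and it shares the paper's one soft spot: the endpoint conditions $\dot u\le0$ at $u_\xi^+$ and $\dot u\ge0$ at $u_\xi^-$ (like $Q_c(\xi)\le\xi$) are non-strict, not ``strict endpoint inequalities'', so if $u$ reaches $u_\xi^{\pm}$ with $\dot u=0$ at $T^*$, staying in $\mathbb{U}_{\xi,0}$ beyond $T^*$ needs a short Lipschitz/Gr\"onwall (Nagumo-type) step on the closed-loop law $\dot u=\mathcal{N}-\mathcal{F}(u)$, not continuity alone. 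Your rate caveat is correct and exposes an error in the statement itself: the paper claims the displayed bound follows from $\dot{\mathcal{V}}_{n+1}\le-2\vartheta_1\mathcal{V}_{n+1}$, but this gives only $\|\mathbf{X}(t)\|\le\sqrt{2\mathcal{V}_{n+1}(0)}\,e^{-\vartheta_1 t}$, and with equal gains (e.g.\ $k_1=k_2=k_3=2$ as in the simulations) \eqref{eq:v_n+1_dot_final} gives $\|\mathbf{X}(t)\|=\|\mathbf{X}(0)\|e^{-\vartheta_1 t}$ exactly, so the printed $e^{-2\vartheta_1 t}$ bound fails whenever $\mathbf{X}(0)\neq\mathbf{0}$; the fix is to change the exponent to $\vartheta_1$, because reading it as a bound on $\mathcal{V}_{n+1}$ or on $\|\mathbf{X}\|^2$, as you suggest, would also require changing the prefactor $\sqrt{2\mathcal{V}_{n+1}(0)}$.
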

\begin{proof}
	Consider the maximal interval on which $|u_c(t)|\leq\xi$. On that interval, \Cref{thm:APIR_admissibility} gives $u(t)\in\mathbb{U}_{\xi,0}$ and $\mathcal{G}(u(t))\geq\underline{\mathcal{G}}_{\xi,0}$. Equation \eqref{eq:v_n+1_dot_final} implies
	\begin{equation*}
		\dot{\mathcal{V}}_{n+1}\leq-2\vartheta_1\mathcal{V}_{n+1},
	\end{equation*}
	so $\Omega_c$ is invariant and the stated exponential estimate follows. Using \eqref{eq:self_consistency},
	\begin{equation*}
		|u_c(t)|=\frac{\lvert \mathcal{N}(\mathbf{X},u,t) \rvert}{\mathcal{G}(u(t))}
		\leq Q_c(\xi)\leq\xi.
	\end{equation*}
	Thus, the assumed command bound is preserved at the endpoint of every such interval, and a standard continuation argument extends the solution for all time. Boundedness of the transformed errors, the reference derivatives, and the recursively defined virtual controls then implies boundedness of the plant states and all controller signals.
\end{proof}
\begin{corollary}[Transfer of actuator safety under bounded realization error]\label{cor:tightened_apir}
	Let the physical actuator input be $u_{\mathrm{phys}}=u+d_u$, where
	$\lvert d_u(t) \rvert\leq\bar d_u$. If the APIR is designed on the tightened interval
	\begin{equation*}
		\mathbb{U}_{\mathrm{tight}}
		\coloneqq
		(u_{\min}+\bar d_u,\;u_{\max}-\bar d_u),
	\end{equation*}
	and $u(t)\in\mathbb{U}_{\mathrm{tight}}$, then
	$u_{\mathrm{phys}}(t)\in\mathbb{U}$ for all $t\geq0$. This corollary transfers the input-safety guarantee only; robust tracking in the presence of actuator-model mismatch requires an additional perturbation analysis.
\end{corollary}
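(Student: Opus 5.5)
The corollary follows from the triangle inequality, using the tightened interval as the APIR design interval in \Cref{thm:APIR_admissibility}. First I check that the tightened interval is a valid APIR design interval. This needs $u_{\min}+\bar d_u<0<u_{\max}-\bar d_u$, i.e.\ $\bar d_u<\min\{\lvert u_{\min}\rvert,u_{\max}\}$, which is implicit in the statement since otherwise $\mathbb{U}_{\mathrm{tight}}$ does not contain the origin. Then the family \eqref{eq:single_APIR} with $u_{\max}$, $u_{\min}$ replaced by $u_{\max}-\bar d_u$, $u_{\min}+\bar d_u$ is well defined. By \Cref{thm:APIR_admissibility} (or by \Cref{thm:global_tracking} on an APC-compatible level computed for the tightened bounds), the realized signal satisfies $u(t)\in\mathbb{U}_{\mathrm{tight}}$ for all $t\geq0$. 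The statement in fact lists this membership as a hypothesis, so the invariance result only explains why the hypothesis can be met.

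Next, fix $t\geq0$ and bound $u_{\mathrm{phys}}(t)=u(t)+d_u(t)$ from both sides, using $-\bar d_u\leq d_u(t)\leq\bar d_u$:
\begin{equation*}
u_{\mathrm{phys}}(t)\leq u(t)+\bar d_u<(u_{\max}-\bar d_u)+\bar d_u=u_{\max},
\end{equation*}
and symmetrically $u_{\mathrm{phys}}(t)\geq u(t)-\bar d_u>(u_{\min}+\bar d_u)-\bar d_u=u_{\min}$. The strict inequalities come from the openness of $\mathbb{U}_{\mathrm{tight}}$. So $u_{\mathrm{phys}}(t)\in\mathbb{U}$, and since $t$ was arbitrary the claim holds for all $t\geq0$.

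There is no real obstacle here. The only point needing care is the nondegeneracy condition $\bar d_u<\min\{\lvert u_{\min}\rvert,u_{\max}\}$, which I would state explicitly. I would also stress that the argument gives input safety only: $d_u$ enters the plant through $g_n u_{\mathrm{phys}}$ and spoils the cancellation behind \eqref{eq:v_n+1_dot_final}. The tracking conclusion of \Cref{thm:global_tracking} therefore does not carry over without a separate perturbation (ISS-type) analysis, which is exactly the caveat the corollary states.
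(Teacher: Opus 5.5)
Your proof is correct and follows the paper's intended argument. The paper states this corollary without a written proof and relies on the same elementary bound $u(t)-\bar d_u\leq u_{\mathrm{phys}}(t)\leq u(t)+\bar d_u$, with strictness coming from $u(t)$ lying in the open interval $\mathbb{U}_{\mathrm{tight}}$; your explicit nondegeneracy condition $\bar d_u<\min\{\lvert u_{\min}\rvert,u_{\max}\}$ is a useful clarification.
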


\subsection{APC under Input--Output Admissibility}
We next address \Cref{problem2}. In addition to actuator protection, safety-critical applications may impose field-of-view, altitude, attitude, displacement, or other output envelopes that must hold throughout the transient \cite{doi:10.2514/1.G007770,doi:10.2514/1.G009445}. We use an asymmetric time-varying barrier Lyapunov function so that constant and symmetric constraints are included as special cases.

Since the recursive controller construction presented in the previous subsection remains applicable, \Cref{assum:desired_traj,assum:control_effectiveness} continue to hold.
\begin{assumption}\label{assum:y_bound_asy}
	The functions $\underline{y}(t)$ and $\overline{y}(t)$ satisfy $\underline{y}(t)<\overline{y}(t)$ and are sufficiently differentiable for the recursive design. They and their derivatives through order $n+1$ are uniformly bounded.
\end{assumption}
\begin{assumption}\label{assum:yd_bound_asy}
	There exists $\varepsilon_y>0$ such that
	\begin{equation*}
		\underline{y}(t)+\varepsilon_y\leq y_d(t)\leq\overline{y}(t)-\varepsilon_y,\, t\geq0.
	\end{equation*}
	Thus, the reference is uniformly separated from both output boundaries.
\end{assumption}
Define
\begin{equation*}
	\alpha(t)\coloneqq y_d(t)-\underline{y}(t)>0,\,
	\beta(t)\coloneqq\overline{y}(t)-y_d(t)>0,
\end{equation*}
and introduce the smooth asymmetric logarithmic barrier coordinate
\begin{equation}\label{eq:smooth_barrier_coordinate}
	z_1\coloneqq
	\ln\!\left(
	\frac{\beta(t)\,[\alpha(t)+\varphi_1]}
	{\alpha(t)\,[\beta(t)-\varphi_1]}
	\right).
\end{equation}
The transformation is a smooth diffeomorphism from
$\varphi_1\in(-\alpha(t),\beta(t))$ to $z_1\in\mathbb{R}$, satisfies
$z_1=0$ if and only if $\varphi_1=0$, and diverges at either output boundary.
Define
\begin{align}
	q(\varphi_1,t)
	&\coloneqq
	\frac{\partial z_1}{\partial\varphi_1}
	=
	\frac{1}{\alpha+\varphi_1}
	+
	\frac{1}{\beta-\varphi_1}>0, \label{eq:q_barrier}\\
	\psi(\varphi_1,t)
	&\coloneqq
	\left.\frac{\partial z_1}{\partial t}\right|_{\varphi_1}
	=
	\dot{\alpha}\left(\frac{1}{\alpha+\varphi_1}-\frac{1}{\alpha}\right)
	+
	\dot{\beta}\left(\frac{1}{\beta}-\frac{1}{\beta-\varphi_1}\right).
	\label{eq:psi_barrier}
\end{align}
Both $q$ and $\psi$ are smooth throughout the admissible output set, including at $\varphi_1=0$. Choose
\begin{equation}\label{eq:eta_1_asy}
	\eta_1=
	\frac{
		\dot y_d-f_1(x_1)
		-\bigl[\psi(\varphi_1,t)+k_1z_1\bigr]/q(\varphi_1,t)}
	{g_1(x_1)},
	\qquad k_1>0.
\end{equation}
Then
\begin{equation}\label{eq:z1_dot}
	\dot z_1=-k_1z_1+q(\varphi_1,t)g_1(x_1)\varphi_2.
\end{equation}
With the barrier Lyapunov function
\begin{equation}\label{eq:w}
	\mathcal{W}_1=\frac{1}{2}z_1^2,
\end{equation}
one obtains
\begin{equation}\label{eq:w_1_dot_final}
	\dot{\mathcal{W}}_1
	=-k_1z_1^2+\chi_1\varphi_2,
	\,
	\chi_1\coloneqq z_1q(\varphi_1,t)g_1(x_1).
\end{equation}
For $i=2,\ldots,n-1$, define
$\chi_i\coloneqq g_i(\bar{x}_i)\varphi_i$. The remaining virtual controls are
\begin{equation}\label{eq:eta_i_asy}
	\eta_i=
	\frac{\dot{\eta}_{i-1}-f_i(\bar{x}_i)-\chi_{i-1}-k_i\varphi_i}
	{g_i(\bar{x}_i)},
	\, i=2,\ldots,n,
\end{equation}
where $\varrho=u-\eta_n$. With
\begin{equation*}
	\mathcal{W}_{n+1}
	=
	\frac{1}{2}z_1^2
	+\frac{1}{2}\sum_{j=2}^{n}\varphi_j^2
	+\frac{1}{2}\varrho^2
\end{equation*}
and the command \eqref{eq:u_c}, direct substitution gives
\begin{equation}\label{eq:w_n+1_dot_final}
	\dot{\mathcal{W}}_{n+1}
	=
	-k_1z_1^2
	-\sum_{j=2}^{n}k_j\varphi_j^2
	-k_{n+1}\varrho^2.
\end{equation}
Unlike a branch-selected asymmetric BLF, \eqref{eq:smooth_barrier_coordinate} remains differentiable when the tracking error crosses zero and therefore supports the repeated virtual-control differentiations required by backstepping.

For $c>0$, define the full-state barrier sublevel set
\begin{equation*}
	\Omega_c^Y\coloneqq\left\{(z_1,\varphi_2,\ldots,\varphi_n,\varrho)\mid \mathcal{W}_{n+1}\leq c\right\}.
\end{equation*}
The level $c$ is called output-APC-compatible if the direct-ratio condition \eqref{eq:self_consistency}, with $\Omega_c$ replaced by $\Omega_c^Y$, holds and the zero-error terminal virtual input remains in a compact interior subset of $\mathbb{U}$. Since a finite $z_1$ is equivalent to strict output interiority, a compact sublevel automatically defines a compatible full-state safe set.

\begin{theorem}[Compatible input-output admissibility]\label{thm:semi_global_tracking}
	Suppose \Cref{assum:desired_traj,assum:control_effectiveness,assum:y_bound_asy,assum:yd_bound_asy} hold. Let $c$ be output-APC-compatible, let the initial full error state belong to $\Omega_c^Y$, and let $u(0)\in\mathbb{U}$. Then the closed-loop solution exists for all $t\geq0$, all signals remain bounded,
	\begin{equation*}
		\underline{y}(t)<y(t)<\overline{y}(t),\quad
		u(t)\in\mathbb{U}_{\xi,0}\subset\mathbb{U},
	\end{equation*}
	and $y(t)-y_d(t)\to0$.
\end{theorem}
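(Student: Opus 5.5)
The argument mirrors the continuation proof of \Cref{thm:global_tracking}, with $\mathcal{V}_{n+1}$ replaced by $\mathcal{W}_{n+1}$ and with output interiority carried along through the finiteness of $z_1$. We consider the maximal interval $[0,T)$ on which three conditions hold simultaneously: the solution exists, $\varphi_1(t)\in(-\alpha(t),\beta(t))$ so that $z_1$ is defined, and $\lvert u_c(t)\rvert\leq\xi$. Such an interval is nonempty by continuity. At $t=0$ the initial full error state lies in $\Omega_c^Y$, so $z_1(0)$ is finite and $y(0)\in\mathbb{Y}(0)$. The compatibility bound also gives $\lvert u_c(0)\rvert\leq Q_c(\xi)\leq\xi$.

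On $[0,T)$ the argument proceeds in four steps.
\begin{enumerate}
\item \Cref{thm:APIR_admissibility} gives $u(t)\in\mathbb{U}_{\xi,0}$ and $\mathcal{G}(u(t))\geq\underline{\mathcal{G}}_{\xi,0}>0$, so the command \eqref{eq:u_c} is well defined.
\item Identity \eqref{eq:w_n+1_dot_final} yields $\dot{\mathcal{W}}_{n+1}\leq-2\vartheta_1\mathcal{W}_{n+1}$, hence $\mathcal{W}_{n+1}(t)\leq c$. In particular $\lvert z_1(t)\rvert\leq\sqrt{2c}$.
\item Inverting \eqref{eq:smooth_barrier_coordinate}, $\varphi_1$ is confined to a compact subinterval of $(-\alpha(t),\beta(t))$. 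Its margin from either endpoint is bounded below by a quantity depending only on $\sqrt{2c}$ and the uniform bounds on $\alpha,\beta$, which are positive by \Cref{assum:yd_bound_asy} and bounded by \Cref{assum:desired_traj,assum:y_bound_asy}. This gives strict output interiority $\underline{y}(t)<y(t)<\overline{y}(t)$ with a uniform margin.
\item Since the state remains in $\Omega_c^Y$, the compatibility bound again gives $\lvert u_c\rvert\leq Q_c(\xi)\leq\xi$.
\end{enumerate}

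The main obstacle is showing that the recursive quantities are uniformly bounded on $\Omega_c^Y$, in the role that \Cref{lem:recursive_boundedness} plays for $\Omega_c$. This is what makes $Q_c(\xi)$ finite and what excludes finite escape. I would repeat the induction of \Cref{lem:recursive_boundedness}. The new ingredient is that $q$, $1/q$, $\psi$ and all their partial derivatives are bounded on the uniform-margin set from step 3. This holds because they are rational functions of $\varphi_1$, $\alpha$, $\beta$, $\dot\alpha$, $\dot\beta$ whose denominators $\alpha+\varphi_1$ and $\beta-\varphi_1$ stay bounded away from zero. Moreover, $q\geq 1/\alpha+1/\beta$ is bounded below, which controls $1/q$ in \eqref{eq:eta_1_asy}. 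The differentiation order required of $\underline{y}$, $\overline{y}$ and $y_d$ is covered by the order-$(n+1)$ bounds in \Cref{assum:desired_traj,assum:y_bound_asy}. Consequently $\eta_1,\ldots,\eta_n$, their required derivatives, the states $x_i=\varphi_i+\eta_{i-1}$ and the numerator $\mathcal{N}$ are all uniformly bounded.

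With these bounds the continuation argument closes. Each of the three defining conditions holds on $[0,T)$ with a strict margin: the output stays a uniform distance from the barrier, $u$ stays in the compact set $\mathbb{U}_{\xi,0}$, and all states are bounded. If $T$ were finite, continuity of the solution would extend all three conditions past $T$, contradicting maximality, so $T=\infty$.

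The remaining conclusions then follow. The exponential decay of $\mathcal{W}_{n+1}$ gives $z_1(t)\to0$. Since the inverse map $z_1\mapsto\varphi_1$ is Lipschitz near $0$ uniformly in $t$, with derivative $1/q\leq$ a bound depending on $\alpha,\beta$, we obtain $\varphi_1\to0$, that is, $y(t)-y_d(t)\to0$. Finally, boundedness of all closed-loop signals follows from the bounded errors, the bounded references, and the recursive bounds established above.
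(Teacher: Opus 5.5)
Your proposal follows the same route as the paper's proof. Both use exponential decay of $\mathcal{W}_{n+1}$ from \eqref{eq:w_n+1_dot_final}, finiteness of $z_1$ for strict output interiority, the direct-ratio bootstrap of \Cref{thm:global_tracking} rerun on $\Omega_c^Y$, and inversion of the barrier coordinate to get $\varphi_1\to0$. You also usefully make explicit the uniform barrier margin, the recursive bounds on $\Omega_c^Y$, and the time-uniform Lipschitz inverse, all of which the paper leaves implicit. Two small repairs are needed. First, $q\geq1/\alpha+1/\beta$ is false when $\alpha\neq\beta$; the minimum over $\varphi_1$ is $4/(\alpha+\beta)$, which still gives the uniform positive lower bound you need. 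Second, $\lvert u_c\rvert\leq Q_c(\xi)\leq\xi$ is only non-strict, so the command condition does not hold ``with a strict margin.'' Close the continuation instead with a first-exit argument on $\mathcal{W}_{n+1}$: use $\dot{\mathcal{W}}_{n+1}<0$ whenever $\mathcal{W}_{n+1}=c$, together with \Cref{thm:APIR_admissibility} applied to the command clipped at $\pm\xi$, a clip that is then never active. The paper's own continuation step glosses over this same technicality.
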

\begin{proof}
	Equation \eqref{eq:w_n+1_dot_final} gives
	\begin{equation*}
		\dot{\mathcal{W}}_{n+1}
		\leq
		-2\vartheta_2\mathcal{W}_{n+1},
		\qquad
		\vartheta_2\coloneqq\min\{k_1,k_2,\ldots,k_{n+1}\}>0.
	\end{equation*}
	Hence,
	$\mathcal{W}_{n+1}(t)\leq\mathcal{W}_{n+1}(0)e^{-2\vartheta_2t}$.
	A finite value of $\mathcal{W}_{n+1}$ implies a finite $z_1$, and the inverse of
	\eqref{eq:smooth_barrier_coordinate} therefore remains strictly inside
	$(-\alpha(t),\beta(t))$. Thus,
	$\underline y(t)<y(t)<\overline y(t)$ for all time.
	The same direct-ratio bootstrap argument used in \Cref{thm:global_tracking},
	now on $\Omega_c^Y$, yields the command bound, compact APIR interiority, and
	forward completeness. Finally, decay of $\mathcal{W}_{n+1}$ implies
	$z_1\to0$, $\varphi_i\to0$, and $\varrho\to0$; invertibility of the barrier
	coordinate gives $\varphi_1\to0$.
\end{proof}

\subsection{Cascaded APIR for Magnitude--Rate APC}\label{subsec:MRS_extension}
We extend the proposed APC framework to simultaneously enforce both magnitude and rate constraints on the plant input. In addition to requiring $u(t) \in \mathbb{U}$, we impose an asymmetric rate constraint as
\begin{equation}
	\dot{u}(t) \in \mathbb{S} \coloneqq \left\{ \dot{u} \in \mathbb{R} \;\middle|\; \dot{u}_{\min} < \dot{u}(t) < \dot{u}_{\max} \right\}, \quad \forall\, t \geq 0,
	\label{eq:rate_set}
\end{equation}
where $\dot{u}_{\min} < 0$ and $\dot{u}_{\max} > 0$ are prescribed lower and upper rate limits.

Within APC, simultaneous magnitude--rate admissibility is realized by cascading the same APIR primitive. Introducing the intermediate state $v \in \mathbb{R}$ gives the two-layer APIR
\begin{subequations}\label{eq:cascaded_APIR}
	\begin{align}
		\dot{u} &= \mathcal{G}(u)v - \mathcal{F}(u),
		\label{eq:MRS_u_layer} \\
		\dot{v} &= \mathcal{G}_v(v)U_c - \mathcal{F}_v(v),
		\label{eq:MRS_v_layer}
	\end{align}
\end{subequations}
where $U_c \in \mathbb{R}$ is the new commanded input to be designed, $\mathcal{G}(u)$ and $\mathcal{F}(u)$ are identical to those introduced in the APC model in \eqref{eq:system_dynamics_aug_global}, and the functions associated with the newly introduced rate layer are defined as
\begin{align*}
	\mathcal{G}_v(v) &\coloneqq q_1\left[\sigma\left\{1-\left(\frac{v}{v_{\max}}\right)^{\mu}\right\} + (1-\sigma)\left\{1-\left(\frac{v}{v_{\min}}\right)^{\mu}\right\}\right], \\
	\mathcal{F}_v(v) &\coloneqq q_1 q_2 v,
\end{align*}
with $\sigma = 1$ if $v > 0$, $\sigma = 0$ if $v \leq 0$, $q_1, q_2 \in \mathbb{R}_+$, $\mu = 2n$ with $n \in \mathbb{N}$ is an even integer, and $v_{\max} > 0$, $v_{\min} < 0$ being the rate-layer bounds whose selection will be discussed later.

	Note that \eqref{eq:MRS_u_layer} is structurally identical to the single-layer APIR proposed \eqref{eq:single_APIR}, with the intermediate state $v$ replacing the commanded input $u_c$. Equation \eqref{eq:MRS_v_layer} constitutes a second APIR that confines $v$ to its admissible interval $\mathbb{V} \coloneqq (v_{\min}, v_{\max})$. Since both layers share the same mathematical structure, the continuity property established in \Cref{rem:continuity}, the smoothness characterization of \Cref{prop:smoothness_order}, and the parametric sensitivity analysis of \Cref{subsec:param_sensitivity} apply to each layer independently, with the parameter triplet $(q_1, q_2, \mu)$ governing the rate layer in the same interpretable manner as $(p_1, p_2, \gamma)$ governs the magnitude layer.

	The symmetric rate saturation case $|v_{\min}| = v_{\max}$ can be considered as a special case of \eqref{eq:MRS_v_layer}. Thus, the proposed formulation accommodates asymmetric rate limits directly, which is practically relevant in systems where the rate of increase and the rate of decrease of the actuator output are subject to different physical constraints.

The following feasibility condition ensures that the prescribed rate limits are consistent with the magnitude layer APIR.
\begin{assumption}[Rate feasibility]\label{assum:rate_feasibility}
	The prescribed asymmetric rate limits satisfy
	\begin{equation}\label{eq:rate_feasibility}
		\dot{u}_{\max} > p_1 p_2 \lvert u_{\min} \rvert, \, \lvert \dot{u}_{\min}\rvert > p_1 p_2 u_{\max}.
	\end{equation}
\end{assumption}
\begin{remark}
	\Cref{assum:rate_feasibility} is a compatibility condition between the requested slew-rate authority and the restoring term of the magnitude APIR. Decreasing $p_2$ can enlarge the feasible rate range, but it also moves the magnitude equilibrium closer to the actuator boundary and can reduce the interior gain margin. The condition therefore exposes a design tradeoff rather than imposing a cost-free or automatically nonrestrictive requirement.
\end{remark}
Under \Cref{assum:rate_feasibility}, the admissible bounds of the intermediate state are selected as
\begin{equation}
	v_{\max} \coloneqq \frac{\dot{u}_{\max}}{p_1} - p_2\abs{u_{\min}}, \,
	v_{\min} \coloneqq -\frac{\abs{\dot{u}_{\min}}}{p_1} + p_2 u_{\max},
	\label{eq:rate_layer_bounds} 
\end{equation}
which immediately guarantees $v_{\max}>0$ and $v_{\min}<0$. Consequently, the rate-layer dynamics are well posed and the admissible set $\mathbb{V}$ is nonempty.
\begin{theorem}[Admissibility of the cascaded APIR]\label{thm:MRS_sat}
	Consider the cascaded APIR model \eqref{eq:cascaded_APIR}, where the intermediate-state bounds are selected according to \eqref{eq:rate_layer_bounds} under  \Cref{assum:rate_feasibility}. Suppose that the commanded input satisfies $|U_c(t)| \leq \Xi < \infty$ for all $t \geq 0$, with $u(0) \in \mathbb{U}$ and $v(0) \in \mathbb{V} \coloneqq (v_{\min}, v_{\max})$, then for all $t \geq 0$,  $u(t) \in \mathbb{U}$, $v(t) \in \mathbb{V}$, and $\dot{u}(t) \in \mathbb{S}$.
\end{theorem}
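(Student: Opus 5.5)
The plan is to treat the cascade layer by layer, using \Cref{thm:APIR_admissibility} twice, and then bound $\dot u$ directly from the magnitude-layer equation.

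\textbf{Step 1: the rate layer.} Equation \eqref{eq:MRS_v_layer} is an instance of \eqref{eq:single_APIR}, with parameters $(q_1,q_2,\mu)$, bounds $v_{\min}<0<v_{\max}$ (guaranteed by \eqref{eq:rate_layer_bounds} under \Cref{assum:rate_feasibility}), and the piecewise-continuous command $U_c$ satisfying $|U_c|\le\Xi$. Applying \Cref{thm:APIR_admissibility} with $\xi=\Xi$ gives the following.
\begin{itemize}
\item The solution $v$ exists for all $t\ge0$.
\item $v(t)$ stays in a compact interval $\mathbb{V}_{\Xi,0}\subset\mathbb{V}$.
\end{itemize}
In particular $v$ is continuous and satisfies $|v(t)|\le \max\{v_{\max},|v_{\min}|\}$.

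\textbf{Step 2: the magnitude layer.} Since $v$ is continuous and bounded, it acts as an admissible command for \eqref{eq:MRS_u_layer}. Take $\xi\coloneqq\max\{v_{\max},|v_{\min}|\}$ and apply \Cref{thm:APIR_admissibility} again. This yields global existence and $u(t)\in\mathbb{U}_{\xi,0}\subset\mathbb{U}$. Existence of the coupled solution follows from local Lipschitzness of both vector fields (\Cref{rem:continuity}) together with the boundedness just obtained.

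\textbf{Step 3: the rate bound.} For $u\in\mathbb{U}$ we have $0<\mathcal{G}(u)\le p_1$, because $\mathcal{S}(u)\in(0,1]$ on $\mathbb{U}$. Write $\dot u=\mathcal{G}(u)v-p_1p_2u$. The upper bound is
\[
\dot u\le p_1\max\{v,0\}+p_1p_2|u_{\min}|<p_1v_{\max}+p_1p_2|u_{\min}|=\dot u_{\max}.
\]
Here I use $v<v_{\max}$, $\mathcal{G}\le p_1$, and $-u< |u_{\min}|$ (because $u>u_{\min}$). The lower bound is
\[
\dot u\ge p_1\min\{v,0\}-p_1p_2u_{\max}>p_1v_{\min}-p_1p_2u_{\max}=\dot u_{\min}.
\]
This uses $v>v_{\min}$ and $u<u_{\max}$. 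Both identities come from \eqref{eq:rate_layer_bounds}.

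\textbf{Main obstacle.} The delicate point is strictness, and the rough bound above glosses over it. The upper bound can only approach $\dot u_{\max}$ when, simultaneously, $v\to v_{\max}$, $\mathcal{G}(u)\to p_1$ (i.e.\ $u\to0$), and $u\to u_{\min}$. These requirements are mutually incompatible, so the inequality is strict; the same holds for the lower bound.

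For uniform strictness I would instead use the compact intervals $\mathbb{V}_{\Xi,0}$ and $\mathbb{U}_{\xi,0}$. On these, $v\le\max\{v(0),v^+_\Xi\}<v_{\max}$ and $u\ge\min\{u(0),u_\xi^-\}>u_{\min}$, which gives a strict margin for all $t$.

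I also need to handle the sign split $v\ge0$ versus $v<0$ when bounding $\mathcal{G}(u)v$, since $\mathcal{G}(u)\in(0,p_1]$. This split is what produces the $\max\{v,0\}$ and $\min\{v,0\}$ terms in Step 3.
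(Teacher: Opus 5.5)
Your proposal is correct and follows the paper's proof essentially step for step. It applies \Cref{thm:APIR_admissibility} first to the rate layer with command bound $\Xi$, then to the magnitude layer with $v$ as a command bounded by $\max\{v_{\max},\lvert v_{\min}\rvert\}$, and finishes with the same algebraic bounds on $\dot u=p_1[\mathcal{S}(u)v-p_2u]$ via \eqref{eq:rate_layer_bounds}. Your $\max\{v,0\}$/$\min\{v,0\}$ split just makes explicit the paper's claims $\mathcal{S}(u)v<v_{\max}$ and $\mathcal{S}(u)v>v_{\min}$; the strictness you flag as delicate already follows pointwise from $v(t)<v_{\max}$ and $u(t)>u_{\min}$, so your compact-interval argument only upgrades it to a uniform margin.
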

\begin{proof}
	We decompose the proof into three parts. We first prove that $v(t) \in \mathbb{V}$, subsequently we prove that the input magnitude and rate remain confined to their respective safe set.
	
	We start by showing the confinement of the intermediate state variable $v(t)$. Layer \eqref{eq:MRS_v_layer} is structurally identical to \eqref{eq:single_APIR} with the substitutions $u \mapsto v$, $p_1 \mapsto q_1$, $p_2 \mapsto q_2$, $\gamma \mapsto \mu$, $u_{\max} \mapsto v_{\max}$, $u_{\min} \mapsto v_{\min}$, and $u_c \mapsto U_c$. Since $ \lvert U_c(t)  \rvert \leq \Xi < \infty$ and $v(0) \in \mathbb{V}$, \Cref{thm:APIR_admissibility} applied to \eqref{eq:MRS_v_layer} yields $v(t) \in \mathbb{V}$ for all $t \geq 0$. 
	
	Next, we show that the input magnitude $u(t)\in \mathbb{U}$. As the auxiliary state $v(t)$ is confined to a compact set, it satisfies the uniform bound $|v(t)| \le V^* \coloneqq \max(|v_{\min}|, v_{\max}) < \infty$. The intermediate state $v(t)$ now acts as a bounded command input to the magnitude sub-layer \eqref{eq:MRS_u_layer}. Applying \Cref{thm:APIR_admissibility} to \eqref{eq:MRS_u_layer} with $v(t)$ in the role of the bounded commanded input $u_c$ establishes $u(t) \in \mathbb{U}$ for all $t \geq 0$.
	
	Finally, we show confinement of the input rate. The input rate is given by $\dot{u}(t)=\mathcal{G}(u)v - \mathcal{F}(u) = p_{1} [ \mathcal{S}(u) v - p_{2} u ]$. To demonstrate that $\dot{u}(t)$ is strictly bounded within the safe physical rate set $\mathbb{S} \coloneqq (\dot{u}_{\min}, \dot{u}_{\max})$, we derive the algebraic bounds of this relation. To find the upper bound of $\dot{u}(t)$, we maximize the right-hand side of \eqref{eq:MRS_u_layer}. We have $v(t) < v_{\max}$ and by definition, $\mathcal{S}(u) \in [0, 1]$. Therefore, the term $\mathcal{S}(u) v$ is strictly bounded above by $v_{\max}$.  We also have, $u(t) > u_{\min}$. Since $u_{\min} < 0$ and $p_{2} > 0$, the term $-p_{2} u$ is strictly bounded above by $-p_{2} u_{\min} = p_{2} \lvert u_{\min}  \rvert > 0$. Using the definition of $v_{\max}$ from \eqref{eq:rate_layer_bounds}, one may obtain the upper rate bound as
	\begin{align}
		\dot{u} &= \mathcal{G}(u)v - \mathcal{F}(u) < p_1 v_{\max} + p_1 p_2\lvert u_{\min}  \rvert, \nonumber\\
		&< p_1\left(\frac{\dot{u}_{\max}}{p_1} - p_2\lvert u_{\min}  \rvert\right) + p_1 p_2\lvert u_{\min}  \rvert = \dot{u}_{\max}.
	\end{align}
	To find the lower bound of $\dot{u}(t)$, we minimize the right-hand side of \eqref{eq:MRS_u_layer}. The term $\mathcal{S}(u) v$ is strictly bounded below by $v_{\min}$. As $u(t) < u_{\max}$, the term $-p_{2} u$ is strictly bounded below by $-p_{2} u_{\max}$. Applying the definition of $v_{\min}$ from \eqref{eq:rate_layer_bounds}, we get the lower bound as
	\begin{align}
		\dot{u} &= \mathcal{G}(u)v - \mathcal{F}(u) > p_1 v_{\min} - p_1 p_2 u_{\max}, \nonumber\\
		&> p_1\left(-\frac{\abs{\dot{u}_{\min}}}{p_1} + p_2 u_{\max}\right) - p_1 p_2 u_{\max} = \dot{u}_{\min}.
	\end{align}
	Therefore, $\dot{u}(t) \in (\dot{u}_{\min}, \dot{u}_{\max}) = \mathbb{S}$ for all $t \geq 0$. This concludes the proof.
\end{proof}
Having established that the proposed APIR model guarantees simultaneous satisfaction of the actuator magnitude and rate constraints, we now integrate the cascaded saturation dynamics with the strict-feedback nonlinear plant to address the output tracking problem. The objective is to design a smooth commanded input $U_c$ such that the plant output asymptotically tracks a desired reference while simultaneously satisfying $u(t)\in\mathbb U$ and $\dot u(t)\in\mathbb S$, for all time. Owing to the smoothness of both saturation layers, the recursive backstepping framework developed in the previous subsection can be naturally extended by introducing one additional design step associated with the rate layer. The augmented system, integrating the plant with the asymmetric MRS model, is expressed as
\begin{subequations}\label{eq:system_aug_MRS}
	\begin{align}
		\dot{x}_i &= f_i(\bar{x}_i) + g_i(\bar{x}_i) x_{i+1}, \quad i = 1, \ldots, n-1, \\
		\dot{x}_n &= f_n(\bar{x}_n) + g_n(\bar{x}_n) u, \\
		\dot{u} &= \mathcal{G}(u) v - \mathcal{F}(u), \label{eq:aug_u_MRS}\\
		\dot{v} &= \mathcal{G}_v(v) U_c - \mathcal{F}_v(v). \label{eq:aug_v_MRS}
	\end{align}
\end{subequations}
The coordinate transformation is defined as $\varphi_{1} \coloneqq y - y_{d}$, $\varphi_{i} \coloneqq x_{i} - \eta_{i-1}$ for $i=2,\ldots,n$, $\varrho \coloneqq u - \eta_{n}$ for the magnitude-state realization error, and $\varpi \coloneqq v - \eta_{n+1}$ for the rate state. Here, $\eta_{i}$ represents the virtual stabilizing functions designed at each step. Since the first $n$ design steps are identical to those developed for the APC problem, only the additional recursive steps associated with the APIR extension are presented in detail below.

\textit{Steps 1 to $n$}: The first $n$ steps of the backstepping procedure remain identical to the design discussed in \Cref{subsec:output_tracking_bounded_input}. Following the derivation up to Step $n$, the stabilizing function $\eta_n$ is chosen as
\begin{equation}
	\eta_n = \frac{\dot{\eta}_{n-1}-f_n(\bar{x}_n) -g_{n-1}(\bar{x}_{n-1})\varphi_{n-1}-k_n\varphi_n}{g_n(\bar{x}_n)}, \quad k_n>0. \label{eq:eta_n_mrs}
\end{equation}
Utilizing the Lyapunov function candidate $\mathcal{V}_{n} = \frac{1}{2}\sum_{j=1}^{n}\varphi_{j}^{2}$, the time derivative along the state trajectories evaluated with the substitution $u = \varrho + \eta_n$ yields
\begin{equation} \label{eq:v_n_dot_final_mrs}
	\dot{\mathcal{V}}_{n} = -\sum_{j=1}^{n} k_{j} \varphi_{j}^2 + \varphi_{n}g_{n}\left(\bar{x}_{n}\right) \varrho.
\end{equation}

\textit{Step $n+1$:} Define the error variable $\varrho \coloneqq u - \eta_n$. Differentiating $\varrho$ with respect to time and substituting \eqref{eq:MRS_u_layer} gives
\begin{align}
	\dot{\varrho} &= \dot{u} - \dot{\eta}_{n} = \mathcal{G}(u)v - \mathcal{F}(u) - \dot{\eta}_{n}. \label{eq:dot_varrho1}
\end{align}
Consider the augmented Lyapunov function candidate $\mathcal{V}_{n+1} = \mathcal{V}_{n} + \frac{1}{2}\varrho^2$. Taking the time derivative and substituting \eqref{eq:v_n_dot_final_mrs} and \eqref{eq:dot_varrho1}, one obtains
\begin{align}
	\dot{\mathcal{V}}_{n+1} &= -\sum_{j=1}^{n} k_j \varphi_j^2 + g_n(\bar{x}_n)\varphi_n\varrho
	+ \varrho\left[\mathcal{G}(u)v - \mathcal{F}(u) - \dot{\eta}_n\right].
	\label{eq:Vn1_dot_MRS}
\end{align}
We now view $v$ as the virtual control input at this step. Define $\varpi \coloneqq v - \eta_{n+1}$, where $\eta_{n+1}$ is a stabilizing function to be designed. Substituting $v = \varpi + \eta_{n+1}$ into \eqref{eq:Vn1_dot_MRS} and designing
\begin{equation}
	\eta_{n+1} = \frac{\mathcal{F}(u) + \dot{\eta}_n - g_n(\bar{x}_n)\varphi_n - k_{n+1}\varrho}{\mathcal{G}(u)}, \quad k_{n+1} > 0,
	\label{eq:eta_n1_MRS}
\end{equation}
yields
\begin{equation}  \label{eq:Vn1_dot_final_MRS}
	\dot{\mathcal{V}}_{n+1} = -\sum_{j=1}^{n} k_j \varphi_j^2 - k_{n+1}\varrho^2 + \mathcal{G}(u)\varrho\varpi.
\end{equation}

\textit{Step $n+2$:} In this final step, we design the actual commanded input $U_c$. The error variable is $\varpi \coloneqq v - \eta_{n+1}$. The time derivative of $\varpi$ can be obtained by using \eqref{eq:MRS_v_layer}, as
\begin{equation}
	\dot{\varpi} = \dot{v} - \dot{\eta}_{n+1} =  \mathcal{G}_v(v)U_c - \mathcal{F}_v(v) - \dot{\eta}_{n+1}. \label{eq:dot_varrho2}
\end{equation}
Consider the final Lyapunov function candidate $\mathcal{V}_{n+2} = \mathcal{V}_{n+1} + \frac{1}{2}\varpi^2$, whose time derivative is given by
\begin{equation}
	\dot{\mathcal{V}}_{n+2} = \dot{\mathcal{V}}_{n+1} + \varpi \dot{\varpi}. \label{eq:v_n+2_dot_1_mrs}
\end{equation}
Substituting \eqref{eq:Vn1_dot_final_MRS} and \eqref{eq:dot_varrho2} into \eqref{eq:v_n+2_dot_1_mrs} yields
\begin{align}
	\dot{\mathcal{V}}_{n+2} &= -\sum_{j=1}^{n} k_j\varphi_j^2 - k_{n+1}\varrho^2 + \varpi\left[\mathcal{G}_v(v)U_c - \mathcal{F}_v(v) - \dot{\eta}_{n+1}\right] \nonumber \\
	&\quad + \mathcal{G}(u)\varrho\varpi.
	\label{eq:Vn2_dot_MRS}
\end{align}
We design the commanded input $U_c$ as
\begin{equation}
	U_c = \frac{\mathcal{F}_v(v) + \dot{\eta}_{n+1} - \mathcal{G}(u)\varrho - k_{n+2}\varpi}{\mathcal{G}_v(v)}, \quad k_{n+2} > 0,
	\label{eq:U_c_MRS}
\end{equation}
Substituting \eqref{eq:U_c_MRS} into \eqref{eq:Vn2_dot_MRS} and simplifying yields
\begin{equation}
	\dot{\mathcal{V}}_{n+2} = -\sum_{j=1}^{n} k_j\varphi_j^2 - k_{n+1}\varrho^2 - k_{n+2}\varpi^2,
	\label{eq:Vn2_dot_final_MRS}
\end{equation}
which is negative definite on a compatible domain where both realization gains are uniformly positive.
For the closed-loop two-layer design, let
\begin{align}
	\mathcal{N}_u
	&\coloneqq
	\mathcal{F}(u)+\dot{\eta}_n-g_n(\bar{x}_n)\varphi_n-k_{n+1}\varrho,
	\label{eq:Nu_two_layer}\\
	\mathcal{N}_v
	&\coloneqq
	\mathcal{F}_v(v)+\dot{\eta}_{n+1}-\mathcal{G}(u)\varrho-k_{n+2}\varpi.
	\label{eq:Nv_two_layer}
\end{align}
For a sublevel $\Omega_c^{(2)}$ of $\mathcal{V}_{n+2}$, call $c$
\emph{two-layer compatible} if the desired virtual inputs remain in compact
interior subsets of $\mathbb{U}$ and $\mathbb{V}$ and there exist
$\xi_u,\xi_v>0$ such that
\begin{align}
	\sup_{\Omega_c^{(2)}\times\mathbb{U}_{\xi_u,0}}
	\frac{|\mathcal{N}_u|}{\mathcal{G}(u)}
	&\leq\xi_u,
	\label{eq:two_layer_compat_u}\\
	\sup_{\Omega_c^{(2)}\times\mathbb{U}_{\xi_u,0}\times\mathbb{V}_{\xi_v,0}}
	\frac{|\mathcal{N}_v|}{\mathcal{G}_v(v)}
	&\leq\xi_v.
	\label{eq:two_layer_compat_v}
\end{align}
The suprema also extend over $t\geq0$. These inequalities provide explicit
positive lower bounds for both realization gains and close the continuation
argument at both layers.
\begin{theorem}[Compatible tracking under cascaded APIR]\label{thm:MRS_tracking}
	Consider \eqref{eq:system_aug_MRS} under \Cref{assum:desired_traj,assum:control_effectiveness,assum:rate_feasibility}. If the initial error state lies in a two-layer-compatible Lyapunov sublevel, $u(0)\in\mathbb{U}$, and $v(0)\in\mathbb{V}$, then the closed-loop solution is forward complete, all transformed errors converge to zero, all signals remain bounded, and $u(t)\in\mathbb{U},~ v(t)\in\mathbb{V},~\dot{u}(t)\in\mathbb{S},~ t\geq0$.
\end{theorem}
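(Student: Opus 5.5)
The plan is to combine the Lyapunov identity \eqref{eq:Vn2_dot_final_MRS} with a two-layer continuation (bootstrap) argument, patterned on the proof of \Cref{thm:global_tracking}. Then I will invoke \Cref{thm:MRS_sat} for the admissibility claims.

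\emph{Setup.} Let $T^*$ be the supremum of times $T$ such that the solution exists on $[0,T]$ and satisfies
\begin{equation*}
\lvert v(t)\rvert\le\xi_u \quad\text{and}\quad \lvert U_c(t)\rvert\le\xi_v \quad\text{for } t\in[0,T].
\end{equation*}
Continuity of the closed-loop vector field near $t=0$, together with the compatibility inequalities evaluated at the initial condition, gives $T^*>0$. Strictly, one either uses a slightly enlarged $\xi$ or invokes \eqref{eq:two_layer_compat_u}--\eqref{eq:two_layer_compat_v} directly at $t=0$, since $\mathbf{X}(0)\in\Omega_c^{(2)}$.

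\emph{Interiority on $[0,T^*)$.} Here \Cref{thm:APIR_admissibility} applies to each layer. For the $v$-layer with command bound $\xi_v$, it gives $v(t)\in\mathbb{V}_{\xi_v,0}\subset\mathbb{V}$ and $\mathcal{G}_v(v)\ge\underline{\mathcal{G}}_{v,\xi_v,0}>0$. For the $u$-layer, with $v$ acting as the command bounded by $\xi_u$, it gives $u(t)\in\mathbb{U}_{\xi_u,0}\subset\mathbb{U}$ and $\mathcal{G}(u)\ge\underline{\mathcal{G}}_{\xi_u,0}>0$. Both commanded maps \eqref{eq:eta_n1_MRS} and \eqref{eq:U_c_MRS} are therefore well defined, and \eqref{eq:Vn2_dot_final_MRS} holds. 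This yields
\begin{equation*}
\dot{\mathcal{V}}_{n+2}\le-2\vartheta\,\mathcal{V}_{n+2}, \qquad \vartheta=\min_j k_j,
\end{equation*}
so $\Omega_c^{(2)}$ is invariant on this interval and the errors decay exponentially.

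\emph{Closing the bootstrap.} This is the step I expect to be the main obstacle, because the two layers are coupled. The bound on $v$ is not imposed directly. Instead it follows from $v=\varpi+\eta_{n+1}$ with $\eta_{n+1}=\mathcal{N}_u/\mathcal{G}(u)$, and \eqref{eq:two_layer_compat_u} only bounds the ratio $\lvert\eta_{n+1}\rvert$. To recover a bound on $v$ itself I would argue as follows. Since \eqref{eq:two_layer_compat_u} bounds $\lvert\eta_{n+1}\rvert$ by $\xi_u$ and invariance bounds $\lvert\varpi\rvert\le\sqrt{2c}$, the $u$-layer command satisfies $\lvert v\rvert\le\xi_u+\sqrt{2c}$. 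I would therefore read the two-layer compatibility definition with $\xi_u$ chosen to dominate this sum; equivalently, I would replace $\xi_u$ by $\xi_u+\sqrt{2c}$ in the set $\mathbb{U}_{\xi_u,0}$ and note this is what compatibility asks for. Alternatively, one can simply use that $v\in\mathbb{V}$ is bounded by $V^*$ and take $\xi_u\ge V^*$.

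\emph{Continuation and bounded signals.} Given that bound, \eqref{eq:two_layer_compat_v} yields $\lvert U_c\rvert=\lvert\mathcal{N}_v\rvert/\mathcal{G}_v(v)\le\xi_v$ on $[0,T^*)$. This is evaluated on $\Omega_c^{(2)}\times\mathbb{U}_{\xi_u,0}\times\mathbb{V}_{\xi_v,0}$, where boundedness of $\dot\eta_{n+1}$ comes from \Cref{lem:recursive_boundedness} extended by one step. The extension uses the positive lower bound on $\mathcal{G}$ and the $\mathcal{C}^{\gamma-1}$ regularity from \Cref{prop:smoothness_order}, which needs $\gamma\ge2$ for one differentiation. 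The bounds thus hold with margin up to $T^*$. All states remain in a compact set, so local Lipschitzness and a standard continuation argument force $T^*=\infty$. Boundedness of the plant states follows inductively from bounded errors and virtual controls, exactly as in \Cref{thm:global_tracking}.

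\emph{Conclusion.} Finally, with $\lvert U_c\rvert\le\xi_v$ for all $t\ge0$, $u(0)\in\mathbb{U}$ and $v(0)\in\mathbb{V}$, \Cref{thm:MRS_sat} (with $\Xi=\xi_v$) directly gives $u(t)\in\mathbb{U}$, $v(t)\in\mathbb{V}$ and $\dot u(t)\in\mathbb{S}$ for all $t\ge0$. This uses \Cref{assum:rate_feasibility} and the bounds \eqref{eq:rate_layer_bounds}. Exponential decay of $\mathcal{V}_{n+2}$ gives convergence of all transformed errors to zero.
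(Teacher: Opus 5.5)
Your outline follows the paper's own proof step for step: exponential decay from \eqref{eq:Vn2_dot_final_MRS}, a continuation argument at both layers driven by \eqref{eq:two_layer_compat_u}--\eqref{eq:two_layer_compat_v}, \Cref{thm:MRS_sat} with $\Xi=\xi_v$, and recursive boundedness of the remaining signals. The paper, however, only asserts that the two inequalities ``close the command-boundedness continuation argument at both layers''. The coupling you flag is a real hole that its proof does not address. \eqref{eq:two_layer_compat_u} bounds the virtual control $\eta_{n+1}=\mathcal{N}_u/\mathcal{G}(u)$. Confining $u$ to $\mathbb{U}_{\xi_u,0}$ through \Cref{thm:APIR_admissibility} instead needs a bound on the actual inner command $v=\varpi+\eta_{n+1}$. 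At an endpoint $u=u_{\xi_u}^+$ of $\mathbb{U}_{\xi_u,0}$ the stated hypotheses only give $\dot u=\mathcal{G}(u)(\eta_{n+1}+\varpi)-\mathcal{F}(u)\le\mathcal{G}(u)\varpi$, which is positive whenever $\varpi>0$. The trajectory can therefore leave $\mathbb{U}_{\xi_u,0}$, after which neither inequality applies. Your bootstrap condition $\lvert v\rvert\le\xi_u$ is not even available at $t=0$: compatibility bounds $\eta_{n+1}(0)$, not $v(0)\in\mathbb{V}$.

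Neither of your patches follows from the theorem as stated, so present them as added hypotheses, not as readings of the definition.
\begin{itemize}
\item The bound $\lvert v\rvert\le\xi_u+\sqrt{2c}$ only places $u$ in the larger interval $\mathbb{U}_{\xi_u+\sqrt{2c},0}$, where \eqref{eq:two_layer_compat_u}--\eqref{eq:two_layer_compat_v} say nothing. It closes only if both inequalities are restated over that interval.
\item You cannot ``take'' $\xi_u\ge V^*$: $\xi_u$ is fixed by the hypothesis. Compatibility at one level also does not imply it at a larger one, because $\mathbb{U}_{\xi,0}$ expands toward the actuator limits, where $\mathcal{G}$ vanishes.
\end{itemize}

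A clean repair is an explicit inter-layer link such as $\mathbb{V}_{\xi_v,0}\subseteq[-\xi_u,\xi_u]$, which is implied by $\xi_u\ge\max\{\lvert v_{\min}\rvert,v_{\max}\}$. Then run the bootstrap as forward invariance of the closed set $\Omega_c^{(2)}\times\mathbb{U}_{\xi_u,0}\times\mathbb{V}_{\xi_v,0}$. On this set:
\begin{itemize}
\item \eqref{eq:two_layer_compat_v} gives $\lvert U_c\rvert\le\xi_v$ pointwise, and the link gives $\lvert v\rvert\le\xi_u$;
\item the endpoints of both intervals are therefore inward-pointing, exactly as in the proof of \Cref{thm:APIR_admissibility};
\item $\dot{\mathcal{V}}_{n+2}\le0$ on the sublevel boundary.
\end{itemize}
This also removes the strict-versus-nonstrict difficulty in your $T^*>0$ step. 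It shows that the inner-layer command bound actually comes from the $v$-layer confinement, not from \eqref{eq:two_layer_compat_u}.

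One smaller point. Extending \Cref{lem:recursive_boundedness} by one step to bound $\dot\eta_{n+1}$ uses $y_d^{(n+2)}$, one order beyond \Cref{assum:desired_traj}, because $U_c$ contains $\dot\eta_{n+1}$. Either add that assumption or rely on the finite supremum in \eqref{eq:two_layer_compat_v}, which already encodes this bound on the compatible set.
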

\begin{proof}
	Equation \eqref{eq:Vn2_dot_final_MRS} gives exponential decay of the augmented Lyapunov function on the selected sublevel. The two self-consistency inequalities provide positive lower bounds for $\mathcal{G}(u)$ and $\mathcal{G}_v(v)$ and close the command-boundedness continuation argument at both layers. The module result in \Cref{thm:MRS_sat} then yields magnitude, intermediate-state, and rate admissibility. Boundedness of the remaining signals follows recursively from the virtual-control definitions.
\end{proof}
	Compared with the single-layer APIR, the two-layer construction introduces one additional backstepping step and the parameter triplet $(q_1,q_2,\mu)$. The interpretations in \Cref{subsec:param_sensitivity} carry over to the rate layer: $q_1$ controls the constant-command response time, $q_2$ affects the equilibrium margin from $v_{\min}$ and $v_{\max}$, and $\mu$ governs rolloff, regularity, and equilibrium utilization. A practical sequential design first selects the magnitude-layer parameters to satisfy the actuator-magnitude and rate-feasibility conditions, and then selects the rate-layer parameters. The tuning is interpretable, but the two layers are coupled through the feasible interval \eqref{eq:rate_layer_bounds} and through the closed-loop compatibility inequalities.
	
\subsubsection{General $r$-Layer Cascaded APIR}
\label{subsec:general_r_APIR}
The two-layer APIR enforces the physical magnitude constraint $u\in\mathbb{U}$ and rate constraint $\dot{u}\in\mathbb{S}$. The following $r$-layer construction generalizes the realization architecture by constraining every internal APIR state and every layer-output rate. These statements do not, by themselves, identify $w_k$ with $u^{(k)}$ for $k\geq2$. Such a physical higher-derivative interpretation requires an explicit recursive calculation of derivatives.  

To this end, let $w_0 \coloneqq u$ and introduce $r$ auxiliary APIR states $w_1, w_2, \ldots, w_r \in \mathbb{R}$. For each $k = 0, 1, \ldots, r$, define 
\begin{equation*}
	\mathcal{S}_k(w_k)\coloneqq \sigma_k\left[1-\left(\frac{w_k}{w_{k,\max}}\right)^{\gamma_k}\right] + (1-\sigma_k)\left[1-\left(\frac{w_k}{w_{k,\min}}\right)^{\gamma_k}\right],
\end{equation*}
where $\sigma_k = 1$ if $w_k > 0$ and $\sigma_k = 0$ otherwise,
$\gamma_k = 2n_k$ with $n_k \in \mathbb{N}$ is an even integer, and
$w_{k,\min} < 0 < w_{k,\max}$ are the prescribed bounds for the $k$-th
layer. The corresponding realization gain and restoring functions are
\begin{equation}\label{eq:GkFk_def}
	\mathcal{G}_k(w_k) \coloneqq p_{1,k}\mathcal{S}_k(w_k), 
	\mathcal{F}_k(w_k) \coloneqq p_{1,k}p_{2,k}w_k,
\end{equation}
with $p_{1,k}, p_{2,k} \in \mathbb{R}_{+}$. The admissible set for each
layer state is denoted $\mathbb{W}_k \coloneqq (w_{k,\min}, w_{k,\max})$,
with $\mathbb{W}_0 = \mathbb{U}$. The $r$-layer cascaded APIR is then
defined as
\begin{subequations}\label{eq:r_layer_APIR}
	\begin{align}
		\dot{w}_{k-1}&= \mathcal{G}_{k-1}(w_{k-1})w_k - \mathcal{F}_{k-1}(w_{k-1}), k = 1, \ldots, r, \label{eq:r_layer_APIR_inner} \\
		\dot{w}_r
		&= \mathcal{G}_r(w_r)U_c - \mathcal{F}_r(w_r),\label{eq:r_layer_APIR_outer}
	\end{align}
\end{subequations}
where $U_c \in \mathbb{R}$ is the commanded input to the outermost APIR layer. 

Setting $r=0$ recovers the single-layer \eqref{eq:single_APIR}, and setting $r=1$ with $w_0=u$, $w_1=v$, $(p_{1,0},p_{2,0},\gamma_0) = (p_1,p_2,\gamma)$, and $(p_{1,1},p_{2,1},\gamma_1) = (q_1,q_2,\mu)$ recovers the two-layer cascaded APIR \eqref{eq:cascaded_APIR}.

	Note that each layer in \eqref{eq:r_layer_APIR} is structurally identical to the single-layer APIR \eqref{eq:single_APIR}, with the $k$-th layer state $w_k$ acting as the commanded input to Layer $k-1$. The continuity property established in \Cref{rem:continuity}, the smoothness characterization of \Cref{prop:smoothness_order}, and the parametric sensitivity analysis of \Cref{subsec:param_sensitivity} therefore apply to each layer independently, with the parameter triplet $(p_{1,k}, p_{2,k}, \gamma_k)$ governing layer $k$ in the same interpretable manner as $(p_1,p_2,\gamma)$ governs the single-layer APIR.

The following assumptions characterize the admissibility and smoothness requirements of the generalized cascaded APIR.
\begin{assumption}[Initial admissibility]
	\label{assum:extended_IC}
	The initial values of all auxiliary APIR states satisfy
	\begin{equation}\label{eq:extended_IC}
		w_k(0) \in \mathbb{W}_k\coloneqq (w_{k,\min}, w_{k,\max}), \quad k = 0, 1, \ldots, r.
	\end{equation}
\end{assumption}
\begin{remark}
	The auxiliary states $w_1, \ldots, w_r$ are internal variables of the cascaded APIR and are not physical plant states. Since $0\in\mathbb W_k$ for every layer, they may always be initialized at the origin. Therefore, \Cref{assum:extended_IC} imposes no practical restriction on the initial state of the plant.
\end{remark}
\begin{assumption}[Layer smoothness]
	\label{assum:layer_smoothness}
	The smoothness exponent of each layer satisfies
	\begin{equation}\label{eq:smoothness_req}
		\gamma_k \geq r - k, \, k = 0, 1, \ldots, r-1,
	\end{equation}
	with each $\gamma_k$ an even integer.
\end{assumption}
\begin{remark}
	\Cref{prop:smoothness_order} guarantees that each realization layer possesses sufficient differentiability for the recursive analysis of the cascaded APIR. Since every $\gamma_k$ is a user-selected design parameter, the required smoothness can always be achieved by suitably choosing $\gamma_k$. Therefore, this assumption is compatible with the standard smoothness requirement imposed by recursive backstepping designs and introduces no additional restriction in practice.
\end{remark}
\begin{assumption}[Recursive derivative feasibility]
	\label{assum:consecutive_feasibility}
	For each $k = 1, \ldots, r$, the prescribed output-rate limits of layer $k-1$, denoted $\dot{w}_{k-1,\max} > 0$ and $\lvert\dot{w}_{k-1,\min}\rvert > 0$, satisfy 
	\begin{align}
		\dot{w}_{k-1,\max} &> p_{1,k-1}p_{2,k-1}\lvert w_{k-1,\min}\rvert, \nonumber\\
		\lvert\dot{w}_{k-1,\min}\rvert &> p_{1,k-1}p_{2,k-1}w_{k-1,\max}. \label{eq:consec_feasibility}
	\end{align}
\end{assumption}
The admissible bounds of Layer $k$ are then chosen under \Cref{assum:consecutive_feasibility} as
\begin{align}
	w_{k,\max}&\coloneqq\frac{\dot{w}_{k-1,\max}}{p_{1,k-1}} - p_{2,k-1}\lvert w_{k-1,\min}\rvert,\nonumber\\
	w_{k,\min}&\coloneqq-\frac{\lvert\dot{w}_{k-1,\min}\rvert}{p_{1,k-1}}+ p_{2,k-1}w_{k-1,\max}, \label{eq:general_layer_bounds}
\end{align}
which ensures $w_{k,\max} > 0$ and $w_{k,\min} < 0$, ensuring that each APIR layer in the cascade is well-posed.
\begin{remark}
	\Cref{assum:consecutive_feasibility} recursively exposes the compatibility between each layer's admissible range and the requested output-rate bounds. Reducing $p_{2,k-1}$ can enlarge feasibility, but it also changes the interior margin and gain conditioning of Layer $k-1$. The conditions should therefore be treated as explicit tuning tradeoffs.
\end{remark}
The following theorem establishes the fundamental properties of the proposed cascaded APIR. 
\begin{theorem}[Admissibility of the r-layer APIR]
	\label{thm:r_layer_cascade}
	Consider the $r$-layer cascaded APIR \eqref{eq:r_layer_APIR} under \Cref{assum:extended_IC,assum:layer_smoothness,assum:consecutive_feasibility}. Suppose the commanded input satisfies $\lvert U_c(t) \rvert \leq \Xi < \infty$ for all $t \geq 0$. Then, the following holds for all $t \geq 0$
	\begin{itemize}
		\item \emph{(Layer admissibility)} The admissible set of each APIR layer is forward invariant, that is, $w_k(t) \in \mathbb{W}_k$ for every $k = 0, 1, \ldots, r$.
		\item \emph{(Layer-output rate admissibility)} The time derivative of each intermediate APIR state is forward invariant within its prescribed rate set: $\dot{w}_{k-1}(t) \in \mathbb{S}_{k-1}$ for every $k = 1, \ldots, r$, where $\mathbb{S}_{k-1} =\left(\dot{w}_{k-1,\min}, \dot{w}_{k-1,\max}\right)$.
	\end{itemize}
\end{theorem}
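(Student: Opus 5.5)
The plan is a backward induction down the cascade, from the outermost layer $k=r$ to the physical layer $k=0$. This mirrors the three-part argument of \Cref{thm:MRS_sat}.

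\emph{Base step.} Layer $r$ in \eqref{eq:r_layer_APIR_outer} is an instance of \eqref{eq:single_APIR} under the substitutions
\begin{equation*}
u\mapsto w_r,\quad (p_1,p_2,\gamma)\mapsto(p_{1,r},p_{2,r},\gamma_r),\quad (u_{\min},u_{\max})\mapsto(w_{r,\min},w_{r,\max}),\quad u_c\mapsto U_c .
\end{equation*}
Since $\lvert U_c(t)\rvert\le\Xi$ and $w_r(0)\in\mathbb{W}_r$ by \Cref{assum:extended_IC}, \Cref{thm:APIR_admissibility} gives existence for all $t\ge0$ and confinement of $w_r(t)$ to a compact interval $\mathbb{W}_{r,\Xi,0}\subset\mathbb{W}_r$.

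\emph{Inductive step.} Suppose $w_k(t)\in\mathbb{W}_k$ for all $t\ge0$. Because $\mathbb{W}_k$ is bounded, $\lvert w_k(t)\rvert\le\max\{\lvert w_{k,\min}\rvert,w_{k,\max}\}$. In \eqref{eq:r_layer_APIR_inner}, $w_k$ plays the role of a bounded command for layer $k-1$. It is continuous, being a state of an ODE with locally Lipschitz right-hand side (\Cref{rem:continuity}), so it is in particular piecewise continuous. \Cref{thm:APIR_admissibility} therefore applies with $\xi=\max\{\lvert w_{k,\min}\rvert,w_{k,\max}\}$ and $w_{k-1}(0)\in\mathbb{W}_{k-1}$, and yields $w_{k-1}(t)\in\mathbb{W}_{k-1}$ for all $t$. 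Descending to $k=1$ gives layer admissibility for every $k=0,\ldots,r$.

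Forward completeness of the whole cascade needs some care. Take the maximal existence interval of the coupled system. On it, each layer is confined to a compact set by the argument above. The full state therefore cannot blow up, so the interval is infinite.

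\emph{Rate admissibility.} For each $k=1,\ldots,r$, write
\begin{equation*}
\dot w_{k-1}=p_{1,k-1}\bigl[\mathcal{S}_{k-1}(w_{k-1})\,w_k-p_{2,k-1}w_{k-1}\bigr].
\end{equation*}
Layer admissibility gives $\mathcal{S}_{k-1}(w_{k-1})\in[0,1]$. Together with $w_{k,\min}<w_k<w_{k,\max}$ and $w_{k,\min}<0<w_{k,\max}$, this gives
\begin{equation*}
w_{k,\min}<\mathcal{S}_{k-1}w_k<w_{k,\max},
\end{equation*}
where the case $\mathcal{S}_{k-1}w_k=0$ is covered by the sign conditions. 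Likewise $w_{k-1,\min}<w_{k-1}<w_{k-1,\max}$ gives
\begin{equation*}
-p_{2,k-1}w_{k-1,\max}<-p_{2,k-1}w_{k-1}<p_{2,k-1}\lvert w_{k-1,\min}\rvert.
\end{equation*}
Adding these and substituting the bound definitions \eqref{eq:general_layer_bounds} makes the $p_{2,k-1}$ terms cancel exactly, leaving $\dot w_{k-1,\min}<\dot w_{k-1}<\dot w_{k-1,\max}$. This is the same telescoping computation as in \Cref{thm:MRS_sat}. \Cref{assum:consecutive_feasibility} is used only to ensure $w_{k,\min}<0<w_{k,\max}$, so that each layer is well defined.

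\emph{Expected difficulty.} I expect no step to be a real obstacle. The only delicate points are bookkeeping. First, one must confirm that $w_k$ is an admissible command signal for \Cref{thm:APIR_admissibility}. Second, one must handle the coupled existence interval when the induction is run on the full cascade. \Cref{assum:layer_smoothness} is not needed for invariance or for the rate bounds. I would note this explicitly and reserve it for the later recursive differentiation, where it ensures the vector field is $\mathcal{C}^{r-k}$ as required by \Cref{prop:smoothness_order}.
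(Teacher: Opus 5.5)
Your proposal is correct and follows essentially the same route as the paper. Both arguments run a backward induction from the outermost layer, applying \Cref{thm:APIR_admissibility} with the next layer's state as a bounded command, and both obtain the rate bounds through the same cancellation of the $p_{2,k-1}$ terms via \eqref{eq:general_layer_bounds}. Your additions are minor but welcome refinements: the treatment of the coupled maximal existence interval, the check that $w_k$ is an admissible command signal, the use of $\mathcal{S}_{k-1}\in[0,1]$ (the paper's $(0,1)$ fails at $w_{k-1}=0$), and the observation that \Cref{assum:layer_smoothness} is never used.
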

\begin{proof}
	The proof is divided into two parts.
	
	\textit{Part (i): Layer admissibility.} We prove this by backward induction on layer index $k$.
	
	\textit{Base case ($k = r$, outermost APIR layer):} Layer $r$ is governed by \eqref{eq:r_layer_APIR_outer}, which is structurally identical to the single-layer APIR \eqref{eq:single_APIR} under the substitutions $u \mapsto w_r$, $u_c \mapsto U_c$, $p_1 \mapsto p_{1,r}$, $p_2 \mapsto p_{2,r}$, $\gamma \mapsto \gamma_r$, $u_{\min} \mapsto w_{r,\min}$, and $u_{\max} \mapsto w_{r,\max}$. Since $|U_c(t)| \leq \Xi < \infty$ for all $t \geq 0$ and $w_r(0) \in \mathbb{W}_r$ by \Cref{assum:extended_IC} and \Cref{thm:APIR_admissibility} applied to \eqref{eq:r_layer_APIR_outer} establishes $w_r(t) \in \mathbb{W}_r$ for all $t \geq 0$.
	
	\textit{Inductive step:}
	Suppose $w_m(t) \in \mathbb{W}_m$ for some $1 \leq m \leq r$ and all $t \geq 0$. Since $\mathbb{W}_m = (w_{m,\min}, w_{m,\max})$ is a bounded open interval, there exists a finite constant
	\begin{equation}\label{eq:Wm_star}
		W_m^* \coloneqq \max\left(\lvert w_{m,\min}\rvert, w_{m,\max}\right) < \infty,
	\end{equation}
	such that $\lvert w_m(t)\rvert \leq W_m^*$ for all $t \geq 0$. Layer $m-1$ is governed by \eqref{eq:r_layer_APIR_inner} with $k = m$, which has the same structure as the single-layer APIR \eqref{eq:single_APIR}. Applying \Cref{thm:APIR_admissibility} to layer $m-1$ with the bounded commanded input $|w_m(t)| \leq W_m^*$ and initial condition $w_{m-1}(0) \in \mathbb{W}_{m-1}$ (from \Cref{assum:extended_IC}) yields $w_{m-1}(t) \in \mathbb{W}_{m-1}$ for all $t \geq 0$. Applying this argument successively for $m = r, r-1, \ldots, 1$ establishes $w_k(t) \in \mathbb{W}_k$ for all $k = 0, 1, \ldots, r$ and all $t \geq 0$. Thus, the admissible set of every APIR layer in the cascade is forward invariant.
	
	\textit{Part (ii): Layer-output rate admissibility.}
	Set $k \in \{1, \ldots, r\}$. By part (i), $w_{k-1}(t) \in \mathbb{W}_{k-1}$ and $w_k(t) \in \mathbb{W}_k$ for all $t \geq 0$. From  \eqref{eq:GkFk_def} and \eqref{eq:r_layer_APIR_inner},
	\begin{align}
		\dot{w}_{k-1}
		&= \mathcal{G}_{k-1}(w_{k-1})w_k - \mathcal{F}_{k-1}(w_{k-1}), \nonumber\\
		&= p_{1,k-1}\left[\mathcal{S}_{k-1}(w_{k-1})w_k - p_{2,k-1}w_{k-1}\right]. \label{eq:layer_k_rate_dyn}
	\end{align}
	Note that the function satisfies $\mathcal{S}_{k-1}(w_{k-1}) \in (0, 1)$ for all $w_{k-1} \in \mathbb{W}_{k-1}$. We establish upper and lower bounds on $\dot w_{k-1}$.
	
	\textit{Upper bound of $\dot{w}_{k-1}$:}
	Since $\mathcal{S}_{k-1}(w_{k-1}) < 1$ and $w_k(t) < w_{k,\max}$, the term $\mathcal{S}_{k-1}(w_{k-1})w_k < w_{k,\max}$. As $w_{k-1}(t) > w_{k-1,\min}$ and $p_{2,k-1} > 0$, we have $-p_{2,k-1}w_{k-1} < p_{2,k-1}|w_{k-1,\min}|$. Substituting these into \eqref{eq:layer_k_rate_dyn}, one may obtain the upper bound as
	\begin{equation}\label{eq:rate_upper}
		\dot{w}_{k-1} < p_{1,k-1}\left(w_{k,\max} + p_{2,k-1}\lvert w_{k-1,\min}\rvert \right) = \dot{w}_{k-1,\max}.
	\end{equation}
	
	\textit{Lower bound of $\dot{w}_{k-1}$:}
	If $w_k \geq 0$, then $\mathcal{S}_{k-1}(w_{k-1})w_k \geq 0 > w_{k,\min}$. If $w_k < 0$, then since $\mathcal{S}_{k-1}(w_{k-1}) \in (0,1)$ and $w_k > w_{k,\min}$, we have $\mathcal{S}_{k-1}(w_{k-1})w_k > w_{k,\min}$ (the inequality reverses because $w_{k,\min} < 0$ and $\mathcal{S}_{k-1} < 1$). In both cases, $\mathcal{S}_{k-1}(w_{k-1})w_k > w_{k,\min}$. Since $w_{k-1}(t) < w_{k-1,\max}$, we have $-p_{2,k-1}w_{k-1} > -p_{2,k-1}w_{k-1,\max}$. Substituting into \eqref{eq:layer_k_rate_dyn},
	\begin{equation}\label{eq:rate_lower}
		\dot{w}_{k-1}> p_{1,k-1}\left(w_{k,\min} - p_{2,k-1}w_{k-1,\max}\right)= \dot{w}_{k-1,\min}.
	\end{equation}
	Combining \eqref{eq:rate_upper} and \eqref{eq:rate_lower} establishes $\dot{w}_{k-1}(t) \in \mathbb{S}_{k-1}$ for every $k=1,\ldots,r$. This completes the proof.
\end{proof}

\section{Numerical Simulations}\label{sec:simulations}
We evaluate the proposed APC framework from four complementary perspectives: tracking under asymmetric actuator-magnitude limits, simultaneous input--output admissibility, enforcement of actuator magnitude and rate constraints, and numerical conditioning of the APIR-based command maps. The purpose of our study is not merely to show convergence of the tracking error, but to examine how the closed-loop trajectories use the available asymmetric control authority while retaining a nonzero distance from the prescribed safety boundaries. In the simulations, we evaluate both constraint satisfaction and the nonsingularity mechanism underlying the compatibility conditions in \Cref{thm:global_tracking,thm:semi_global_tracking}.

We consider
\begin{align*}
	\dot{x}_{1}=&~0.1x_{1}^{2}+x_{2},\\
	\dot{x}_{2}=&~0.1x_{1}x_{2}-0.2x_{1}+(1+x_{1}^{2})u,
\end{align*}
with $u_{\min}=-0.5$, $u_{\max}=0.75$, and
$y_d(t)=0.2+0.3\sin(t)$. Unless stated otherwise,
$p_1=100$, $p_2=0.1$, $\gamma=2$, and
$k_1=k_2=k_3=2$. The APIR state is initialized at $u(0)=0$.
For every case, we record the minimum actuator margin, the minimum realization gain, the peak commanded input, and the final tracking error. These trajectory-level diagnostics do not replace the set-based compatibility certificate in \eqref{eq:self_consistency}, but they verify that the simulated trajectories remain in a nonsingular compact interior region. The initial conditions used in the reported cases are summarized in \Cref{tab:simulation_cases}.
\begin{table}[!t]
	\centering
	\caption{Simulation cases and initial conditions.}
	\label{tab:simulation_cases}
	\resizebox{\linewidth}{!}{%
		\begin{tabular}{lll}
			\toprule
			Scenario & Case labels & Initial plant states $\left(x_{1}(0),x_{2}(0)\right)$ \\
			\midrule
			Asymmetric magnitude constraint
			& $C_{1}$, $C_{2}$, $C_{3}$
			& $(0,0)$, $(-0.2,0.25)$, $(0.6,-0.1)$ \\
			Time-varying output-safe tracking
			& $I_{1}$, $I_{2}$, $I_{3}$
			& $(0.1,0.538)$, $(0.25,0.174)$, $(0.4,-0.156)$ \\
			Constant output-safe tracking
			& $S_{1}$, $S_{2}$
			& $(0.1,0.509)$, $(0.35,-0.03)$ \\
			Magnitude--rate constraints
			& $R_{1}$, $R_{2}$, $R_{3}$
			& $(0.05,0.6)$, $(0.2,0.296)$, $(0.35,-0.012)$ \\
			\bottomrule
	\end{tabular}}
\end{table}
For interpretation, we define the trajectory-wise safety margins
\begin{align*}
	m_{u}(t) =&~ \min\left\{u(t)-u_{\min},\,u_{\max}-u(t)\right\},\\
	m_{y}(t) =&~ \min\left\{y(t)-\underline{y}(t),\,
	\overline{y}(t)-y(t)\right\},\\
	m_{\dot{u}}(t) =&~ \min\left\{\dot{u}(t)-\dot{u}_{\min},\,
	\dot{u}_{\max}-\dot{u}(t)\right\}.
\end{align*}
Positive values of these quantities certify strict interiority along a simulated trajectory. We also monitor $\mathcal{G}(u)$ and, for the cascaded case, $\mathcal{G}_{v}(v)$. Positive lower bounds on these gains quantify the conditioning of the corresponding commanded-input maps.

We first demonstrate the role of the APIR in output tracking under asymmetric actuator-magnitude limits. The trajectories show two distinct closed-loop signals with different physical roles. The backstepping law produces the commanded input $u_{c}$ required by the recursive stabilization argument, whereas the APIR generates the realized plant input $u$. When the command demand is moderate, the APIR behaves nearly linearly. As the demand increases, its effective gain decreases, and the realized input approaches the relevant actuator boundary smoothly. Consequently, the available positive and negative authorities are used according to their actual unequal limits, rather than being reduced to a conservative symmetric interval.

The most demanding trajectory approaches the upper actuator boundary while retaining a strictly positive input margin. At the same time, the minimum sampled value of $\mathcal{G}(u)$ remains positive, so the commanded-input map does not encounter the singularity associated with a vanishing APIR gain. The tracking errors for all three cases decay to zero after the transient. One may observe that, in addition to guaranteeing $u$ is bounded, the APIR converts a potentially larger command demand into a continuously differentiable, strictly interior plant input while preserving sufficient control authority for tracking.
\begin{figure*}[!ht]
	\centering
	\begin{subfigure}[t]{0.245\linewidth}
		\centering
		\includegraphics[width=\linewidth]{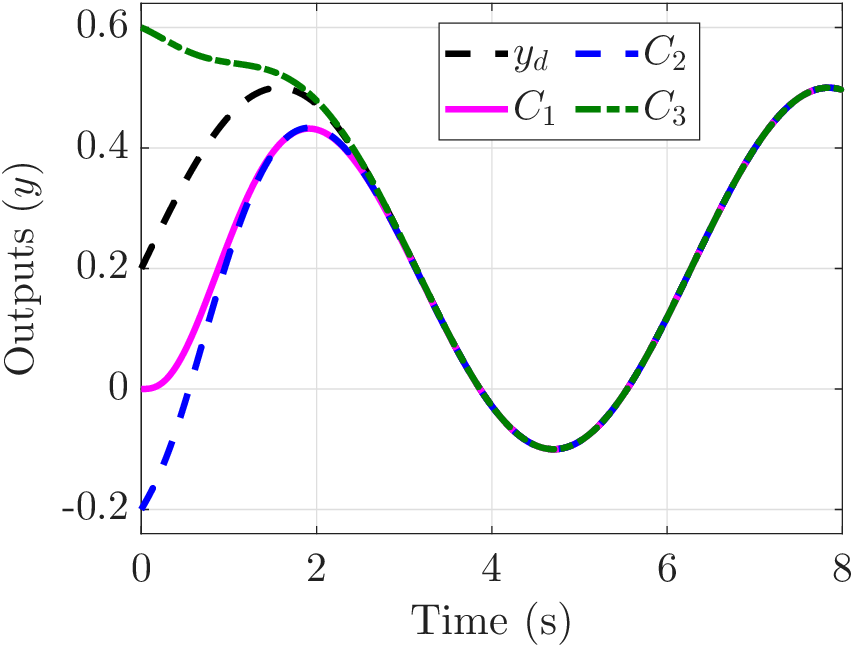}
		\caption{Output trajectories.}
		\label{fig:Bounded_Input_output}
	\end{subfigure}%
	\begin{subfigure}[t]{0.245\linewidth}
		\centering
		\includegraphics[width=\linewidth]{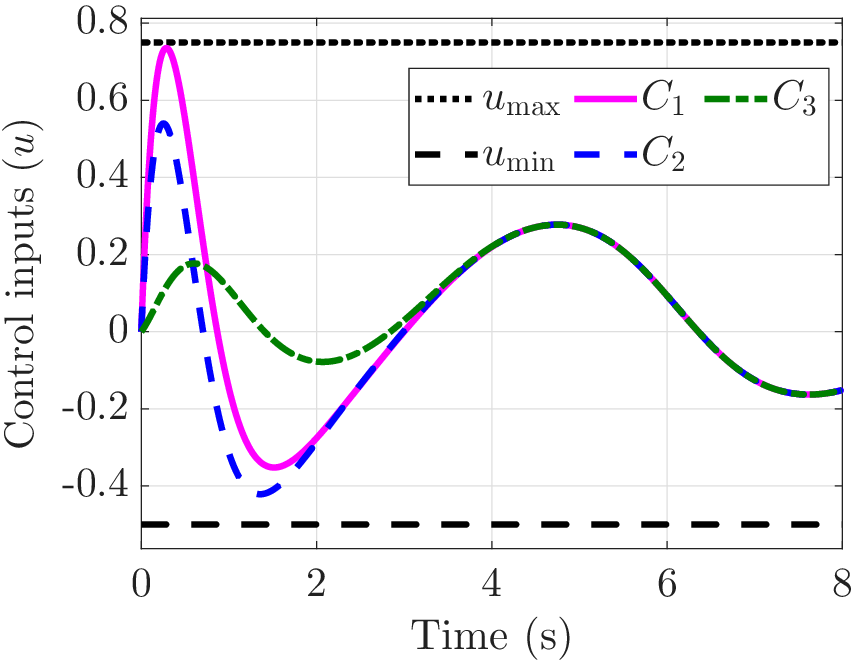}
		\caption{Control inputs.}
		\label{fig:Bounded_Input_input}
	\end{subfigure}
	\begin{subfigure}[t]{0.245\linewidth}
		\centering
		\includegraphics[width=\linewidth]{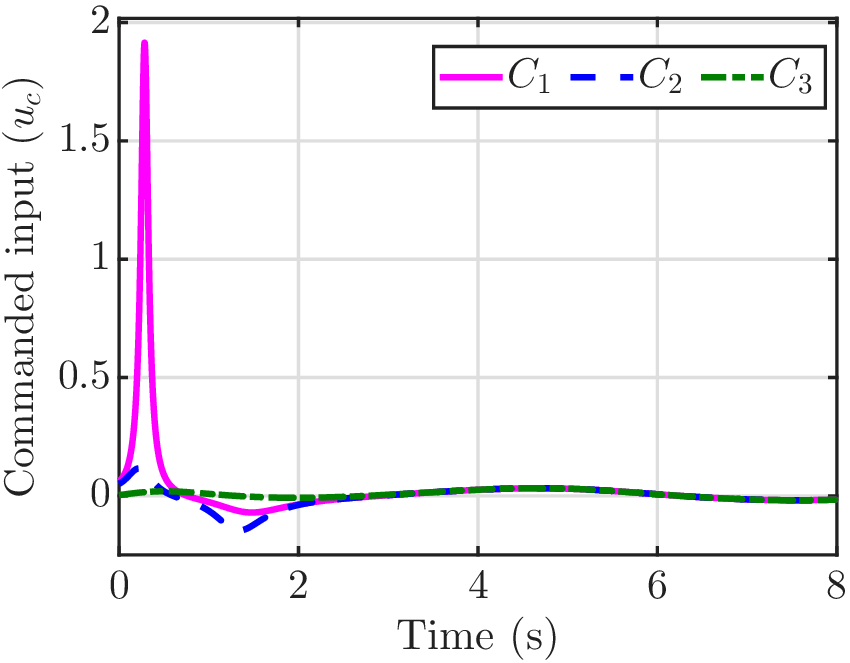}
		\caption{Commanded inputs.}
		\label{fig:Bound_Input_commanded_input}
	\end{subfigure}%
	\begin{subfigure}[t]{0.245\linewidth}
		\centering
		\includegraphics[width=\linewidth]{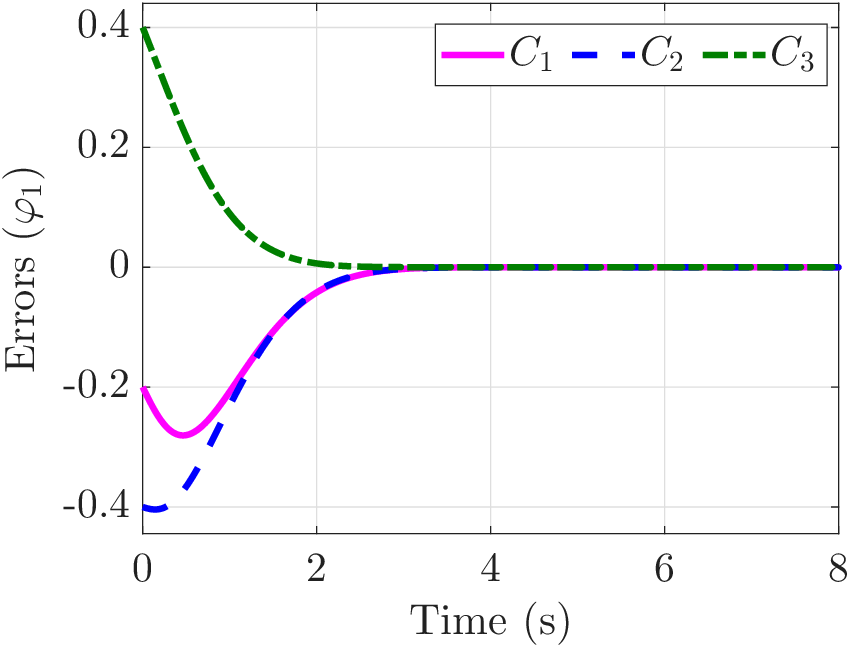}
		\caption{Tracking errors.}
		\label{fig:Bounded_Input_error}
	\end{subfigure}
	\caption{Output tracking with bounded input.}
	\label{fig:Bounded_Input}
\end{figure*}
\begin{figure*}[!ht]
	\centering
	\begin{subfigure}[t]{0.245\linewidth}
		\centering
		\includegraphics[width=\linewidth]{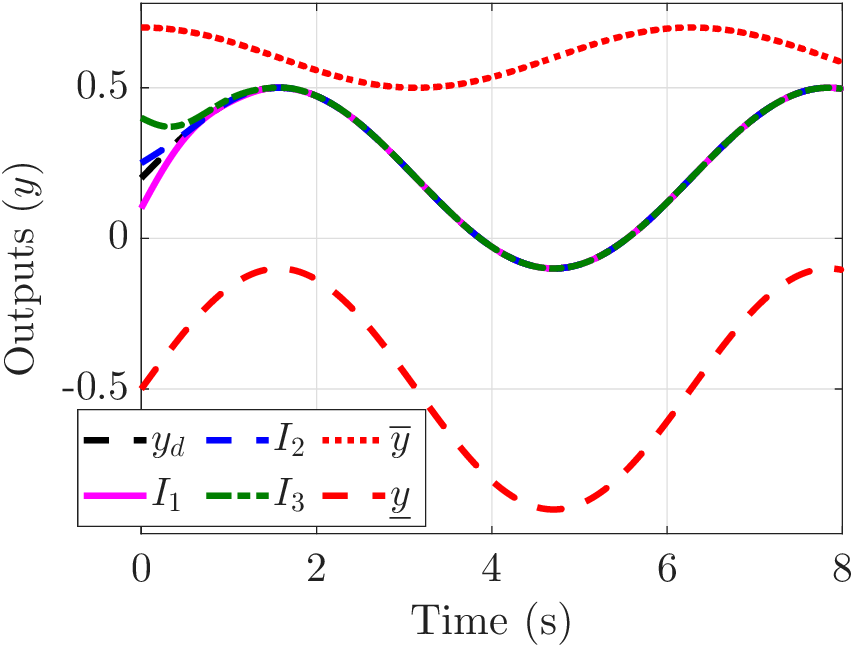}
		\caption{Output trajectories.}
		\label{fig:BI_Time_var_Output_output}
	\end{subfigure}%
	\begin{subfigure}[t]{0.245\linewidth}
		\centering
		\includegraphics[width=\linewidth]{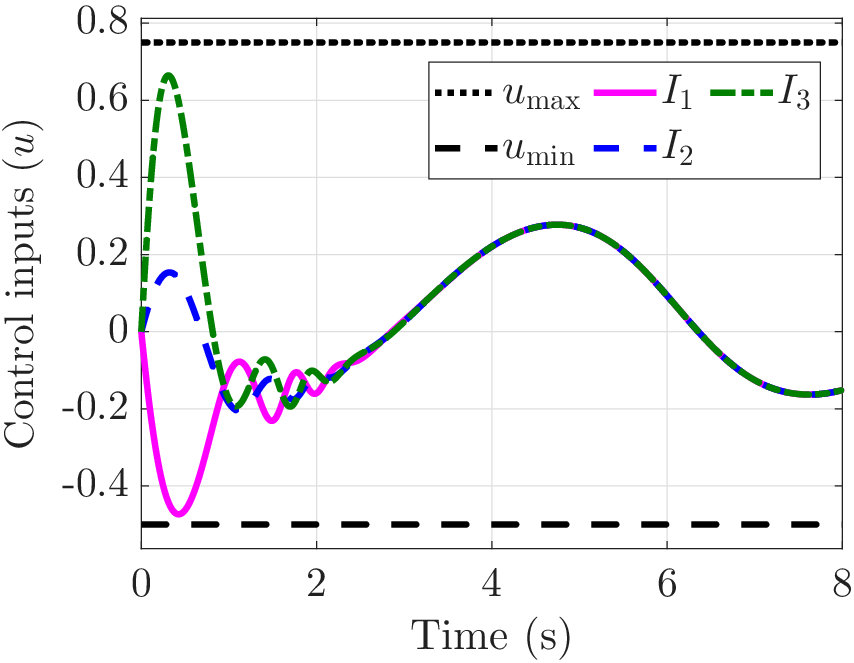}
		\caption{Control inputs.}
		\label{fig:BI_Time_var_Output_input}
	\end{subfigure}
	\begin{subfigure}[t]{0.245\linewidth}
		\centering
		\includegraphics[width=\linewidth]{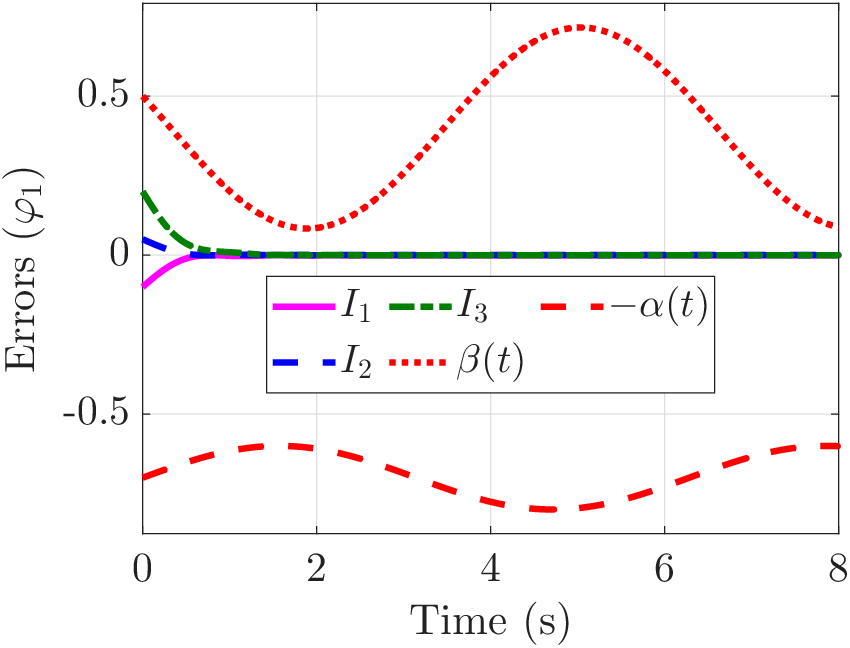}
		\caption{Tracking errors.}
		\label{fig:BI_Time_var_Output_error}
	\end{subfigure}%
	\begin{subfigure}[t]{0.245\linewidth}
		\centering
		\includegraphics[width=\linewidth]{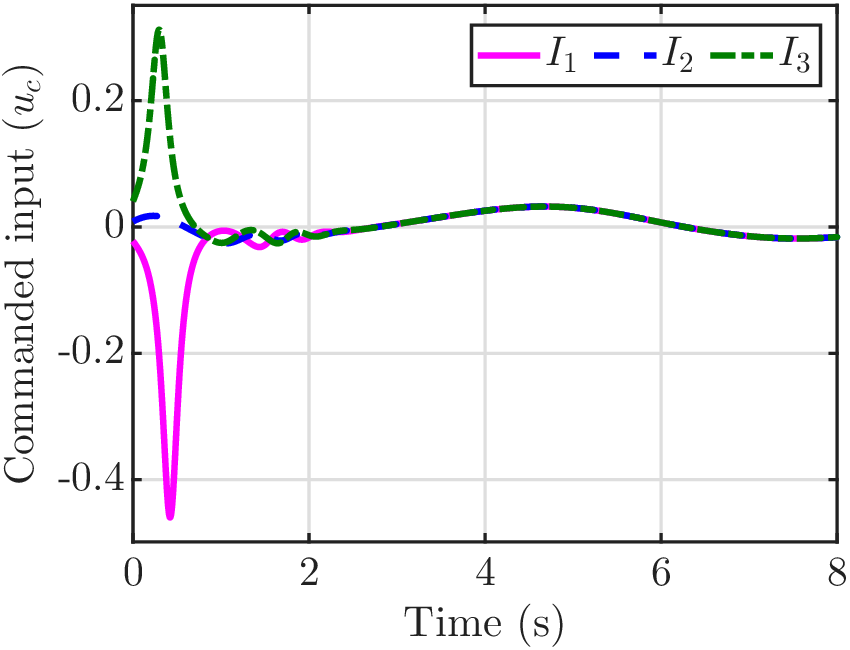}
		\caption{Commanded inputs.}
		\label{fig:BI_Time_var_Output_commanded_input}
	\end{subfigure}
	\caption{Constrained (time-varying) output tracking with bounded input.}
	\label{fig:BI_Time_var_Output}
\end{figure*}

We next examine whether actuator admissibility can be composed with a time-varying output-safe set without altering the APIR mechanism. The output boundaries are selected as $\overline{y}(t)=0.6+0.1\cos(t)$ and $\underline{y}(t)=-0.5+0.4\sin(t)$. The smooth logarithmic barrier coordinate maps the moving admissible interval to the unconstrained coordinate $z_{1}$. This provides a useful numerical interpretation of the design. As the output approaches either boundary, a fixed change in the physical tracking error produces an increasingly large change in $z_{1}$, and the recursive controller reacts before boundary contact occurs. Unlike a branch-selected barrier construction, the coordinate remains smooth when the tracking error changes sign.

The trajectories remain inside the moving output corridor while the APIR simultaneously preserves the asymmetric actuator interval. The smallest output margin occurs near the upper boundary, whereas the lower boundary remains comparatively inactive. This asymmetry is informative. The controller does not artificially center the trajectory inside the safe corridor, but uses the available output region according to the reference demand. Moreover, the APIR gain remains separated from zero throughout the same transient, showing that output-safety enforcement does not achieve its objective by forcing the actuator realization into a poorly conditioned boundary layer.

To separate the effect of boundary motion from the barrier mechanism itself, we next fix the output interval at $\underline{y}=-0.5$ and $\overline{y}=0.7$. No redesign is required. The same barrier-coordinate construction applies after setting the boundary derivatives to zero. Relative to the time-varying case, the constant corridor produces a larger minimum distance from both the output and actuator boundaries and a smaller command demand. This comparison illustrates that the additional control effort in the moving-boundary case is associated with tracking the geometry of the admissible set, rather than with a change in the underlying APC architecture.

\begin{figure*}[!ht]
	\centering
	\begin{subfigure}[t]{0.245\linewidth}
		\centering
		\includegraphics[width=\linewidth]{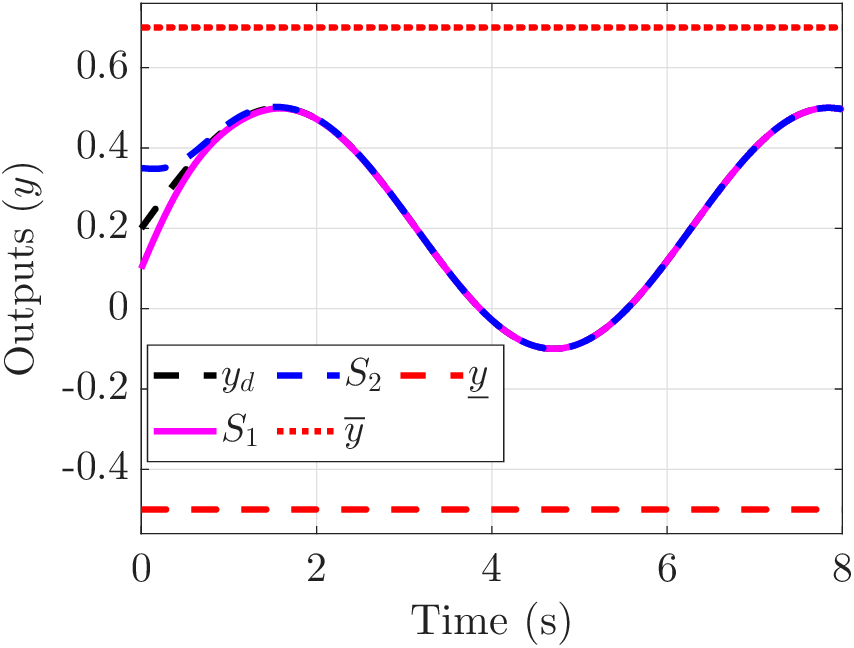}
		\caption{Output trajectories.}
		\label{fig:BI_Constant_Output_output}
	\end{subfigure}%
	\begin{subfigure}[t]{0.245\linewidth}
		\centering
		\includegraphics[width=\linewidth]{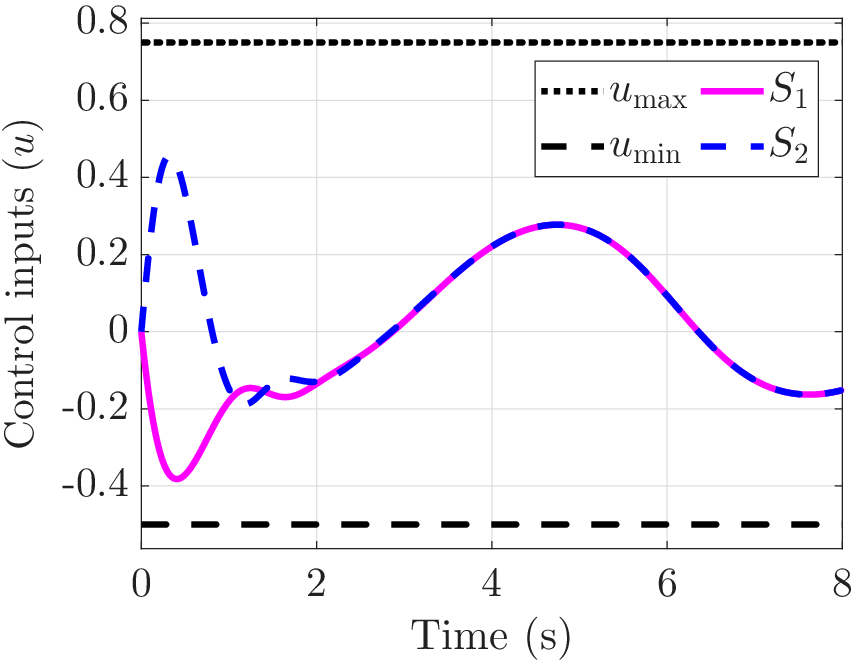}
		\caption{Control inputs.}
		\label{fig:BI_Constant_Output_input}
	\end{subfigure}
	\begin{subfigure}[t]{0.245\linewidth}
		\centering
		\includegraphics[width=\linewidth]{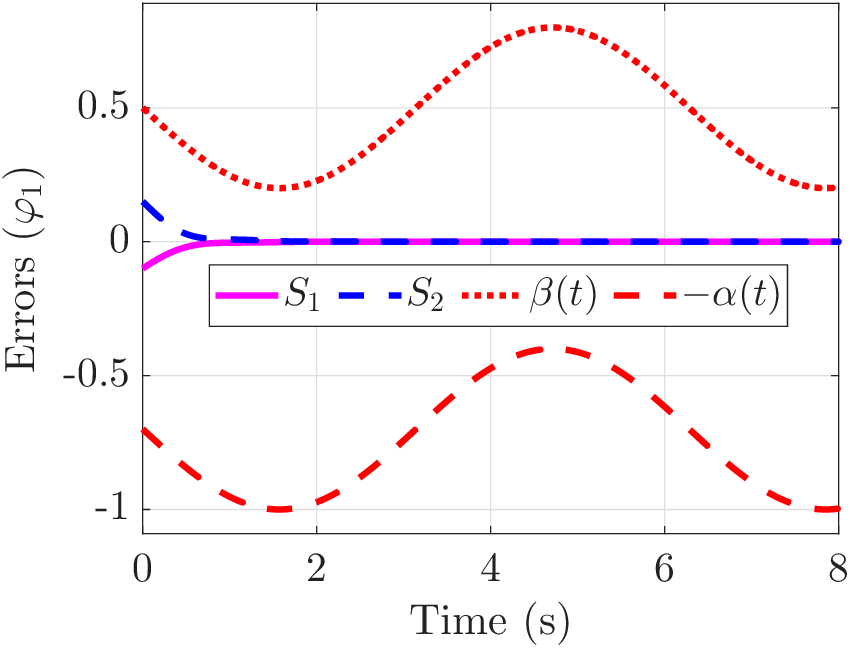}
		\caption{Tracking errors.}
		\label{fig:BI_Constant_Output_error}
	\end{subfigure}%
	\begin{subfigure}[t]{0.245\linewidth}
		\centering
		\includegraphics[width=\linewidth]{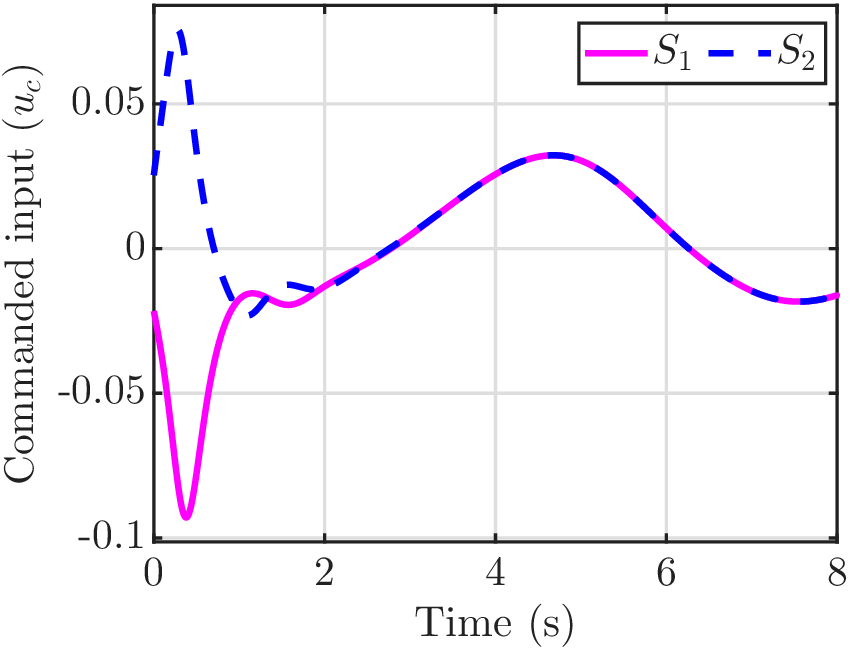}
		\caption{Commanded inputs.}
		\label{fig:BI_Constant_Output_commanded_input}
	\end{subfigure}
	\caption{Constrained (constant) output tracking with bounded input.}
	\label{fig:BI_Constant_Output}
\end{figure*}
\begin{figure*}[!ht]
	\centering
	\begin{subfigure}[t]{0.245\linewidth}
		\centering
		\includegraphics[width=\linewidth]{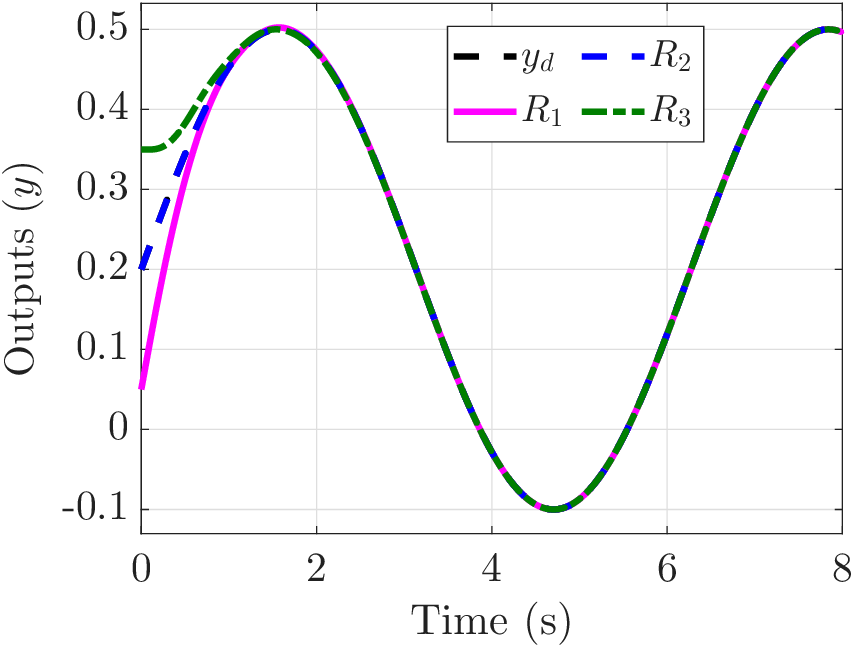}
		\caption{Output trajectories.}
		\label{fig:BI_rate_output}
	\end{subfigure}%
	\begin{subfigure}[t]{0.245\linewidth}
		\centering
		\includegraphics[width=\linewidth]{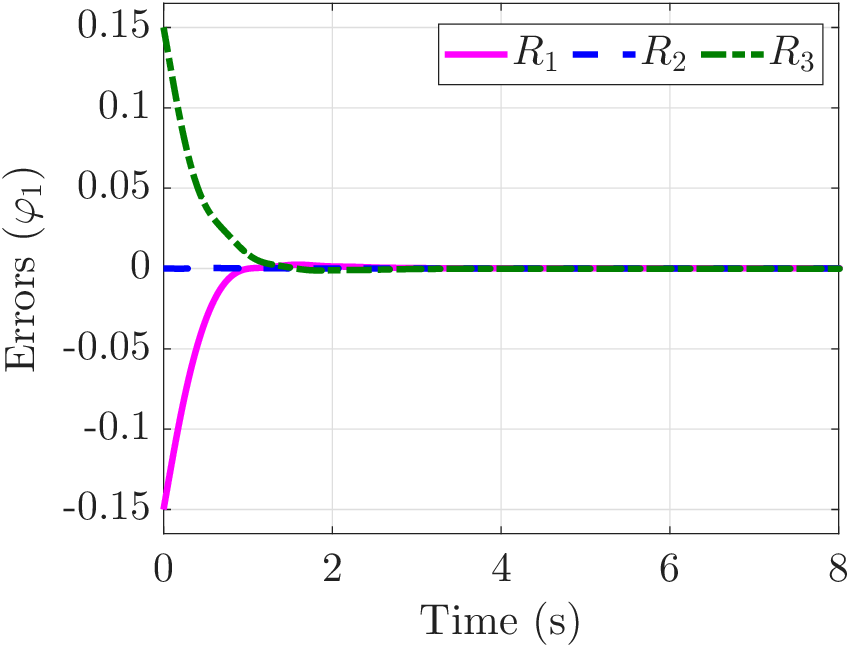}
		\caption{Tracking errors.}
		\label{fig:BI_rate_error}
	\end{subfigure}
	\begin{subfigure}[t]{0.245\linewidth}
		\centering
		\includegraphics[width=\linewidth]{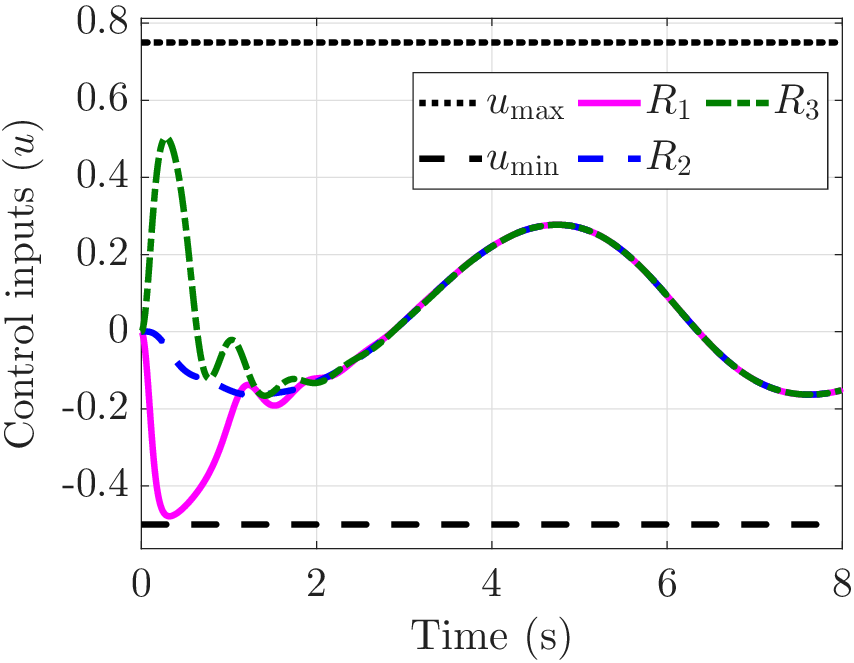}
		\caption{Control inputs.}
		\label{fig:BI_rate_input}
	\end{subfigure}%
	\begin{subfigure}[t]{0.245\linewidth}
		\centering
		\includegraphics[width=\linewidth]{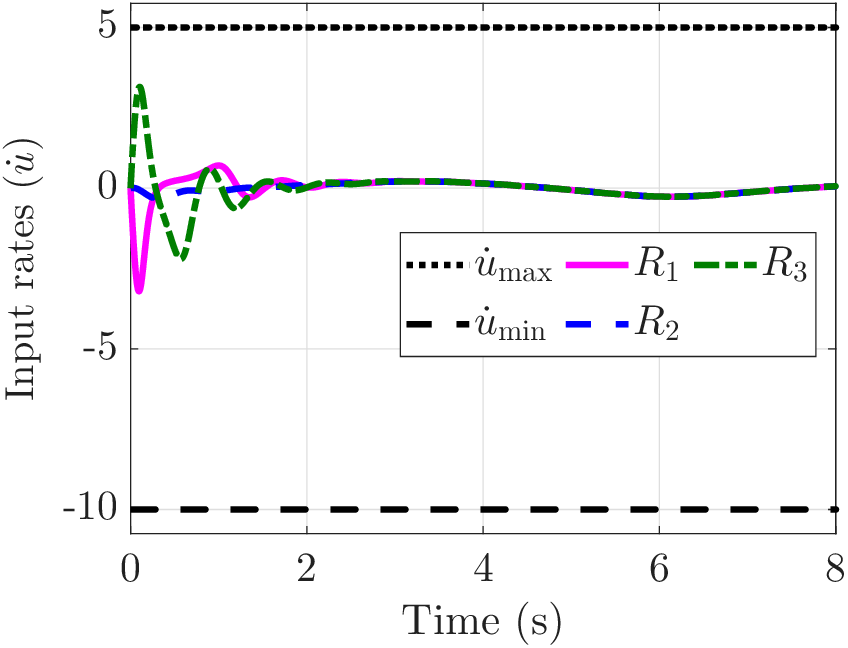}
		\caption{Control input rates.}
		\label{fig:BI_rate_rate}
	\end{subfigure}
	\caption{Output tracking with bounded input and rate.}
	\label{fig:BI_Rate}
\end{figure*}
\begin{figure*}[!ht]
	\centering
	\begin{subfigure}[t]{0.245\linewidth}
		\centering
		\includegraphics[width=\linewidth]{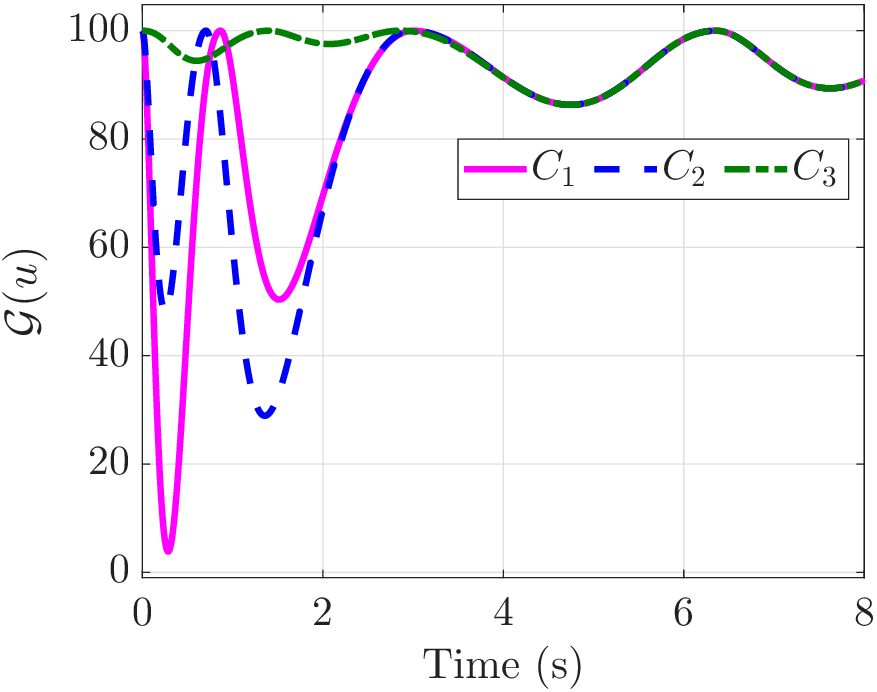}
		\caption{Magnitude-only APIR gain.}
		\label{fig:APC_gain_conditioning_sub1}
	\end{subfigure}%
	\begin{subfigure}[t]{0.245\linewidth}
		\centering
		\includegraphics[width=\linewidth]{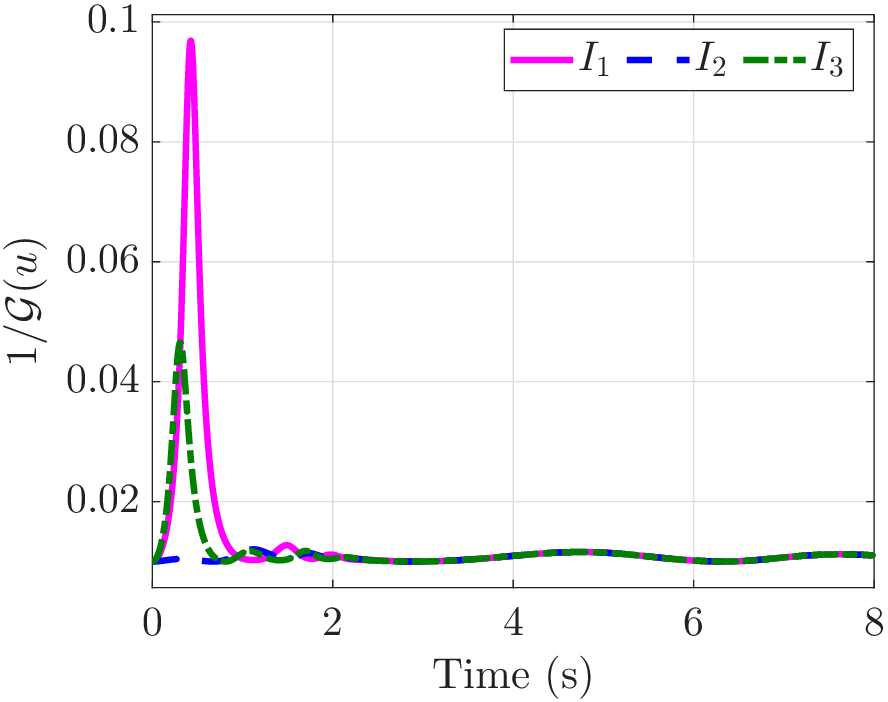}
		\caption{Output-safe command conditioning.}
		\label{fig:APC_gain_conditioning_sub2}
	\end{subfigure}
	\begin{subfigure}[t]{0.245\linewidth}
		\centering
		\includegraphics[width=\linewidth]{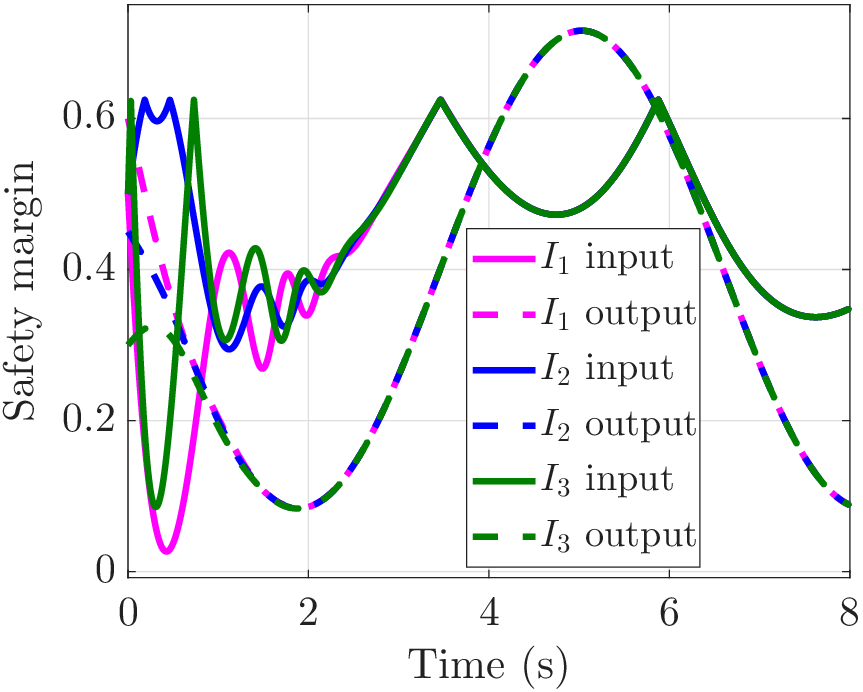}
		\caption{Trajectory-wise safety margins.}
		\label{fig:APC_gain_conditioning_sub3}
	\end{subfigure}%
	\begin{subfigure}[t]{0.245\linewidth}
		\centering
		\includegraphics[width=\linewidth]{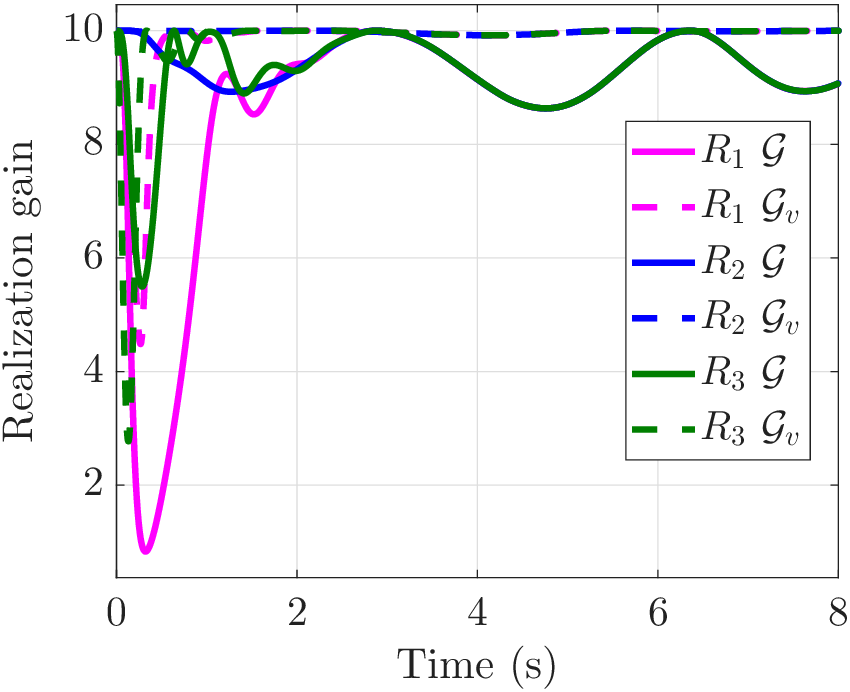}
		\caption{Cascaded-APIR gains.}
		\label{fig:APC_gain_conditioning_sub4}
	\end{subfigure}
	\caption{APC gain conditioning.}
	\label{fig:APC_gain_conditioning}
\end{figure*}
\begin{figure}[!ht]
	\centering
	\begin{subfigure}[t]{0.48\linewidth}
		\centering
		\includegraphics[width=\linewidth]{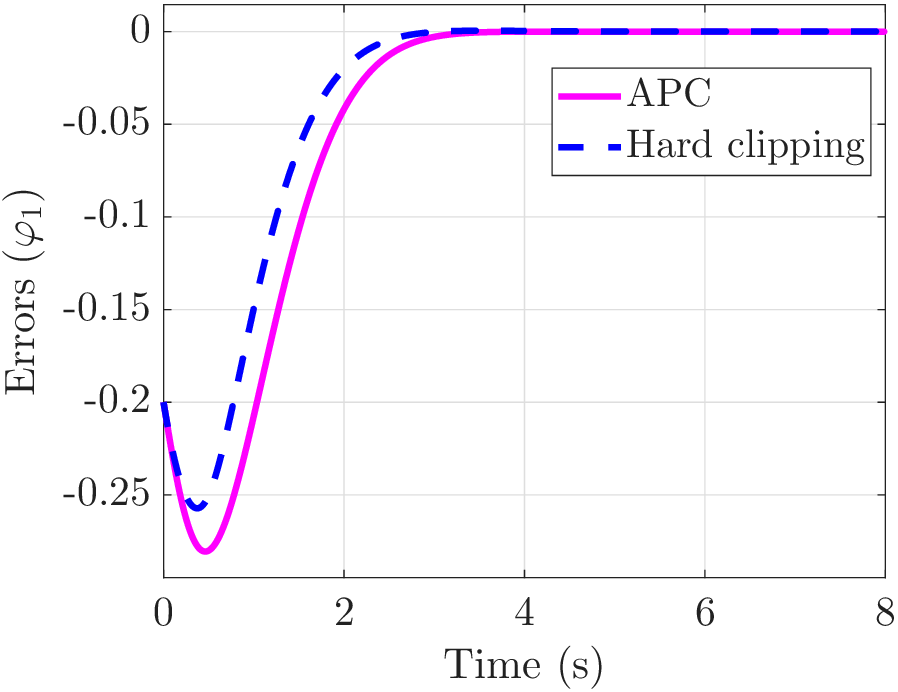}
		\caption{Tracking error comparison.}
		\label{fig:APC_baseline_comparison_sub1}
	\end{subfigure}%
	\begin{subfigure}[t]{0.48\linewidth}
		\centering
		\includegraphics[width=\linewidth]{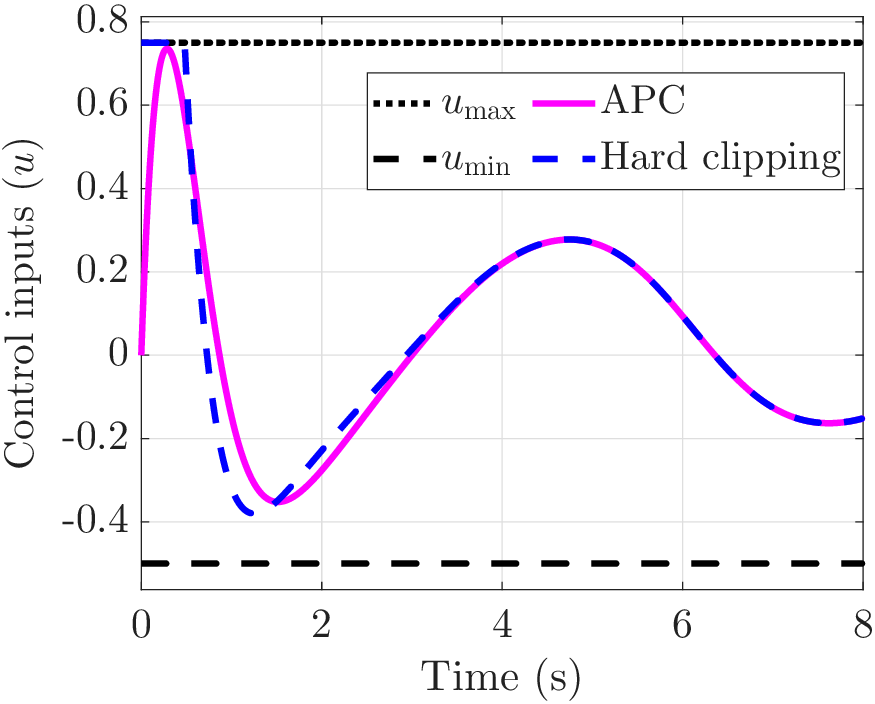}
		\caption{Control inputs.}
		\label{fig:APC_baseline_comparison_sub2}
	\end{subfigure}
	\caption{APC baseline comparison.}
	\label{fig:APC_baseline_comparison}
\end{figure}

We further evaluate the two-layer APIR architecture for cascaded enforcement of magnitude and rate limits. The magnitude-layer
parameters are selected as $p_{1}=10$, $p_{2}=0.1$, and $\gamma=2$, while the rate-layer parameters are $q_{1}=10$, $q_{2}=0.1$, and $\mu=2$. The physical rate limits are $\dot{u}_{\min}=-10$ and $\dot{u}_{\max}=5$. From the compatibility relation in \eqref{eq:rate_layer_bounds}, the corresponding internal-layer bounds are $v_{\min}=-0.925$ and $v_{\max}=0.45$, with $u(0)=v(0)=0$.

The cascade separates two safety functions. The outer realization confines the auxiliary state $v$, while the inner realization converts $v$ into an input trajectory whose magnitude and rate both remain admissible. The simulation shows that neither layer needs to operate as a hard clip. Instead, the internal state moves within its asymmetric interval and continuously modulates the rate of the physical input. The realized input, therefore, remains inside $\mathbb{U}$ while $\dot{u}$ stays inside its prescribed asymmetric rate set. Both realization gains remain positive, confirming that the extra layer preserves the nonsingularity required for recursive synthesis.
\begin{remark}
	The rate bounds are not merely checked after the magnitude-constrained input has been generated. They are encoded into the allowable range of the intermediate state through \eqref{eq:rate_layer_bounds}. Thus, the observed rate admissibility follows from the architecture of the cascade, not from a posteriori clipping of $\dot{u}$. This is the principal numerical distinction between the cascaded APIR and a conventional magnitude limiter followed by an independent slew-rate limiter.
\end{remark}
\begin{remark}
	Two trends are apparent from our results. First, the minimum APIR gains remain strictly positive even when the realized input approaches a boundary, which supports the continuation argument used in the compatibility analysis. Second, enforcing a moving output corridor or an additional rate constraint reduces the available safety margin, but does not collapse either realization gain. The simulations, therefore, exhibit the intended tradeoff: more restrictive physical requirements consume interior margin, whereas the APIR parameters determine how that margin is distributed between actuator utilization and command conditioning.
\end{remark}

Hard clipping may yield a smaller integral tracking error in some cases by placing the input directly on a constraint boundary. APC instead preserves a strict interior margin and supplies a differentiable actuator state for the recursive design. Hence, the comparison is not intended to establish uniform tracking superiority. It exposes the tradeoff between aggressive boundary utilization and realization-level admissibility. The conditioning plots should be read together with this comparison. A small input margin is acceptable only when the APIR gain remains sufficiently separated from zero, and the commanded-input map remains numerically well conditioned.

\section{Conclusions}\label{sec:conclusions}
The results establish that APC provides a realization-centered means of reconciling nonlinear tracking objectives with finite asymmetric actuator authority. The proposed APIR converts a bounded compatible command into a continuously differentiable plant input that remains strictly inside the prescribed actuator limits, while compact interiority maintains a positive realization gain and prevents singularity of the commanded-input map. When incorporated into recursive backstepping, this structure yields regional asymptotic tracking over APC-compatible sublevels without requiring an ISS assumption on the uncontrolled plant; feasibility is instead captured directly through the compatibility among the desired motion, available actuator authority, and APIR conditioning. The output-constrained extension further establishes that the same realization principle can be composed with a smooth asymmetric logarithmic barrier coordinate, allowing actuator admissibility and time-varying output safety to be enforced within a common closed-loop construction. Likewise, the cascaded APIR shows that admissibility can be propagated across magnitude and rate layers, with the realization states and layer-output rates remaining within their prescribed sets under the corresponding compatibility conditions. The results support APC as a realization-centered constrained-control framework in which APIR realizes admissibility, APC compatibility characterizes usable authority, and the outer synthesis governs closed-loop performance. The strict-feedback construction developed here is one instantiation of this broader realization-centered architecture.

\bibliographystyle{ieeetr}
\bibliography{ref_input_sat.bib}

\begin{thebibliography}{10}

\bibitem{203432}
E.~Sontag and H.~Sussmann, ``Nonlinear output feedback design for linear
  systems with saturating controls,'' in {\em 29th IEEE Conference on Decision
  and Control}, pp.~3414--3416 vol.6, 1990.

\bibitem{doi:10.1080/00207176908905846}
A.~T. Fuller, ``In-the-large stability of relay and saturating control systems
  with linear controllers,'' {\em International Journal of Control}, vol.~10,
  no.~4, pp.~457--480, 1969.

\bibitem{261255}
H.~Sussmann and Y.~Yang, ``On the stabilizability of multiple integrators by
  means of bounded feedback controls,'' in {\em Proceedings of the 30th IEEE
  Conference on Decision and Control}, vol.~1, pp.~70--72, 1991.

\bibitem{doi:10.1016/0167-6911(93)90033-3}
Z.~Lin and A.~Saberi, ``Semi-global exponential stabilization of linear systems
  subject to “input saturation” via linear feedbacks,'' {\em Systems \&
  Control Letters}, vol.~21, no.~3, pp.~225--239, 1993.

\bibitem{doi:10.1002/rnc.4590050503}
Z.~Lin and A.~Saberi, ``A semi-global low-and-high gain design technique for
  linear systems with input saturation—stabilization and disturbance
  rejection,'' {\em International Journal of Robust and Nonlinear Control},
  vol.~5, no.~5, pp.~381--398, 1995.

\bibitem{486638}
A.~Saberi, Z.~Lin, and A.~Teel, ``Control of linear systems with saturating
  actuators,'' {\em IEEE Transactions on Automatic Control}, vol.~41, no.~3,
  pp.~368--378, 1996.

\bibitem{4610041}
B.~Zhou, G.~Duan, and Z.~Lin, ``A parametric lyapunov equation approach to the
  design of low gain feedback,'' {\em IEEE Transactions on Automatic Control},
  vol.~53, no.~6, pp.~1548--1554, 2008.

\bibitem{doi:10.1016/j.jfranklin.2020.08.025}
QianWang, Z.~Zhang, K.~Zhang, and Q.~Lin, ``Time-varying controller design for
  input saturated systems,'' {\em Journal of the Franklin Institute}, vol.~357,
  no.~15, pp.~10453--10471, 2020.

\bibitem{doi:10.1016/0167-6911(92)90001-9}
A.~R. Teel, ``Global stabilization and restricted tracking for multiple
  integrators with bounded controls,'' {\em Systems \& control letters},
  vol.~18, no.~3, pp.~165--171, 1992.

\bibitem{362853}
H.~Sussmann, E.~Sontag, and Y.~Yang, ``A general result on the stabilization of
  linear systems using bounded controls,'' {\em IEEE Transactions on Automatic
  Control}, vol.~39, no.~12, pp.~2411--2425, 1994.

\bibitem{doi:10.1016/j.automatica.2005.02.009}
Y.-S. Zhong, ``Globally stable adaptive system design for minimum phase siso
  plants with input saturation,'' {\em Automatica}, vol.~41, no.~9,
  pp.~1539--1547, 2005.

\bibitem{728887}
F.~Chaoui, F.~Giri, L.~Dugard, J.~Dion, and M.~M'saad, ``Adaptive tracking with
  saturating input and controller integral action,'' {\em IEEE Transactions on
  Automatic Control}, vol.~43, no.~11, pp.~1638--1643, 1998.

\bibitem{doi:10.1016/S0005-1098(00)00154-0}
F.~Chaoui, F.~Giri, and M.~M'Saad, ``Adaptive control of input-constrained
  type-1 plants stabilization and tracking,'' {\em Automatica}, vol.~37, no.~2,
  pp.~197--203, 2001.

\bibitem{doi:10.1016/0005-1098(94)90202-X}
G.~Feng, C.~Zhang, and M.~Palaniswami, ``Stability of input amplitude
  constrained adaptive pole placement control systems,'' {\em Automatica},
  vol.~30, no.~6, pp.~1065--1070, 1994.

\bibitem{333787}
S.~Karason and A.~Annaswamy, ``Adaptive control in the presence of input
  constraints,'' {\em IEEE Transactions on Automatic Control}, vol.~39, no.~11,
  pp.~2325--2330, 1994.

\bibitem{doi:10.1016/0005-1098(95)00059-6}
A.~M. Annaswamy and S.~Karason, ``Discrete-time adaptive control in the
  presence of input constraints,'' {\em Automatica}, vol.~31, no.~10,
  pp.~1421--1431, 1995.

\bibitem{5723705}
C.~Wen, J.~Zhou, Z.~Liu, and H.~Su, ``Robust adaptive control of uncertain
  nonlinear systems in the presence of input saturation and external
  disturbance,'' {\em IEEE Transactions on Automatic Control}, vol.~56, no.~7,
  pp.~1672--1678, 2011.

\bibitem{6975243}
M.~Chen, G.~Tao, and B.~Jiang, ``Dynamic surface control using neural networks
  for a class of uncertain nonlinear systems with input saturation,'' {\em IEEE
  Transactions on Neural Networks and Learning Systems}, vol.~26, no.~9,
  pp.~2086--2097, 2015.

\bibitem{doi:10.1080/00207721.2021.1989726}
H.~F. Qian~Wang, Yuetao~Yang and Y.~Wan, ``Input saturation: academic insights
  and future trends,'' {\em International Journal of Systems Science}, vol.~53,
  no.~6, pp.~1138--1152, 2022.

\bibitem{6875955}
J.~Ma, S.~S. Ge, Z.~Zheng, and D.~Hu, ``Adaptive nn control of a class of
  nonlinear systems with asymmetric saturation actuators,'' {\em IEEE
  Transactions on Neural Networks and Learning Systems}, vol.~26, no.~7,
  pp.~1532--1538, 2015.

\bibitem{7428909}
Y.-F. Gao, X.-M. Sun, C.~Wen, and W.~Wang, ``Observer-based adaptive nn control
  for a class of uncertain nonlinear systems with nonsymmetric input
  saturation,'' {\em IEEE Transactions on Neural Networks and Learning
  Systems}, vol.~28, no.~7, pp.~1520--1530, 2017.

\bibitem{6994268}
K.~Esfandiari, F.~Abdollahi, and H.~A. Talebi, ``Adaptive control of uncertain
  nonaffine nonlinear systems with input saturation using neural networks,''
  {\em IEEE Transactions on Neural Networks and Learning Systems}, vol.~26,
  no.~10, pp.~2311--2322, 2015.

\bibitem{doi:10.1016/j.automatica.2008.11.017}
K.~P. Tee, S.~S. Ge, and E.~H. Tay, ``Barrier lyapunov functions for the
  control of output-constrained nonlinear systems,'' {\em Automatica}, vol.~45,
  no.~4, pp.~918--927, 2009.

\bibitem{4639441}
C.~P. Bechlioulis and G.~A. Rovithakis, ``Robust adaptive control of feedback
  linearizable mimo nonlinear systems with prescribed performance,'' {\em IEEE
  Transactions on Automatic Control}, vol.~53, no.~9, pp.~2090--2099, 2008.

\bibitem{doi:10.1051/cocv:2002064}
A.~Ilchmann, E.~P. Ryan, and C.~J. Sangwin, ``Tracking with prescribed
  transient behaviour,'' {\em ESAIM: Control, Optimisation and Calculus of
  Variations}, vol.~7, p.~471–493, 2002.

\bibitem{berger2021funnel}
T.~Berger, A.~Ilchmann, and E.~P. Ryan, ``Funnel control of nonlinear
  systems,'' {\em Mathematics of Control, Signals, and Systems}, vol.~33,
  no.~1, pp.~151--194, 2021.

\bibitem{10273607}
F.~Fotiadis and G.~A. Rovithakis, ``Input-constrained prescribed performance
  control for high-order mimo uncertain nonlinear systems via reference
  modification,'' {\em IEEE Transactions on Automatic Control}, vol.~69, no.~5,
  pp.~3301--3308, 2024.

\bibitem{10004950}
P.~K. Mishra and P.~Jagtap, ``Approximation-free prescribed performance control
  with prescribed input constraints,'' {\em IEEE Control Systems Letters},
  vol.~7, pp.~1261--1266, 2023.

\bibitem{https://doi.org/10.1002/rnc.4466}
Y.~Wang, J.~Hu, J.~Li, and B.~Liu, ``Improved prescribed performance control
  for nonaffine pure-feedback systems with input saturation,'' {\em
  International Journal of Robust and Nonlinear Control}, vol.~29, no.~6,
  pp.~1769--1788, 2019.

\bibitem{10.1016/j.sysconle.2004.05.014}
A.~Ilchmann and S.~Trenn, ``Input constrained funnel control with applications
  to chemical reactor models,'' {\em Systems \& Control Letters}, vol.~53,
  no.~5, pp.~361--375, 2004.

\bibitem{10387582}
T.~Berger, ``Input-constrained funnel control of nonlinear systems,'' {\em IEEE
  Transactions on Automatic Control}, vol.~69, no.~8, pp.~5368--5382, 2024.

\bibitem{10.1016/j.arcontrol.2025.101024}
T.~Berger, A.~Ilchmann, and E.~P. Ryan, ``Funnel control — a survey,'' {\em
  Annual Reviews in Control}, vol.~60, p.~101024, 2025.

\bibitem{doi:10.2514/1.G007770}
S.~Singh, S.~R. Kumar, and D.~Mukherjee, ``Time-constrained interception with
  bounded field of view and input using barrier lyapunov approach,'' {\em
  Journal of Guidance, Control, and Dynamics}, vol.~47, no.~2, pp.~384--393,
  2024.

\bibitem{doi:10.2514/1.G009445}
S.~Kumar, S.~R. Kumar, and A.~Sinha, ``Three-dimensional trajectory tracking
  for unmanned aerial vehicles under motion constraints,'' {\em Journal of
  Guidance, Control, and Dynamics}, vol.~49, no.~1, pp.~78--96, 2026.

\end{thebibliography}
\end{document}